\documentclass[journal,twoside,web]{IEEEtran}
\usepackage{generic}
 \usepackage{cite}
 \usepackage{amsmath,amssymb,amsfonts}
\usepackage{algorithmic}
\usepackage{graphicx}
\usepackage{textcomp}
\usepackage{amsmath} 
\usepackage{amsfonts}
\usepackage{algorithm}
\usepackage{algorithmic}
\newtheorem{assumption}{Assumption}
\newtheorem{theorem}{Theorem}
\newtheorem{lemma}{Lemma}
\newtheorem{proposition}{Proposition}
 \newtheorem{corollary}{Corollary}
 \newtheorem{definition}{Definition}

\def\BibTeX{{\rm B\kern-.05em{\sc i\kern-.025em b}\kern-.08em
    T\kern-.1667em\lower.7ex\hbox{E}\kern-.125emX}}
\markboth{\journalname, VOL. XX, NO. XX, XXXX 2017}
{Jafar Abbaszadeh Chekan and Cedric Langbort: Fast and Safe Switching with Guaranteed Regret: LQR Setting with Unknown Dynamics}

\begin{document}
\title{Regret-Guaranteed Safe Switching with Minimum Cost: LQR Setting with Unknown Dynamics}
\author{Jafar Abbaszadeh Chekan and C\'edric Langbort
\thanks{This paragraph of the first footnote will contain the date on 
which you submitted your paper for review. It will also contain support 
information, including sponsor and financial support acknowledgment. For 
example, ``This work was supported in part by the U.S. Department of 
Commerce under Grant BS123456.'' }
\thanks{J.~A.~Chekan and C.~Langbort (emails: jafar2 \& langbort@illinois.edu) are with the Coordinated
  Science Laboratory and the Department of Aerospace
  Engineering at the University of Illinois at Urbana-Champaign (UIUC).}
}

\maketitle

\begin{abstract}
Externally Forced Switched (EFS) systems represent a subset of switched systems where switches occur deliberately to meet an external requirement. However, fast switching can lead to instability, even when all closed-loop modes are stable. In this study, our focus is on an EFS scenario with \textit{unknown system dynamics}, where the next mode to switch to is revealed by an external entity in real-time as the switch occurs. The challenge is to track the revealed sequence while (1) minimizing accumulated cost in a regretful sense and (2) ensuring that the norm of the system's state does not grow excessively-a property we refer to as 'the safety of switching.' Achieving the latter involves requiring the closed-loop system to remain in each revealed mode for some minimum dwell time, which must be learned online. We propose an algorithm based on the principles of Optimism in the Face of Uncertainty. This algorithm jointly establishes confidence sets for unknown parameters, devises a feedback policy, and estimates a minimum dwell time for each revealed mode from data. By precisely estimating dwell-time error, our strategy yields an expected regret of $\mathcal{O}(|M| \sqrt{ns})$, where $ns$ and $|M|$ denote the total switches and mode count, respectively. We benchmark this approach against scenarios with known parameters.

\end{abstract}

\begin{IEEEkeywords}
Online Learning, Switched Systems, Dwell-Time, Regret 
\end{IEEEkeywords}

\section{Introduction}
\label{Introduction}
\IEEEPARstart{G}{reat} strides have been made over the past few years in the application of learning-based and data-driven control techniques to both linear and nonlinear systems whose parameters are unknown \cite{boffi2021regret, kakade2020information, chen2021black, dean2020sample, cohen2018online, mania2019certainty, cohen2019learning}. One class of plants for which progress in this direction is currently not as mature, however, is switched systems (i.e., systems consisting of a number of continuous-time modes the transition between which is governed by some discrete protocol\cite{zhu2015optimal, zhao2008stability, lin2009stability, liberzon1999stability}) due to the complexities brought about in the learning process by the combination of both continuous and discrete aspects.

In this work, we are primarily interested in developing such learning-based tools for unknown switched systems in the so-called externally forced switching (EFS) scenarios, in which either the time of the switch, the sequence of modes, or both are dictated to the system by an external entity \cite{zhu2015optimal}. In particular, we focus on a setting where only the next mode is dictated, and the decision regarding when to switch to that mode is made by the system itself. More precisely, the next mode to switch to is revealed in an online fashion, as the switch actually occurs, and the problem is to follow the revealed sequence while (1) minimizing accumulated cost in a regret sense and, (2) ensuring that the norm of system's state does not grow inordinately -- a property we refer to as `the safety of switching'. This safety concern removes the option of quick switching (which would appear rational if the goal were solely cost minimization) since rapid switching can result in state explosion, even in situations where the individual closed-loop modes are stable \cite{liberzon2003switching, hespanha1999stability, zhao2011stability}.  A typical approach to avoid such state explosion is to ensure that the system remains in each mode for a minimal specified duration, referred to as the minimum dwell time, whose computation usually requires system model \cite{lin2009stability, liberzon1999stability}. The central difficulty, then, in the absence of an a priori knowledge of the modes' parameters or even of the full sequence of switches (which would allow for computation of an average dwell-time), is to compute minimum dwell times for all modes dynamically and directly from data. Our goal in this paper is to design a strategy that involves specifying the switch sequence and feedback gain design in a scenario where the system's dynamics is unknown, and there is only noisy access to the state of the sub-systems (modes). We refer to this setting as "unknown modes-unknown sequence" ($\bar{M}\bar{S}$). To address this problem we first analyze the problem in two distinct scenarios: one where both the mode and sequence are known ($MS$), and another where the mode is known but the sequence is concealed ($M\bar{S}$). The $M\bar{S}$ scenario, which employs the same information disclosure mechanism as our proposed problem, serves as a reasonable benchmark for comparing the designed algorithm for the $\bar{M}\bar{S}$ setting. We demonstrate that achieving optimal solutions for the $M\bar{S}$ scenario requires more than just knowledge of the next mode. This leads us to propose a practical approach for managing the $M\bar{S}$ setting that involves implementing the policy obtained through solving Discrete Algebraic Riccati Equation (DARE) and crafting a corresponding mode-dependant dwell time strategy. Afterward, we apply an Optimism in the Face of Uncertainty (OFU) based algorithm to tackle the challenge in the $\bar{M}\bar{S}$ setting. Our proposed strategy aims to design minimum dwell time and feedback policy fully online, relying solely on state measurements, to enable switching according to the given sequence with a guaranteed regret comparing to a reference strategy for $M\bar{S}$ setting. Our main contributions are as follows 
\begin{itemize}
    \item We propose a computationally efficient algorithm to estimate dwell time in an unknown dynamic setting.
    \item We prove that the estimated dwell time is close to the true but unknown dwell time by providing an upper bound for the estimation error.
    \item  We prove that our proposed algorithm has regret of $\mathcal{O}(|\mathcal{M}|\sqrt{ns})$ compared to the case when the parameters of all modes are known. 
\end{itemize}

At a high level, our work differs from the existing literature in terms of setup (i.e., (1) what is (not) known a priori about the modes and switching sequence, and (2) whether the switching signal is itself actionable), closed-loop goals (mere stability vs. performance guarantees) and methods used (direct vs. no identification of unknown parameters). For example, \cite{kenanian2019data} use a set of random observations from different trajectories of the switched system to perform probabilistic stability analysis even in the absence of knowledge of the modes and switching rule.  The authors of \cite{rotulo2022online} propose a data-driven control approach for an unknown discrete-time linear system that switches between a finite number of modes, with the switching signal being fully unknown. They design control rules, which can automatically adapt to changes in the actuation mode, based solely on data and without explicit identification. Likewise, in \cite{dai2018moments} and \cite{dai2022convex}, stabilizing controllers are designed for linear switched systems without explicit plant identification. These controllers rely solely on experimental data obtained from an experiment that thoroughly excites all the modes. The designed switched state feedback controller is robust in the sense that it guarantees stability for any switching sequence. In the work presented by \cite{eising2023data}, the authors put forth an online switching controller that accomplishes both mode detection and stabilization. This is achieved solely by utilizing a finite set of trajectories for each mode, which are provided offline.

In contrast with these works, which consider solely stabilization as their goal, Du et al., in  \cite{du2022data}, consider an unknown Markov jump system within a Linear Quadratic (LQ) framework and propose an algorithm that jointly learns the unknown parameters of the modes and the Markov transition matrix that governs the evolution of the mode switches, while achieving guaranteed regret.  The authors of \cite{li2023online} also consider regret minimization as part of their performance objective but, in a manner most closely related to our setup, also allow for the switching time between modes to be part of the control design and to be determined online. Their algorithm employs a multi-armed bandit like analysis and, by relying on cost measurements, switches online between policies drawn from a finite candidate pool in an effort to identify the best controller. This switching adheres to a minimum dwell-time policy of the kind mentioned earlier, but uses a conservative bound for this dwell-time, which impacts the performance of the closed-loop system. This is also the case in our own previous work \cite{chekan2022learn}, where we introduced a projection-equipped algorithm for unknown switching over-actuated systems.

At the technical level, our approach builds upon existing model-based RL algorithms, which come with guaranteed regret and use system identification as a core phase.  We use the principle of Optimism in the Face of Uncertainty (OFU), as originally introduced in \cite {campi1998adaptive} and later strengthened in various ways in \cite{abbasi2011regret,ibrahimi2012efficient,lale2022reinforcement, chekan2021joint, cohen2019learning}, to obtain sublinear regret and/or accelerate the stabilization of the closed-loop system.  More precisely, we exploit the relaxed-semidefinite programming (SDP) formulation of \cite{pmlr-v97-cohen19b} in our new, switched-system, context.

\textbf{Paper Structure.} The remainder of the paper is organized as follows: Section \ref{sec:probStat} presents the problem statement. In Section \ref{eq:knownsetting}, we discuss the solution when all system parameters are known. Section \ref{sec:prelim} provides a brief review of constructing confidence sets (system identification) and the relaxation of the standard LQR-SDP formulation for control and dwell-time design. The proposed algorithm and its main steps are presented in Section \ref{sec:solution}. Section \ref{sec:guarantee} summarizes the regret bound guarantees and stability analysis, including dwell-time design. Detailed analysis and proofs are provided in the Appendix.

\textbf{Notations.}
$\|M\|_*=tr(\sqrt{M^\top M})$ is Frobenius norm of matrix $M$ and $\underline{\lambda}(M)$ and $\overline{\lambda}(M)$ are its minimum and maximum eigenvalues. For a set $\mathcal{S}$, $|\mathcal{S}|$ denotes its cardinality.

\section{Problem Statement and Formulation} \label{sec:probStat}

Consider a time-invariant switched LQR system
\begin{align}
x_{t+1} &=A^{\sigma(t)}_{*}x_{t} + B^{\sigma(t)}_{*}u_{t}+\omega_{t+1}\label{eq:dyn_atttt} \\
c^{\sigma(t)}_{t} &=x_{t}^\top Q^{\sigma(t)}x_{t} + u_{t}^\top R^{\sigma(t)}u_{t}\label{eq:obs}
\end{align}
where $\mathcal{M}$ is an index set and $\sigma:\{0\}\cup \mathbb{N}\rightarrow \mathcal{M}$ is right-continuous piecewise constant switching signal that specifies at any time $t$ which mode is active. $A^{i}_{*}$, $B^{i}_{*}$, $\forall i\in \mathcal{M}$ are matrices of each individual modes which are initially unknown to the learner and $Q^{i}$, $R^{i}$ are known cost matrices.

We have the following assumption for the process noise which is standard in controls community (see \cite{abbasi2011regret, cohen2019learning, lale2022reinforcement}).

\begin{assumption}
\label{Assumption 1}
There exists a filtration $\mathcal{F}_{t}$ such that

$(1.1)$ $\omega_{t+1}$ is a martingale difference, i.e., $\mathbb{E}[\omega_{t+1}|\;\mathcal{F}_{t}]=0$

$(1.2)$ $\mathbb{E}[\omega_{t+1}\omega_{t+1}^\top|\;\mathcal{F}_{t}]=\bar{\sigma}_{\omega}^2I_{n}=:W$ for some $\bar{\sigma}_{\omega}^2>0$;

$(1.3)$ $\omega_{t}$ are component-wise sub-Gaussian, i.e., there exists $\sigma_{\omega}>0$ such that for any $\gamma \in \mathbb{R}$ and $j=1,2,...,n$
\begin{align*}
\mathbb{E}[e^{\gamma\omega_{j}(t+1)}|\;\mathcal{F}_{t}]\leq e^{\gamma^2\sigma_{\omega}^2/2}.
\end{align*}
\end{assumption}

Let $\mathcal{I} = \{i_0, i_1, \ldots, ..., i_n\}$ represent the sequence of modes in which the system switches. This sequence is not known in advance and is revealed gradually, meaning that only the next mode to switch to, denoted as $i_{k+1} \in \mathcal{M}$, is disclosed after the current switch, $i_k \in \mathcal{M}$, has occurred. We refer to the time interval between two subsequent switches as an 'epoch.' The termination of the sequence is uncertain and is randomly announced after the "last" mode $i_n$ is revealed. The objective is to develop an algorithm that guarantees actuation according to the set $\mathcal{I}$ with the minimum cost, conditioned on maintaining control over the expected growth of the state norm, as defined below.

\begin{definition} \label{def:underControl}
Let $\mathfrak{T}_{i_{k+1}}^+$ and $\mathfrak{T}_{i_{k}}^+$ represent time sequences immediately following two subsequent switches between any two arbitrary modes $i_{k}$ and $i_{k+1}$ both in $\mathcal{I}$. We define the expected state norm growth as being $(\bar{\alpha}, \bar{\beta})$-under control, where $0 < \bar{\alpha} < 1$ and $\bar{\beta} > 0$, if the following condition holds:

\begin{align}
&\mathbb{E}[x_{\mathfrak{T}_{i_{k+1}}^+}^\top  x_{\mathfrak{T}_{i_{k+1}}^+}| \mathcal{F}_{\mathfrak{T}_{i_{k+1}}-1}] \leq \bar{\alpha} \mathbb{E}[x_{\mathfrak{T}_{i_k}^+}^\top x_{\mathfrak{T}_{i_k}^+}| \mathcal{F}_{\mathfrak{T}_{i_k}-1}] + \bar{\beta}\sigma_{\omega}^2 \label{eq:stategrowthpp}
\end{align}
\end{definition}

In Definition \ref{def:underControl}, it is important to note that $\bar{\alpha}$ is a user-defined parameter, whereas $\bar{\beta}$ is contingent upon the specific policy class for control design and cost function associated with each mode, which will be introduced later. 

To minimize accumulated cost, switching fast seems rational. However, even when the mode characteristics are known it is well known that fast switching between modes can result in increases in a system's state's norm, which jeopardize global stability even if each mode is itself stable \cite{zhu2015optimal, zhao2008stability, lin2009stability, liberzon1999stability}. Accordingly, switching fast can violate state norm growth control in the sense of Definition \ref{def:underControl}. The issue can be alleviated by remaining in each mode for a computable minimum duration. Hence, the key challenge in addressing this problem boils down to the simultaneous design of policies to be applied during different epochs and determining the appropriate timing for switches.

To lay the foundation for developing an algorithm to tackle the proposed problem, known as Unknown-Modes and Unknown-Sequence ($\bar{M}\bar{S}$), it is crucial to thoroughly examine the problem within the settings of (1) Known-Modes and Known-Sequence (${M}{S}$) and (2) Known-Modes and Unknown-Sequence ($M\bar{S}$). This analysis not only enhances our understanding of the problem but also establishes a baseline for defining regret, with its upper-bound serving as a metric to evaluate our proposed algorithm.

Expanding upon the solution for the $M\bar{S}$ setup presented in Section \ref{eq:knownsetting}, we propose an algorithm in Section \ref{sec:solution} to address the $\bar{M}\bar{S}$ setting. Our algorithm relies on high probability estimates of parameters, which are obtained through the system identification procedure outlined in Section \ref{sec:ident}.

By introducing the notation ${\Theta_*^{i}}=(A_*^i,B_*^i)^\top$ we can express (\ref{eq:dyn_atttt}) equivalently as follows:
\begin{align}
	x _{t+1} ={\Theta^i_{*}}^\top z_{t}+\omega_{t+1}, \quad z_t=\begin{pmatrix} x_t \\ u_t \end{pmatrix}. \label{eq:dynam_by_theta} 
\end{align}

for some mode $\sigma(t)=i$.
This formulation, denoted by (\ref{eq:dynam_by_theta}), will be widely used henceforth to represent the dynamics model.

For control design purpose we restrict the policy to belong to the compact sets $\mathcal{S}(\Theta_*^i)$'s specified by certain $\kappa^i_{c}>0$ and $0<\gamma_{c}^i<1$, that denote some stabilizing policies $K_i$ of system parameterized with $\Theta_i^*$ as follows

\begin{align}
    \nonumber \mathcal{S}(\Theta_*^i)=\{&K_i\in \mathbb{R}^{(n_i+m_i)\times n_i}|\\
   & \|K_i\|\leq \kappa^i_{c}, \;  \rho (A_*^i+B_*^iK_i)<1-\gamma_{c}^i\}. \label{eq:policyClass}
\end{align}

Given the policy class (\ref{eq:policyClass}), the following lemma explicitly defines the parameter $\bar{\beta}$ in Definition \ref{def:underControl}. 
\begin{lemma} \label{lem:betaDef}
With the policy class (\ref{eq:policyClass}) employed for control design across all modes $i\in \mathcal{M}$, within the framework of the proposed switching scenario explained in Definition \ref{def:underControl}, if $\mathfrak{T}_{i_{k+1}}-\mathfrak{T}_{i_{k}}\geq \tau_{i_k i_{k+1}}$ for $\tau_{i_k i_{k+1}}\geq 1$ then the expected state norm is $(\bar{\alpha}, \bar{\beta})$-under control with a user-defined parameter $\bar{\alpha}$ and $\bar{\beta}$ defined by
    \begin{align}
        \bar{\beta}:= \frac{{4\alpha^*_1}^2}{{\alpha^*_0}^2}\frac{{\kappa^*_{c}}^4(1+{\kappa^*_{c}}^2)^2}{{\gamma^*_c}^2} \label{eq:betaDef}
    \end{align}
     where $\tau_{i_k i_{k+1}}$ is $\bar{\alpha}$-dependant and $\kappa^*_c=\max_{i\in |\mathcal{M}|}\kappa^i_c$ and $\gamma^*_c=\max_{i\in |\mathcal{M}|}\gamma^i_c$ and $\alpha^*_0 I\preceq Q^i, R^i \preceq \alpha^*_1 I$ for all $i\in \mathcal{M}$.
\end{lemma}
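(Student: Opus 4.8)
The plan is a per-epoch stochastic drift argument whose Lyapunov function is the closed-loop cost-to-go matrix of the active mode, with the dwell time $\tau_{i_k i_{k+1}}$ chosen just large enough to push the resulting contraction coefficient below the user-chosen $\bar\alpha$. On the epoch $[\mathfrak{T}_{i_k},\mathfrak{T}_{i_{k+1}})$ the control is $u_t=K_{i_k}x_t$, so $x_{t+1}=L_{i_k}x_t+\omega_{t+1}$ with $L_{i_k}:=A_*^{i_k}+B_*^{i_k}K_{i_k}$ and, by \eqref{eq:policyClass}, $\rho(L_{i_k})<1-\gamma_c^{i_k}$. I introduce $P_{i_k}\succ0$ solving $P=Q^{i_k}+K_{i_k}^\top R^{i_k}K_{i_k}+L_{i_k}^\top PL_{i_k}$, i.e. $P_{i_k}=\sum_{\ell\ge0}(L_{i_k}^\ell)^\top(Q^{i_k}+K_{i_k}^\top R^{i_k}K_{i_k})L_{i_k}^\ell$. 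Using this series, the decay estimate $\|L_{i_k}^\ell\|\le\kappa_c^*(1-\gamma_c^*)^\ell$ (the strongly-stable reading of \eqref{eq:policyClass}), the bound $\|Q^{i_k}+K_{i_k}^\top R^{i_k}K_{i_k}\|\le\alpha_1^*(1+{\kappa_c^*}^2)$, and $P_{i_k}\succeq Q^{i_k}\succeq\alpha_0^*I$, I obtain the two-sided estimate $\alpha_0^*\le\underline{\lambda}(P_{i_k})\le\|P_{i_k}\|\le\frac{2{\kappa_c^*}^2\alpha_1^*(1+{\kappa_c^*}^2)}{\gamma_c^*}$.

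Next, writing $V_t:=x_t^\top P_{i_k}x_t$ and $\mu_{i_k}:=\alpha_0^*/\|P_{i_k}\|\in(0,1]$, Assumption~\ref{Assumption 1}, the identity $L_{i_k}^\top P_{i_k}L_{i_k}=P_{i_k}-(Q^{i_k}+K_{i_k}^\top R^{i_k}K_{i_k})$, and $x_t^\top Q^{i_k}x_t\ge\alpha_0^*\|x_t\|^2\ge\mu_{i_k}V_t$ give the one-step drift
\begin{align*}
\mathbb{E}[V_{t+1}\mid\mathcal{F}_t]&=V_t-x_t^\top\big(Q^{i_k}+K_{i_k}^\top R^{i_k}K_{i_k}\big)x_t+\mathrm{tr}(P_{i_k}W)\\
&\le(1-\mu_{i_k})V_t+\mathrm{tr}(P_{i_k}W).
\end{align*}
Chaining this from $t=\mathfrak{T}_{i_k}$ to $t=\mathfrak{T}_{i_{k+1}}-1$, taking $\mathbb{E}[\cdot\mid\mathcal{F}_{\mathfrak{T}_{i_k}-1}]$, summing the geometric series, using $\mathfrak{T}_{i_{k+1}}-\mathfrak{T}_{i_k}\ge\tau_{i_k i_{k+1}}$, and finally sandwiching $\alpha_0^*\|\cdot\|^2\le V\le\|P_{i_k}\|\|\cdot\|^2$ together with $\mu_{i_k}=\alpha_0^*/\|P_{i_k}\|$ yields
\begin{align*}
\mathbb{E}[\|x_{\mathfrak{T}_{i_{k+1}}}\|^2\mid\mathcal{F}_{\mathfrak{T}_{i_k}-1}]&\le(1-\mu_{i_k})^{\tau_{i_k i_{k+1}}}\frac{\|P_{i_k}\|}{\alpha_0^*}\,\mathbb{E}[\|x_{\mathfrak{T}_{i_k}}\|^2\mid\mathcal{F}_{\mathfrak{T}_{i_k}-1}]\\
&\quad+\frac{\|P_{i_k}\|\,\mathrm{tr}(P_{i_k}W)}{{\alpha_0^*}^2}.
\end{align*}

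It then suffices to let $\tau_{i_k i_{k+1}}$ be any integer $\ge1$ with $(1-\mu_{i_k})^{\tau_{i_k i_{k+1}}}\|P_{i_k}\|/\alpha_0^*\le\bar\alpha$ — this is precisely the announced $\bar\alpha$-dependent minimum dwell time — so that the first coefficient is at most $\bar\alpha$; and to bound $\mathrm{tr}(P_{i_k}W)$ by a constant multiple of $\sigma_\omega^2\|P_{i_k}\|$ and insert the norm bound on $P_{i_k}$ from the first paragraph, which turns the additive term into $\bar\beta\sigma_\omega^2$ with $\bar\beta$ exactly as in \eqref{eq:betaDef}. This is \eqref{eq:stategrowthpp}, i.e. the $(\bar\alpha,\bar\beta)$-under-control property, for the switching scenario of Definition~\ref{def:underControl}.

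The step I expect to be the main obstacle is the norm bound on $P_{i_k}$: the policy class \eqref{eq:policyClass} is stated only through the spectral radius $\rho(L_{i_k})<1-\gamma_c^{i_k}$, so converting this into the uniform transient decay $\|L_{i_k}^\ell\|\le\kappa_c^*(1-\gamma_c^*)^\ell$ — equivalently, into the explicit bound on the Lyapunov matrix that makes the constants of \eqref{eq:betaDef} appear — requires reading the policy class in the strongly/sequentially stable sense, or invoking the standard DARE/Lyapunov norm bound for stabilizing gains of bounded norm. A secondary technical point is that the dwell length $\mathfrak{T}_{i_{k+1}}-\mathfrak{T}_{i_k}$, hence $\mathfrak{T}_{i_{k+1}}$, may be data-dependent; this is handled by running the drift recursion over exactly $\tau_{i_k i_{k+1}}$ steps from $\mathfrak{T}_{i_k}$ and observing that $(1-\mu_{i_k})^\ell$ is nonincreasing in $\ell$ while the additive term $\mathrm{tr}(P_{i_k}W)/\mu_{i_k}$ is horizon-free, so the bound persists for any longer actual dwell length.
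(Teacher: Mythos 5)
Your argument is correct and follows essentially the same route as the paper: the paper establishes this bound (inside its proof of Proposition~\ref{cor:mintimecond}) by the same Lyapunov drift with $P_{K_{i_k}}$ from (\ref{eq:lyapunovEq}), the same geometric-series summation bounded by $\bar{\mathcal{X}}/\bar{\eta}$, the same Rayleigh--Ritz sandwich, and the bound $\|P_{K_{i_k}}\|_*\leq \alpha_1(1+\kappa^2)\kappa^2/\gamma$ together with $P_{K_i}\succeq\alpha_0 I$ to produce exactly the constant in (\ref{eq:betaDef}). The only cosmetic difference is that you obtain the norm bound on $P_{K_{i_k}}$ from the Neumann series and the transient decay $\|L_{i_k}^\ell\|\leq\kappa_c^*(1-\gamma_c^*)^\ell$, whereas the paper invokes Lemma~\ref{lem:komaki} (Lemma 25 of \cite{cohen2019learning}) -- these are the same estimate, and the strong-stability reading of (\ref{eq:policyClass}) that you flag as the delicate point is precisely the step the paper also relies on.
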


\section{Solution for $MS$ and $M\bar{S}$ Setups} \label{eq:knownsetting}

In this section, we focus on addressing the problem of efficient and safe switching in two distinct scenarios: $MS$ and $M\bar{S}$. For each scenario, we first introduce formulations and then provide an upper-bound for performance gap between their solutions.




\subsection {$MS$ setup}

Consider a scenario where the switch sequence $\mathcal{I}$ and the system dynamics for all subsystems are known. In this context, the central challenge is to determine the switch times $\bar{\mathcal{T}}=\{\mathfrak{T}_{i_0}, \ldots, \mathfrak{T}_{i_{ns-1}}\}$ with $\mathfrak{T}_{i_0}=0$, with the starting time point established as $\mathfrak{T}_{i_0}=0$, along with the set of feedback gains $\bar{\mathcal{K}}=\{\{\mathcal{K}_{i_0}\},...,\{\mathcal{K}_{i_{ns-1}}\}\}$ to be applied at specific time instances with the aim of minimizing the cumulative cost. The subset $\{\mathcal{K}_{i_j}\}:=\{K_{\mathfrak{T}_{i_j}}, K_{\mathfrak{T}_{i_j}+1},..., K_{\mathfrak{T}_{i_{j+1}}-1}\}$, whose members belong to $\mathcal{S}(\Theta_*^{i_j})$, represents the sequence of feedback gains to be implemented during the time steps within the epoch following the $j$-th switch. It is note worthy that the cardinally of this sequence is $\mathfrak{T}_{i_{j+1}}-\mathfrak{T}_{i_{j}}$ which is also the decision variable. The formulation of this problem is given by Problem 0, as follows:

\subsubsection{$MS$, (Program 0)} \label{thm:(KMKS_lastEpochGuar0}

\begin{align}
 &\bar{\mathcal{T}}^*_0, \bar{\mathcal{K}}^*_0=\operatorname*{argmin}_{\bar{\mathcal{K}}, \mathcal{T}}  \sum _{k=0}^{ns-1}\sum _{t={\mathfrak{T}}_{i_k}}^{\mathfrak{T}_{i_{k+1}}-1} c_{t}^{i_k} \label{eq:Optimization0}\\
  \nonumber  &\textbf{s.t.}\\
  & \nonumber x_{t+1} =A^{\sigma(t) }_{*}x_{t} + B^{\sigma(t) }_{*}u_{t}+\omega_{t+1}\\
 & \nonumber \sigma(t)=i_k\; \text{for}\; \mathfrak{T}_{i_k}\leq t< \mathfrak{T}_{i_{k+1}}\\
 & \mathbb{E}[x_{\mathfrak{T}_{i_{k+1}}^+}^\top  x_{\mathfrak{T}_{i_{k+1}}^+}| \mathcal{F}_{\mathfrak{T}_{i_{k+1}}-1}] \leq \bar{\alpha} \mathbb{E}[x_{\mathfrak{T}_{i_k}^+}^\top x_{\mathfrak{T}_{i_k}^+}| \mathcal{F}_{\mathfrak{T}_{i_k}-1}] + \bar{\beta}\sigma_{\omega}^2 \label{eq:StateBoundedness}\\
 &\mathfrak{T}_{i_{k+1}}-\mathfrak{T}_{i_k}\geq 1 \label{eq:condSta}
\end{align}

The constraint (\ref{eq:condSta}) imposes the requirement of visiting each mode $i_k \in \mathcal{I}$. We denote the optimum solution of Program 0 as $\bar{\mathcal{T}}_0^{*}=\{\mathfrak{T}_{i_0}^{0_*}, \ldots, \mathfrak{T}^{0_*}_{i_{ns-1}}\}$ and $\bar{\mathcal{K}}_0^{*}=\{\{\mathcal{K}^{0_*}_{i_0}\},...,\{\mathcal{K}^{0_*}_{i_{ns-1}}\}\}$.

Although Problem 0 benefits from having full access to switch sequence information $\mathcal{I}$, solving it remains challenging due to its intricate nature as a complex combinatorial optimization problem. The other challenge arises from the nonconvexity of the set $\mathcal{S}(\Theta_*^{i_k})$, for any $i_k\in \mathcal{I}$ as demonstrated in \cite{fazel2018global} through a counterexample. This nonconvexity significantly complicates the computational resolution of (\ref{eq:Optimization0}). Moreover, the constraint (\ref{eq:StateBoundedness}), i.e., the boundedness of the expected state norm as per Definition \ref{def:underControl}, does not explicitly depend on the design variables $\bar{\mathcal{K}}$ and $\mathcal{T}$ which adds a layer of complexity to Problem 0. Problem 1, as formulated below, aims to explicitly incorporate this constraint into the decision variables by establishing a fixed policy for each epoch. 
Also, thanks to the fixed policy for epochs, Problem 1 slightly reduces the combinatorial complexity associated with solving the $MS$ Problem 0. Consequently, solving the problem simplifies to determining the epoch's duration and the policy to be employed during that epoch. For the context of Problem 1, without loss of generality and for the sake of brevity, we will redefine the decision variables. By letting $\tau_{i_j}\in \mathbb{N}$, we will use $\mathcal{T}=\{\tau_{i_0}, \ldots, \tau_{i_{ns-1}}\}$, to represent the set of epochs' duration, and $\mathcal{K}=\{K_{i_0}, \ldots, K_{i_{ns-1}}\}$ to denote the set of feedback policies applied for each epoch. In this revised formulation, we have $\tau_{i_j i_{j+1}}=\mathfrak{T}_{i_{j+1}}-\mathfrak{T}_{i_{j}}$, and the fixed control gain for an epoch occurring within the time interval $[\mathfrak{T}_{i_{j}},\; \mathfrak{T}_{i_{j+1}})$ is $K_{i_j}$. Having the average expected cost of policy $K_{i_j}$, denoted by $J(\Theta^{i_j}_*,K_{i_j})$ and computed by  

\begin{align*}
   J(\Theta^i_*,K_i)=\lim_{T\rightarrow \infty}\frac{1}{T}\mathbb{E}[\sum_{t=0}^{T-1}x^\top_t(Q^i+R^i{^\top} K_iR^i)x^\top_t]
\end{align*}

the expected accumulated cost during actuating in mode $i$ for time with length of $\tau_{i}$ will be $\tau_{i} J(\Theta_*^{i}, K_{i})$.

Now let $P_{K_i}$ be a certain positive semidefinite (PSD) matrix associated with the fixed $K_i$ to be applied in an epoch, which is defined as follows: 

 \begin{align}
     P_{K_i}=Q^i+K_i^\top R^iK_i+(A^i_*+B^i_*K_i)^\top P_{K_i}(A^i_*+B^i_*K_i). \label{eq:lyapunovEq}
 \end{align}

Then following proposition outlines a sufficient condition to achieve the boundedness of the state in accordance with Definition \ref{def:underControl}.

\begin{proposition}
    Let $\mathfrak{T}_{i_{k+1}}^+$ and $\mathfrak{T}_{i_{k}}^+$ represent time sequences immediately following two subsequent switches from the mode $i_{k}$ to $i_{k+1}$, both within $\mathcal{I}$. The sufficient condition for (\ref{eq:StateBoundedness}), with $\bar{\beta}$ as defined in (\ref{eq:betaDef}), is as follows: 

    \begin{align}
        &{\tau}_{i_k i_{k+1}}:=\mathfrak{T}_{i_{k+1}}-\mathfrak{T}_{i_{k}}\geq\\
       \nonumber  & -\frac{\ln \bar{\rho}(K_{i_k}, K_{i_{k+1}})+\ln \bar{\mathcal{X}}(K_{i_k}, K_{i_{k+1}})-\ln \mathcal{\bar{\alpha}}}{\ln \big(1-\bar{\eta} \big(K_{i_k})\big)} 
    \end{align}
  where 
 \begin{align}
	&\nonumber \bar{\eta} \big(K_{i_k}) := \frac{\underline{\lambda}\big(H_{K_{i_k}}\big)}{\overline{\lambda}\big(P_{K_{i_k}}\big)}, \; \bar{\rho}(K_{i_k}, K_{i_{k+1}}):=\frac{\overline{\lambda}\big(P_{K_{i_{k+1}}}\big)}{\underline{\lambda}\big(P_{K_{i_k}})}\\
& \bar{\mathcal{X}}(K_{i_k}, K_{i_{k+1}}):= \frac{\overline{\lambda}\big(P_{K_{i_k}}\big)}{\underline{\lambda}\big(P_{K_{i_{k+1}}})}, H_{K_{i_k}}=Q^{i_k}+K_{i_k}^\top R^{i_k} K_{i_k}.	\label{eq:operatorsDef}
\end{align}  

\end{proposition}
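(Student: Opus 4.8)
\emph{Proof proposal.} The plan is a Lyapunov-drift argument carried out inside the epoch governed by $K_{i_k}$, followed by an unrolling over its $\tau_{i_k i_{k+1}}$ steps. On $[\mathfrak{T}_{i_k},\mathfrak{T}_{i_{k+1}})$ the closed loop is $x_{t+1}=L_{i_k}x_t+\omega_{t+1}$ with $L_{i_k}:=A^{i_k}_*+B^{i_k}_*K_{i_k}$, so I would take $V_t:=x_t^\top P_{K_{i_k}}x_t$ as the certificate, $P_{K_{i_k}}$ being the solution of (\ref{eq:lyapunovEq}). Expanding $V_{t+1}$, using Assumption~\ref{Assumption 1} to annihilate the cross term and to evaluate $\mathbb{E}[\omega_{t+1}^\top P_{K_{i_k}}\omega_{t+1}\mid\mathcal{F}_t]=\mathrm{tr}(P_{K_{i_k}}W)$, and replacing $L_{i_k}^\top P_{K_{i_k}}L_{i_k}$ by $P_{K_{i_k}}-H_{K_{i_k}}$ via (\ref{eq:lyapunovEq}) gives $\mathbb{E}[V_{t+1}\mid\mathcal{F}_t]=V_t-x_t^\top H_{K_{i_k}}x_t+\mathrm{tr}(P_{K_{i_k}}W)$. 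Since $H_{K_{i_k}}\succeq\underline{\lambda}(H_{K_{i_k}})I\succeq\bar{\eta}(K_{i_k})P_{K_{i_k}}$ — an immediate consequence of the definition of $\bar{\eta}$ in (\ref{eq:operatorsDef}) — the drift is contractive:
\begin{align*}
\mathbb{E}[V_{t+1}\mid\mathcal{F}_t]\le\big(1-\bar{\eta}(K_{i_k})\big)\,V_t+\mathrm{tr}(P_{K_{i_k}}W).
\end{align*}
Moreover $0<\bar{\eta}(K_{i_k})<1$ in the nondegenerate case: $Q^{i_k}\succ0$ forces $H_{K_{i_k}}\succ0$, while (\ref{eq:lyapunovEq}) forces $H_{K_{i_k}}\preceq P_{K_{i_k}}$; hence $\ln\!\big(1-\bar{\eta}(K_{i_k})\big)<0$ and the displayed threshold is well posed and strictly positive.

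Next I iterate this bound over $t=\mathfrak{T}_{i_k},\dots,\mathfrak{T}_{i_{k+1}}-1$, using the tower property to bring every conditioning down to $\mathcal{F}_{\mathfrak{T}_{i_k}-1}$ and bounding the finite geometric sum of noise terms by $1/\bar{\eta}(K_{i_k})$:
\begin{align*}
\mathbb{E}[V_{\mathfrak{T}_{i_{k+1}}}\mid\mathcal{F}_{\mathfrak{T}_{i_k}-1}]\le\big(1-\bar{\eta}(K_{i_k})\big)^{\tau_{i_k i_{k+1}}}\mathbb{E}[V_{\mathfrak{T}_{i_k}}\mid\mathcal{F}_{\mathfrak{T}_{i_k}-1}]+\frac{\mathrm{tr}(P_{K_{i_k}}W)}{\bar{\eta}(K_{i_k})}.
\end{align*}
I then convert the quadratic forms to $\|\cdot\|^2$ at the two endpoints: at the start $V_{\mathfrak{T}_{i_k}}\le\overline{\lambda}(P_{K_{i_k}})\,x_{\mathfrak{T}_{i_k}}^\top x_{\mathfrak{T}_{i_k}}$, and at the end I pass through the incoming mode's certificate, using $x^\top x\le\underline{\lambda}(P_{K_{i_{k+1}}})^{-1}x^\top P_{K_{i_{k+1}}}x$ together with $P_{K_{i_{k+1}}}\preceq\overline{\lambda}(P_{K_{i_{k+1}}})I\preceq\big(\overline{\lambda}(P_{K_{i_{k+1}}})/\underline{\lambda}(P_{K_{i_k}})\big)P_{K_{i_k}}$. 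Chaining these four inequalities, the coefficient of $\mathbb{E}[x_{\mathfrak{T}_{i_k}}^\top x_{\mathfrak{T}_{i_k}}\mid\mathcal{F}_{\mathfrak{T}_{i_k}-1}]$ collapses to exactly $\bar{\rho}(K_{i_k},K_{i_{k+1}})\,\bar{\mathcal{X}}(K_{i_k},K_{i_{k+1}})\,\big(1-\bar{\eta}(K_{i_k})\big)^{\tau_{i_k i_{k+1}}}$, with $\bar\rho,\bar{\mathcal X}$ as in (\ref{eq:operatorsDef}), while the additive noise term remains independent of $\tau_{i_k i_{k+1}}$.

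It remains to match both sides of (\ref{eq:StateBoundedness}). Imposing that the multiplicative coefficient be at most $\bar{\alpha}$, i.e.\ $\bar{\rho}\,\bar{\mathcal{X}}\,\big(1-\bar{\eta}(K_{i_k})\big)^{\tau_{i_k i_{k+1}}}\le\bar{\alpha}$, taking logarithms, and solving for $\tau_{i_k i_{k+1}}$ — dividing by the negative quantity $\ln(1-\bar{\eta}(K_{i_k}))$ and flipping the inequality — produces precisely the stated bound. For the additive part I would show that the $\tau$-independent noise contribution is at most $\bar{\beta}\sigma_\omega^2$ with $\bar{\beta}$ as in (\ref{eq:betaDef}); this is the estimate that underlies Lemma~\ref{lem:betaDef} and reduces to uniform spectral bounds over the policy class (\ref{eq:policyClass}) — namely $\underline{\lambda}(P_K)\ge\underline{\lambda}(H_K)\ge\alpha^*_0$, $\|H_K\|\le\alpha^*_1(1+{\kappa^*_c}^2)$, and $\overline{\lambda}(P_K)\lesssim{\kappa^*_c}^2\alpha^*_1(1+{\kappa^*_c}^2)/\gamma^*_c$ obtained by summing $\sum_{t\ge0}(L_{i_k}^t)^\top H_{K_{i_k}}L_{i_k}^t$ using the geometric decay of $\|L_{i_k}^t\|$ implied by (\ref{eq:policyClass}) — together with $W=\bar{\sigma}_\omega^2 I$ and $\bar{\sigma}_\omega^2\le\sigma_\omega^2$. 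I expect the only genuine difficulty to be bookkeeping: the drift and its unrolling are routine, but keeping the chain of eigenvalue conversions tight enough to land on exactly $\bar{\rho}\bar{\mathcal{X}}$ and on exactly the constant in (\ref{eq:betaDef}) — rather than a looser product of condition numbers — is what pins down the statement.
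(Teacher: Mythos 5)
Your proposal follows essentially the same route as the paper: the paper's own argument (folded into the proof of Proposition~\ref{cor:mintimecond}, which invokes the machinery of Theorem~\ref{thm:minimum_average_dwell_revised}) is exactly this Lyapunov drift with $P_{K_{i_k}}$ from (\ref{eq:lyapunovEq}), contraction rate $\bar{\eta}(K_{i_k})$, the geometric-sum bound $1/\bar{\eta}$ on the accumulated noise, the Rayleigh--Ritz conversions through $P_{K_{i_{k+1}}}$ at the switch to produce the coefficient $\bar{\rho}\,\bar{\mathcal{X}}\,(1-\bar{\eta})^{\tau_{i_k i_{k+1}}}$, and then $\|P_K\|_*\leq \alpha_1(1+\kappa^2)\kappa^2/\gamma$ (via Lemma~\ref{lem:komaki}) together with $P_K\succeq \alpha_0 I$ to absorb the residual noise term into $\bar{\beta}\sigma_{\omega}^2$. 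The only divergence is in the bookkeeping of that additive term (you carry $\mathrm{tr}(P_{K_{i_k}}W)$ and an extra $\bar{\rho}$ prefactor where the paper carries $\bar{\mathcal{X}}\,\overline{\lambda}(P_{K_{i_k}})\sigma_{\omega}^2/\bar{\eta}$), a discrepancy you correctly identify as the part requiring care to land on the exact constant in (\ref{eq:betaDef}).
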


\subsubsection{$MS$ (Program 1)} \label{thm:(KMKS_epochbyepochGuar}

\begin{align}
  \nonumber  \mathcal{K}^*_1, \mathcal{T}^*_1=& \operatorname*{argmin}_{\mathcal{K}, \mathcal{T}} \sum_{k=0}^{n-1} \tau_{i_k i_{k+1}}J(\Theta_*^{i_k}, K_{i_k})\\
  \nonumber  &\textbf{s.t.}\\
  & \nonumber x_{t+1} =A^{\sigma(t) }_{*}x_{t} + B^{\sigma(t) }_{*}u_{t}+\omega_{t+1}\\
 & \nonumber \sigma(t)=i_k\; \text{for}\; \mathfrak{T}_{i_k}\leq t< \mathfrak{T}_{i_{k+1}}\\
 \nonumber  & \ln \bar{\rho}(K_{i_k}, K_{i_{k+1}})+ {\tau_{i_k}} \ln \bigg(1-\bar{\eta} \big(K_{i_k}\big)\bigg) +\\
& \ln \bar{\mathcal{X}}(K_{i_k}, K_{i_{k+1}}) \leq  \ln \mathcal{\bar{\alpha}} \quad   \quad \quad\quad k=0,1,..,ns-1 \label{eq:conststab}\\
 & \nonumber \mathfrak{T}_{i_{k+1}}=\tau_{i_k i_{k+1}}+\mathfrak{T}_{i_k}, \quad \mathfrak{T}_{i_0}=0\\
   & \tau_{i_k i_{k+1}}\geq 1. \quad \forall k \label{eq:constvis}
\end{align}

 The optimum duration and feedback gains outputted by Program 1 is denoted by $\mathcal{T}^*_1=\{\tau^{1_*}_{i_0 i_1}, \ldots, \tau^{1_*}_{i_{n-1}i_{n}}\}$ and $\mathcal{K}^*_1=\{K^{1_*}_{i_0}, \ldots, K^{1_*}_{i_{ns-1}}\}$.

\begin{proposition} \label{cor:mintimecond}

Problem 1 achieves its minimum in 

\begin{align}
    \nonumber  &{\tau}_{i_k i_{k+1}}:=\\
     &\max\bigg\{1, -\frac{\ln \bar{\rho}(K_{i_k}, K_{i_{k+1}})+\ln \mathcal{X}(K_{i_k}, K_{i_{k+1}})-\ln \bar{\alpha}}{\ln \big(1-\bar{\eta} \big(K_{i_k})\big)}\bigg\}.
\end{align} 
and for any arbitrary switch sequence $\mathcal{I}$, every feasible solution of Program 1 has a cost  $C_1(\mathcal{I})> n \min_ {j\in \mathcal{M}}J_*(\Theta_*^j, Q^j, R^i):=\mathcal{L}_1$.
\end{proposition}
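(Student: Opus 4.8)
The statement splits into two essentially independent parts: identifying the dwell times at which Program~1 attains its minimum, and the strict lower bound $\mathcal{L}_1$ on the cost of \emph{every} feasible point; the plan is to treat them in that order.

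\emph{Minimizing dwell times.} Here the plan is to fix an arbitrary feasible set of gains $\{K_{i_k}\}$ and minimize the objective $\sum_{k}\tau_{i_k i_{k+1}}\,J(\Theta_*^{i_k},K_{i_k})$ over the durations $\{\tau_{i_k i_{k+1}}\}$ only. Three facts suffice. (a) Each coefficient $J(\Theta_*^{i_k},K_{i_k})$ is strictly positive: $K_{i_k}\in\mathcal{S}(\Theta_*^{i_k})$ is stabilizing, the stationary state covariance is positive definite, and $Q^{i_k}\succeq\alpha_0^*I\succ0$ (equivalently $J(\Theta_*^{i_k},K_{i_k})=\bar\sigma_\omega^2\,\mathrm{tr}(P_{K_{i_k}})$ with $P_{K_{i_k}}\succeq Q^{i_k}$ by (\ref{eq:lyapunovEq})). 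Consequently the objective separates across epochs and is strictly increasing in each $\tau_{i_k i_{k+1}}$. (b) For fixed gains, $\tau_{i_k i_{k+1}}$ enters only its own copy of (\ref{eq:conststab}) and the visiting constraint (\ref{eq:constvis}); since $\bar\eta(K_{i_k})>0$ (because $0\prec H_{K_{i_k}}\preceq P_{K_{i_k}}$), the multiplier $\ln(1-\bar\eta(K_{i_k}))$ is negative, so dividing (\ref{eq:conststab}) through by it reverses the inequality and shows that the feasible values of $\tau_{i_k i_{k+1}}$ form exactly the ray $\tau_{i_k i_{k+1}}\ge -(\ln\bar\rho(K_{i_k},K_{i_{k+1}})+\ln\bar{\mathcal{X}}(K_{i_k},K_{i_{k+1}})-\ln\bar\alpha)/\ln(1-\bar\eta(K_{i_k}))$ intersected with $\{\tau_{i_k i_{k+1}}\ge1\}$. (c) A strictly increasing function on such a ray is minimized at its left endpoint, i.e. at $\max\{1,\,-(\ln\bar\rho+\ln\bar{\mathcal{X}}-\ln\bar\alpha)/\ln(1-\bar\eta)\}$. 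Since this argument holds for every fixed feasible choice of gains, it holds at the optimizing gains, which is the asserted characterization; integrality $\tau_{i_k i_{k+1}}\in\mathbb{N}$ only replaces the endpoint by its ceiling, a point I would flag but not dwell on.

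\emph{The lower bound $\mathcal{L}_1$.} I would argue directly on an arbitrary feasible pair $(\mathcal{K},\mathcal{T})$. Write its cost as $C_1(\mathcal{I})=\sum_{k}\tau_{i_k i_{k+1}}J(\Theta_*^{i_k},K_{i_k})$ and bound each factor from below: $\tau_{i_k i_{k+1}}\ge1$ by (\ref{eq:constvis}); and, because $K_{i_k}\in\mathcal{S}(\Theta_*^{i_k})$ is a stabilizing gain while the DARE controller is the cost-optimal one, $J(\Theta_*^{i_k},K_{i_k})\ge J_*(\Theta_*^{i_k},Q^{i_k},R^{i_k})\ge\min_{j\in\mathcal{M}}J_*(\Theta_*^{j},Q^{j},R^{j})$. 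Multiplying and summing over the $n$ epochs gives $C_1(\mathcal{I})\ge n\min_{j\in\mathcal{M}}J_*(\Theta_*^{j},Q^{j},R^{j})=\mathcal{L}_1$. To obtain the \emph{strict} inequality I would note that these bounds cannot be simultaneously tight for all $k$: equality in the middle bound would force $K_{i_k}$ to equal the unconstrained DARE optimizer for every $k$, but the admissible set (\ref{eq:policyClass}) is a genuine conservative safety restriction — it imposes the strict contraction $\rho(A_*^{i_k}+B_*^{i_k}K_{i_k})<1-\gamma_c^{i_k}$ and the norm cap $\|K_{i_k}\|\le\kappa_c^{i_k}$ and therefore excludes that optimizer — so $J(\Theta_*^{i_k},K_{i_k})>J_*(\Theta_*^{i_k},Q^{i_k},R^{i_k})$ already gives $C_1(\mathcal{I})>\mathcal{L}_1$. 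An alternative route to strictness: summing (\ref{eq:conststab}) over $k$ and using $\bar\rho\,\bar{\mathcal{X}}\ge1$ forces $\sum_k\tau_{i_k i_{k+1}}$ strictly above $n$ unless the contraction rates $\bar\eta(K_{i_k})$ are large enough, which need not hold; and if $\mathcal{I}$ misses a cost-minimizing mode the second of the two inequalities above is already strict.

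\emph{Main obstacle.} The monotonicity and log-rearrangement of the first claim, and the factor-by-factor chain of the second, are routine. The delicate point is the strict inequality: it rests on why no feasible solution can be simultaneously dwell-time-minimal and cost-minimal — i.e. on the interaction between the DARE optimizer and the conservative stabilizing class $\mathcal{S}(\Theta_*^{i_k})$, or equivalently on the dwell-time constraint (\ref{eq:conststab}) being active in at least one epoch. I would settle this first; the rest is bookkeeping. The only other technical wrinkle, the integrality of $\tau_{i_k i_{k+1}}$, affects only whether the closed form in the first claim should carry $\lceil\cdot\rceil$ and leaves $\mathcal{L}_1$ untouched.
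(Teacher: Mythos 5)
Your treatment of the first claim is essentially the paper's argument, stated more carefully: the paper likewise observes that the objective is linear (hence increasing) in each $\tau_{i_k i_{k+1}}$, that the feasible set for each duration is the half-line obtained by dividing (\ref{eq:conststab}) by the negative quantity $\ln(1-\bar\eta(K_{i_k}))$, and that the $\max\{1,\cdot\}$ comes from the mandatory-visit constraint (\ref{eq:constvis}). The one substantive difference is that the paper spends most of its proof re-deriving (\ref{eq:conststab}) itself — propagating a Lyapunov function $x^\top P_{K_{i_k}}x$ through the epoch, applying Rayleigh--Ritz at the switch, and bounding the geometric noise series to confirm that the resulting $\bar\beta$ matches (\ref{eq:betaDef}) and that Definition~\ref{def:underControl} is met — whereas you take the constraint as given and only optimize over it. That is a defensible reading of the statement, but be aware that the paper treats the verification that the constraint actually delivers $(\bar\alpha,\bar\beta)$-control as part of what is being proved here.

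The genuine gap is in your strictness argument for $C_1(\mathcal{I})>\mathcal{L}_1$. Your chain $\tau_{i_k i_{k+1}}\ge 1$ and $J(\Theta_*^{i_k},K_{i_k})\ge J_*(\Theta_*^{i_k},Q^{i_k},R^{i_k})\ge\min_j J_*(\Theta_*^j,Q^j,R^j)$ cleanly gives the non-strict bound, but your route to strictness rests on the claim that the class $\mathcal{S}(\Theta_*^{i_k})$ in (\ref{eq:policyClass}) \emph{excludes} the DARE-optimal gain. That contradicts the paper's own setup: the benchmark strategy $\mathcal{A}_*$ and Program~3 apply exactly the DARE gain, and the discussion following Definition~\ref{Def:StrongStability} notes that \emph{any} stabilizing gain — including $K_*(\Theta_*^{i_k},Q^{i_k},R^{i_k})$ — is $(\kappa,\gamma)$-stabilizing for some constants, so for the (user-chosen) $\kappa_c^{i_k},\gamma_c^{i_k}$ the optimal gain generally lies inside the class. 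Your fallback routes (a malignant switch forcing some $\tau>1$, or $\mathcal{I}$ missing the cheapest mode) are also not guaranteed for an arbitrary sequence. In fairness, the paper offers no proof of this second claim at all (its only comment on the cost bounds, in the proof of Proposition~\ref{coro: compare}, is a one-line appeal to the worst-cost/longest-duration mode), so the honest conclusion is that the inequality is established only in the non-strict form $C_1(\mathcal{I})\ge\mathcal{L}_1$; you should either prove it as such or add a hypothesis (e.g.\ at least one malignant switch, or $K_{i_k}\neq K_*$ for some $k$) under which strictness follows.
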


Problems 0 and 1, apart from their computational complexity, require full knowledge of switch sequence that is unavailable in our proposed setting, where only the next mode is revealed to the system and the termination of the sequence is unknown. In the following section, our goal is to investigate the problem in the absence of this information.

\subsection {$M\bar{S}$ Setup}

When confronted with lack of switch sequence information and uncertainty regarding sequence termination, a robust approach with respect to the termination involves minimizing the costs on an epoch-by-epoch basis. The subsequent program outlines this approach.

\subsubsection{ $M\bar{S}$ (Problem 2)}\label{thm:minimum_average_dwell_known_K}
	Consider the switched system (\ref{eq:dyn_atttt}), that at time $\mathfrak{T}_{i_{k+1}}$ switches from mode $i_k$ to the mode $i_{k+1}$ according to the switch sequence $\mathcal{I}$ i.e., $\sigma(\mathfrak{T}_{i_{k+1}}^-)=i_k$ and $\sigma(\mathfrak{T}_{i_{k+1}}^+)=i_{k+1}$. Additionally, let $K_{i_k} \in \mathcal{S}(\Theta_*^{i_k})$ and $K_{i_{k+1}}\in \mathcal{S}(\Theta_*^{i_{k+1}})$ stand for the respective stabilizing linear controllers for these modes. Then the minimum mode-dependent dwell-time, $\tau^{2_*}_{i_k i_{k+1}}$ and policy for the epoch of actuating in mode $i_k\in \mathcal{I}$, $K^{2_*}_{i_k}$ is obtained by solving following program.
\begin{align}
  \nonumber   K_{i_k}^{2_*}, \tau_{i_k i_{k+1}}^{2_*}=& \operatorname*{argmin}_{K_{i_k}\in \mathcal{S}(\Theta_*^{i_k}), \tau_{i_k} \in \mathbb{Z}^{+} }  \tau_{i_k i_{k+1}} J(\Theta_*^{i_k}, K)\\
 &\textbf{s.t.} \label{eq:Optimization1}\\
 &\nonumber x_{t+1} =A^{i_k}_{*}x_{t} + B^{i_k}_{*}u_{t}+\omega_{t+1},\\
 \nonumber  & \tau_{i_k}\geq \bar{\tau}(K_{i_k},K_{i_{k+1}}) 
\end{align}

where
     \begin{align}
     \nonumber &\bar{\tau}(K_{i_k},K_{i_{k+1}}):=\\
     &\max\bigg\{1, -\frac{\ln \bar{\rho}(K_{i_k}, K_{i_{k+1}})+\ln \bar{\mathcal{X}}(K_{i_k}, K_{i_{k+1}})-\ln \bar{\alpha}}{\ln \big(1-\bar{\eta} \big(K_{i_k})\big)}\bigg\}\label{eq:tau_a_known0} 
     \end{align}
with operators  $\bar{\rho}(.,.)$, $\bar{\eta}(.)$ and $\bar{\mathcal{X}}(., .)$ defined by (\ref{eq:operatorsDef}).

The objective of Problem 2 is the product of the average expected cost $J(\Theta_*^{i_k}, K_{i_k})$ incurred by a policy $K_{i_k}$ and its associated dwell time $\tau_{i_k i_{k+1}}$. Noting Proposition \ref{cor:mintimecond}, Problem 2 achieves its minimum when $\tau_{i_k i_{k+1}}=\bar{\tau}(K_{i_k}, K_{i_{k+1}})$ where $K_{i_{k+1}}$ represents the policy of the next epoch. Then problem can be reformulated as follows:

\begin{align}
   K_{i_k}^{2_*}=& \operatorname*{argmin}_{K\in \mathcal{S}(\Theta_*^{i_k}) }  \bar{\tau}(K_{i_k},K_{i_{k+1}})  J(\Theta_*^{i_k}, K) \label{eq:Optimization2}\\
 \nonumber &\textbf{s.t.}\\
\nonumber  & x_{t+1} =A^{i_k}_{*}x_{t} + B^{i_k}_{*}u_{t}+\omega_{t+1}. \label{eq:reformule}
\end{align}

While formulation (\ref{eq:Optimization2}) seems less complex than Problem 2, it still contains two main hurdles. First,  $K_{i_{k+1}}$ is not know as it depends on the policy of the subsequent epoch whose index $i_{k+1}$ is yet to be revealed. In other words, $K_{i_k}^{2_*}$ is a function of unknown policy $K_{i_{k+1}}$. Secondly, solving this problem is computationally intractable because of the complexity of its objective function and the presence of the nonconvex set $\mathcal{S}(\Theta_*^{i_k})$.

To obtain feasible solution given the disclosed information, one possible approach is to enable the system to utilize optimal control feedback obtained by solving DARE at each epoch. This approach indeed specifies the policies to be applied in each epoch and not only eliminates the requirement for future switch sequences but also sidesteps the computational complexity associated with solving a complex optimization problem. The only remaining task is to design a minimum dwell time associated with the solution of DARE. 

The rationale for this approach is supported by employing the following insightful lemma, which provides an expression for the value of $J(\Theta_*, K)$ in terms of the optimal average expected cost $J_*(\Theta_, Q, R)$ and the optimal feedback gain $K_*(\Theta_*, Q, R)$.

\begin{lemma} (Lemma 3 of \cite{mania2019certainty}) \label{lem:deviationLemma}
In the classic Linear Quadratic Regulator (LQR) framework, characterized by the environment parameters $\Theta_*=(A_*, B_*)^\top$ and the cost matrices $(Q, R)$, the application of any arbitrary stabilizing static linear controller $K$ incurs an average expected cost of:

\begin{align}
&J(\Theta_*, K)=J_*(\Theta_*, Q, R)+ \\
 & \nonumber  \operatorname{Tr} (\Sigma_{xx}(K)(K-K_*(\Theta_*, Q,R))^\top \\
&(R+B_*^\top P_*B_*)(K-K_*(\Theta_*, Q,R))\label{eq:expexpcos2}
\end{align}
where $\Sigma_{xx}(K)$ represents the stationary distribution of the covariance matrix of the closed-loop system's state and computed by
\begin{align*}
    (A_*+B_*K)^\top \Sigma_{xx}(K) (A_*+B_*K)-\Sigma_{xx}(K)+\sigma_{\omega}^2 I=0
\end{align*}
\end{lemma}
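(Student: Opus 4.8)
The final statement to prove is Lemma~\ref{lem:deviationLemma}, which is attributed to Lemma 3 of \cite{mania2019certainty}. It gives the exact decomposition of the average expected cost $J(\Theta_*, K)$ of an arbitrary stabilizing static controller $K$ into the optimal cost $J_*(\Theta_*, Q, R)$ plus a quadratic ``excess cost'' term involving the gap $K - K_*(\Theta_*, Q, R)$, weighted by $R + B_*^\top P_* B_*$ and averaged against the closed-loop stationary state covariance $\Sigma_{xx}(K)$.

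\textbf{Proof plan.} The plan is to proceed via the standard completion-of-squares / advantage-function argument for LQR. First I would recall that for a stabilizing $K$, the closed-loop state $x_t$ under $x_{t+1} = (A_* + B_* K)x_t + \omega_{t+1}$ has a well-defined stationary covariance $\Sigma_{xx}(K)$ solving the given Lyapunov equation, so that the time-averaged cost equals $J(\Theta_*, K) = \operatorname{Tr}\big((Q + K^\top R K)\Sigma_{xx}(K)\big)$ (the cross terms with $\omega$ vanish by Assumption~\ref{Assumption 1}, part (1.1), and the noise covariance is normalized so that $\mathbb{E}[\omega_t \omega_t^\top] = \sigma_\omega^2 I$). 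Second, I would introduce the optimal value matrix $P_*$ solving the DARE $P_* = Q + A_*^\top P_* A_* - A_*^\top P_* B_*(R + B_*^\top P_* B_*)^{-1} B_*^\top P_* A_*$, with $K_* = -(R + B_*^\top P_* B_*)^{-1} B_*^\top P_* A_*$ and $J_*(\Theta_*, Q, R) = \sigma_\omega^2 \operatorname{Tr}(P_*)$. The key algebraic identity is the pointwise Bellman-type equation: for \emph{every} matrix $K$,
\begin{align*}
Q + K^\top R K + (A_* + B_* K)^\top P_* (A_* + B_* K) - P_* = (K - K_*)^\top (R + B_*^\top P_* B_*)(K - K_*),
\end{align*}
which follows by expanding both sides and using the definition of $K_*$ and the DARE to cancel all terms that are not quadratic in $K - K_*$.

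Third, I would take the trace of this identity against $\Sigma_{xx}(K)$. On the left-hand side, $\operatorname{Tr}\big((Q + K^\top R K)\Sigma_{xx}(K)\big) = J(\Theta_*, K)$, and for the remaining two terms I would use the Lyapunov equation for $\Sigma_{xx}(K)$: since $(A_* + B_* K)^\top \Sigma_{xx}(K)(A_* + B_* K) - \Sigma_{xx}(K) = -\sigma_\omega^2 I$, a trace-cyclicity manipulation gives $\operatorname{Tr}\big((A_*+B_*K)^\top P_* (A_*+B_*K)\Sigma_{xx}(K)\big) - \operatorname{Tr}\big(P_* \Sigma_{xx}(K)\big) = \operatorname{Tr}\big(P_*\big[(A_*+B_*K)\Sigma_{xx}(K)(A_*+B_*K)^\top - \Sigma_{xx}(K)\big]\big)$. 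One has to be slightly careful here because $P_*$ is symmetric but $(A_*+B_*K)\Sigma_{xx}(K)(A_*+B_*K)^\top$ is the ``transposed'' Lyapunov operator; this works out because $\Sigma_{xx}(K)$ is symmetric and one can equivalently write the stationary covariance equation as $(A_*+B_*K)\Sigma_{xx}(K)(A_*+B_*K)^\top - \Sigma_{xx}(K) + \sigma_\omega^2 I = 0$ as well (the stationary covariance of $x_t$ satisfies both forms since it is the unique PSD fixed point). Hence the left-hand side equals $J(\Theta_*, K) - \sigma_\omega^2 \operatorname{Tr}(P_*) = J(\Theta_*, K) - J_*(\Theta_*, Q, R)$, and the right-hand side equals $\operatorname{Tr}\big(\Sigma_{xx}(K)(K - K_*)^\top (R + B_*^\top P_* B_*)(K - K_*)\big)$, giving exactly \eqref{eq:expexpcos2}.

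\textbf{Main obstacle.} The result is classical and each step is routine, so the only real subtlety — and the step I would be most careful about — is the bookkeeping between the two equivalent forms of the Lyapunov/stationary-covariance equation (the ``$A^\top X A$'' form versus the ``$A X A^\top$'' form) when pairing with $P_*$ under the trace, together with keeping the noise normalization $\mathbb{E}[\omega\omega^\top] = \sigma_\omega^2 I$ consistent so that $J_*(\Theta_*, Q, R) = \sigma_\omega^2\operatorname{Tr}(P_*)$ matches the convention used elsewhere in the paper. Since this is Lemma 3 of \cite{mania2019certainty}, I would ultimately just cite it, but the above is the self-contained derivation.
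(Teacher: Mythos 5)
The paper does not prove this lemma at all; it is imported verbatim as Lemma 3 of \cite{mania2019certainty}, so there is no in-paper argument to compare against. Your completion-of-squares derivation is the standard one (and is essentially the proof given in the cited reference): the pointwise identity $Q+K^\top RK+(A_*+B_*K)^\top P_*(A_*+B_*K)-P_*=(K-K_*)^\top(R+B_*^\top P_*B_*)(K-K_*)$ checks out directly from the DARE and the definition of $K_*$, and tracing it against the stationary covariance yields the claimed decomposition. One correction to a side remark, though: your parenthetical claim that the stationary covariance ``satisfies both forms'' of the Lyapunov equation is false in general. For a stable $A_K=A_*+B_*K$, the equations $X=A_KXA_K^\top+\sigma_\omega^2 I$ and $X=A_K^\top XA_K+\sigma_\omega^2 I$ each have a unique PSD solution, but these are different matrices unless $A_K$ has special structure; the state covariance is the solution of the first (controllability-Gramian) form only. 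The equation displayed in the lemma statement, with $A_K^\top\Sigma_{xx}A_K$, is a transposition slip inherited by the paper; the step your proof actually needs is $\operatorname{Tr}\bigl(P_*(A_K\Sigma_{xx}A_K^\top-\Sigma_{xx})\bigr)=-\sigma_\omega^2\operatorname{Tr}(P_*)=-J_*$, which uses the correct form and is valid. So the argument goes through once you drop the erroneous equivalence claim and simply invoke the defining equation $\Sigma_{xx}(K)=A_K\Sigma_{xx}(K)A_K^\top+\sigma_\omega^2 I$ of the stationary covariance.
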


The expression (\ref{eq:expexpcos2}) can be interpreted as the sum of the optimal average expected cost and an additional term that arises due to deviations from the optimal stationary policy $K_*(\Theta_*, Q, R)$. 

Let us consider that the policy for the next mode, $K_{i_{k+1}}$, is known in advance. This results in the minimum dwell time definition being solely a function of policy $K_{i_{k}}$. Even with this information, solving (\ref{eq:Optimization2}) to minimize $K_{i_{k}}$ is computationally burdensome. Furthermore, it is worth noting that the minimizer of $\tau_{i_k i_{k+1}}$ may not necessarily be the feedback gain obtained by solving DARE. However, such a policy would incur an average expected cost greater than that of the solution from DARE, as explained in Lemma \ref{lem:deviationLemma}. Therefore, deviating from the policy obtained by solving DARE does not necessarily reduce the accumulated expected cost. This rationale justifies adhering to the policy obtained by solving DARE. It makes us independent of the next epoch's policy and allows us to avoid the computational complexity of solving (\ref{eq:Optimization2}) while remaining reasonably close to the optimal policy.

The following program provides the minimum dwell time for the proposed alternative algorithm. 

\subsubsection{Approximate $M\bar{S}$ (Program 3)}\label{thm:minimum_average_dwell_known}
	Consider the switched system (\ref{eq:dyn_atttt}),  that at time $\mathfrak{T}_{i_{k+1}}$ switches from mode $i_k$ to the mode $i_{k+1}$, (i.e., $\sigma(\mathfrak{T}_{i_{k+1}}^-)=i_k$ and $\sigma(\mathfrak{T}_{i_{k+1}}^+)=i_{k+1}$). Let the corresponding dual solutions SDP solutions for these modes be $P_*({\Theta}^{i_k}_*, Q^{i_k}, R^{i_k})$, $P_*({\Theta}^{i_{k+1}}_*, Q^{i_{k+1}}, R^{i_{k+1}})$ and the designed control feedback for the mode $i_k$ be $K_*({\Theta}^{i_k}_*, Q^{i_k}, R^{i_k})$. Then minimum mode-dependent dwell-time for the epoch starting at $\mathfrak{T}_{i_{k}}$ is
     \begin{align}
     \tau_{*}^{{i_k}i_{k+1}}:=\max\bigg\{1, -\frac{\ln {\rho}_*^{{i_k}i_{k+1}}+\ln \mathcal{X}_*^{{i_k}i_{k+1}}-\ln \bar{\alpha}}{\ln \big(1-{\eta }_*^{i_k}\big)}\bigg\}\label{eq:tau_a_known} 
     \end{align}
     where
     \begin{align}
	\nonumber &\eta_*^{i_k} := \frac{\underline{\lambda}\big(H(\Theta_*^{i_k},Q^{i_k},R^{i_k})\big)}{\overline{\lambda}\big(P_*({\Theta}^{i_k}_*, Q^{i_k}, R^{i_k})\big)},\\
 \nonumber &\rho_*^{i_ki_{k+1}}:=\frac{\overline{\lambda}\big(P_*({\Theta}^{i_{k+1}}_*, Q^{i_{k+1}}, R^{i_{k+1}})\big)}{\underline{\lambda}\big(P_*({\Theta}^{i_k}_*, Q^{i_k}, R^{i_k})}\\
\nonumber & 
\mathcal{X}_*^{i_ki_{k+1}}:=\frac{\overline{\lambda}\big(P_*({\Theta}^{i_k}_*, Q^{i_k}, R^{i_k})\big)}{\underline{\lambda}\big(P_*({\Theta}^{i_{k+1}}_*, Q^{i_{k+1}}, R^{i_{k+1}})}\\
\nonumber &
H(\Theta_*^{i_k},Q^{i_k},R^{i_k})=Q^{i_k}+\\
&K_*^\top ({\Theta}^{i_k}_{*}, Q^{i_k}, R^{i_k})R^{i_k} K_*({\Theta}^{i_k}_{*}, Q^{i_k}, R^{i_k}).
\label{eq:HDefknown}
	\end{align}

 We denote the switch time sequence associated with the solution of Program 3 as $\bar{\mathcal{T}}_*=\{\mathfrak{T}^*_{i_0}, \ldots, \mathfrak{T}^*_{i_{n-1}}\}$ where $\mathfrak{T}^*_{i_0}=0$ by definition and $\mathfrak{T}^*_{i_{k+1}}=\mathfrak{T}^*_{i_{k+1}}+ \tau^{i_k i_{k+1}}_*$.



Considering equation (\ref{eq:tau_a_known}), the undesirable switching scenario is when the subsequent mode switch to $i_{k+1}$ necessitates prolonging the duration in current mode $i_k$ to effectively suppress the state explosion in the sense of Definition \ref{def:underControl}. We refer to this type of scenario as a malignant switch, as it hinders a quick transition. To provide a precise definition of such a switch, we introduce the following rigorous definition.

\begin{definition}{(Malignant and Benign Switch)} 
 \label{def:malignant} 
 When system at time $\mathfrak{T}_{i_{k+1}}$ switches from mode $i_k$ to $i_{k+1}$, i.e.,  $\sigma(\mathfrak{T}_{i_{k+1}}^-)=i_k$ and  $\sigma(\mathfrak{T}_{i_{k+1}}^+)=i_{k+1}$, we call the switch malignant if
 \begin{align*}
 \ln \rho_*^{i_ki_{k+1}}+\ln \mathcal{X}_*^{i_ki_{k+1}}>\ln \bar{\alpha}
 \end{align*}
 benign otherwise.
  \end{definition}

In equation (\ref{eq:tau_a_known}), assigning a value of one to $\mathcal{T}_*^{ij}$ indicates the occurrence of benign switching.

Now, our focus is to assess the performance loss resulting from the approximation of Program 1 by Program 3. The following corollary provides an upper bound on the performance loss.

  \begin{proposition} \label{coro: compare}
For any arbitrary switch sequence $\mathcal{I}$ we have the following properties: 

\begin{enumerate}

    \item The solution of Program 3, $C_3(\mathcal{I})$ has the following lower bound:

\begin{align}
   &C_3(\mathcal{I})\leq \max_{i,j\in \mathcal{M}}\big[-ns \tau^{*}_{ij} J_*(\Theta_*^i, Q^i, R^i)\big]:=\mathcal{U}_3
\end{align}

\item The optimality gap between Problem 1 and 3 $C_3(\mathcal{I})-C_1(\mathcal{I})$ has the following upper-bound
\begin{align}
   C_3(\mathcal{I})-C_1(\mathcal{I})\leq \mathcal{U}_3-\mathcal{L}_1.
\end{align}
\end{enumerate}
\end{proposition}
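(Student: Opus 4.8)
The plan is to evaluate $C_3(\mathcal{I})$ in closed form in terms of the per-epoch quantities that Program 3 prescribes, bound it epoch-by-epoch to produce $\mathcal{U}_3$, and then subtract from it the lower bound on $C_1(\mathcal{I})$ already supplied by Proposition \ref{cor:mintimecond}.

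First I would unwind the cost of the Program-3 strategy. For each revealed transition $i_k\to i_{k+1}$, Program 3 runs the DARE feedback $K_*(\Theta_*^{i_k},Q^{i_k},R^{i_k})$ throughout the $k$-th epoch, for the dwell time $\tau_*^{i_k i_{k+1}}$ of (\ref{eq:tau_a_known}). Applying the DARE gain realizes the optimal average expected cost; equivalently, Lemma \ref{lem:deviationLemma} with $K=K_*(\Theta_*^{i_k},Q^{i_k},R^{i_k})$ makes the trace correction vanish, so $J(\Theta_*^{i_k},K_*(\Theta_*^{i_k},Q^{i_k},R^{i_k}))=J_*(\Theta_*^{i_k},Q^{i_k},R^{i_k})$. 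Hence the cost attributed to the $k$-th epoch is $\tau_*^{i_k i_{k+1}}J_*(\Theta_*^{i_k},Q^{i_k},R^{i_k})$, and summing over the $ns$ epochs of $\mathcal{I}$ gives $C_3(\mathcal{I})=\sum_{k=0}^{ns-1}\tau_*^{i_k i_{k+1}}J_*(\Theta_*^{i_k},Q^{i_k},R^{i_k})$.

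For item 1, I would bound each summand by its worst case over mode pairs: since $\mathcal{M}$ is finite, $\tau_*^{i_k i_{k+1}}J_*(\Theta_*^{i_k},Q^{i_k},R^{i_k})\le \max_{i,j\in\mathcal{M}}\tau_*^{ij}J_*(\Theta_*^{i},Q^{i},R^{i})$ for every $k$, and the sum has $ns$ terms, so $C_3(\mathcal{I})\le ns\max_{i,j\in\mathcal{M}}\tau_*^{ij}J_*(\Theta_*^{i},Q^{i},R^{i})=\mathcal{U}_3$ after pulling the positive constant $ns$ inside the maximum. For item 2, I would first note that the Program-3 iterate is admissible for Program 1: with the DARE gain the operators $\bar\rho,\bar{\mathcal{X}},\bar\eta$ reduce to $\rho_*^{i_k i_{k+1}},\mathcal{X}_*^{i_k i_{k+1}},\eta_*^{i_k}$, and $\tau_*^{i_k i_{k+1}}\ge\bar\tau$ together with $\ln(1-\eta_*^{i_k})<0$ makes constraints (\ref{eq:conststab})–(\ref{eq:constvis}) hold (using that the DARE gain lies in $\mathcal{S}(\Theta_*^{i_k})$), so $C_1(\mathcal{I})\le C_3(\mathcal{I})$ and the gap is genuinely nonnegative. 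Then, chaining the bound of item 1 with the strict lower bound $C_1(\mathcal{I})>\mathcal{L}_1$ from Proposition \ref{cor:mintimecond}, $C_3(\mathcal{I})-C_1(\mathcal{I})\le \mathcal{U}_3-C_1(\mathcal{I})<\mathcal{U}_3-\mathcal{L}_1$, which yields the claim.

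The one delicate step, which I would treat most carefully, is the first: justifying that the strategy's cost $C_3(\mathcal{I})$ is exactly the sum of the per-epoch terms $\tau_*^{i_k i_{k+1}}J_*(\Theta_*^{i_k},Q^{i_k},R^{i_k})$, i.e., that over a finite epoch under a fixed stabilizing gain the accumulated cost is captured by (dwell time)$\times$(average expected cost) — the transient and noise-covariance contributions being precisely what the $\bar\beta$-type bookkeeping of Definition \ref{def:underControl} and the epoch-by-epoch accounting of Programs 1–3 are set up to absorb. Everything downstream is the elementary "bound each of $ns$ terms by the worst mode pair, then invoke the already-established lower bound on $C_1$" argument, with no further obstacle.
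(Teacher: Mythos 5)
Your argument matches the paper's own (very terse, one-sentence) proof: bound every epoch's cost by that of the worst mode pair — highest average cost and longest dwell time — multiply by the number of switches $ns$ to get $\mathcal{U}_3$, and subtract the lower bound $\mathcal{L}_1$ on $C_1(\mathcal{I})$ from Proposition \ref{cor:mintimecond}. Your version is more careful than the paper's (in particular the per-epoch decomposition of $C_3$ and the implicit correction of the spurious minus sign and the "lower bound" wording in the statement of $\mathcal{U}_3$), but it is the same approach.
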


Conclusively, under the illustrated information disclosure mechanism and when all mode parameters are known, the justifiable strategy when feasibility and computational complexity are crucial factors, should be applying the policy obtained by solving DARE for each mode, and constraining the duration between two subsequent switches to be minimum dwell-time computed by Program 3, called minimum mode-dependent dwell time.

Now, before delving into the problem in the $\bar{M}\bar{S}$ setup, it is crucial to review some foundational concepts.

 \section{Preliminaries for the $\bar{M}\bar{S}$ Setup}
\label{sec:prelim}
We review the notation of strong stability introduced in \cite{cohen2019learning}, which will be used in the context of switched system as well.

\subsection{$(\kappa,\gamma)-$ strong stability \cite{cohen2019learning}}

\begin{definition} 
\label{Def:StrongStability}
Consider a linear plant parameterized by $A$ and $B$. The closed-loop system matrix $A+BK$ is $(\kappa, \gamma)-$ strongly stable for $\kappa>0$ and $0<\gamma<1$ if there exists $H\succ 0$ and $L$ such that $A_*+B_*K=HL{H}^{-1}$ and
\begin{enumerate}
    \item $\|L\|\leq 1-\gamma$
     \item $\|H\| \|{H}^{-1}\|\leq \kappa$ .
\end{enumerate}
Furthermore, we say a sequence $K$ of control gains is $(\kappa, \gamma)-$strongly stabilizing for a plant $(A, B)$ if $A+BK$ is $(\kappa, \gamma)-$strongly stable.
\end{definition}
As proved in \cite{cohen2018online} any stabilizing policy $K$ is indeed $(\kappa, \gamma)$-stabilizing for some $\kappa$ and $\gamma$ and the definition thus does not introduce any additional assumption. However, it simplifies the analysis. For the sake of completeness we provided the proof in Appendix.

\subsection{System Identification}
\label{sec:ident}

The preliminaries provided by this section are a summary of results provided by \cite{cohen2019learning} and \cite{abbasi2011regret}. However, we slightly modify the representation, especially for construction of confidence ellipsoid.

\begin{assumption}
\label{asum:Side}
$ $
\begin{enumerate}
\item There are known constants $\alpha^i_0, \alpha^i_1,\vartheta^i, \nu^i>0$ for all $i\in \mathcal{M}$ such that, $\alpha^i_0I\leq Q^i\leq \alpha^i_1 I$, $\alpha^i_0I\leq R^i\leq \alpha^i_1 I$, $\|\Theta _{*}^i\|\leq \vartheta^i$ and $\nu^i$ is an upper-bound for average expected cost $J_*(\Theta_*^{i}, Q^{i}, R^{i})$, defined in Section \ref{sec:RegretDef}.
\item There is an initial stabilizing policy $K^i_0$ for all subsystems.
\end{enumerate}
\end{assumption}

It is important to mention that Assumption \ref{asum:Side} is a standard assumption in the literature and can be found in \cite{cohen2019learning}. The second part of the assumption can be omitted by utilizing the results from \cite{lale2022reinforcement} and the authors' previous work \cite{chekan2022joint}. However, for the sake of consistency, we will adhere to the assumption presented in \cite{cohen2019learning}.

Now, consider the linear switched system (\ref{eq:dyn_atttt}), which operates according to the sequence $\mathcal{I}$ for an arbitrary time $t$. Furthermore, let us denote $n_i(t)$ as the total number of time steps the system has spent in mode $i\in \mathcal{M}$ up to time $t$. Then, we can express (\ref{eq:dyn_atttt}) as follows:

\begin{align}
X_{n_i(t)}&=Z_{n_i(t)} \Theta^i_{*}+W_{n_i(t)} \label{compactdyn}
\end{align}
where $W_{n_i(t)}$ is the vertical concatenation of $\omega_{1}^\top,...,\omega_{n_i(t)}^\top$ and $X_{n_i(t)}$ and $Z_{n_i(t)}$ are matrices constructed by rows $x^\top_{1}, ...,x^\top_{n_i(t)}$ and ${z}^\top_{0}, ...,{z}^\top_{n_i(t)-1}$ respectively. 

Using the measured data, the $l^{2}$-regularized least square estimate can be obtained as
 \begin{align}
 \hat{\Theta}^i_{n_i(t)} &=\operatorname*{argmin}_{\Theta^i} e(\Theta^i)={V^i}_{n_i(t)}^{-1}\big(Z_{n_i(t)}^\top X_{n_i(t)}+ \lambda_i \Theta^i_0\big). \label{eq:LSE_Sol12} 
 \end{align}
where $e(\Theta^i)$ is defined by
 \begin{align}
\nonumber e(\Theta^i)&=\lambda_i \operatorname{Tr} \big((\Theta^i-\Theta^i_0)^\top(\Theta^i-\Theta^i_0)\big)+\\
&\sum _{s=0}^{n_i(t)-1} \operatorname{Tr} \big((x_{s+1}-{\Theta^i}^\top z_{s})(x_{s+1}-{\Theta^i}^\top z_{s})^\top)\big) \label{eq:LSE_pr}
 \end{align}
 where $\Theta^i_0$ is an initial estimate and $\lambda_i$ is a regularization parameter. Furthermore,
 \begin{align*}
V^i_{n_i(t)}=\lambda_i I + \sum_{s=0}^{n_i(t)-1} z_{s}z_{s}^\top=\lambda_i I +Z_{n_i(t)}^\top Z_{n_i(t)},
 \end{align*}
is covariance matrix. By assuming martingale difference properties for dynamics and sub-Gaussianity for the process noise, and assuming having access to an initial estimate $\Theta^i_0$ such that $\|\Theta^i_*-\Theta^i_0\|_*\leq \epsilon_i$ for some $\epsilon_i >0$ a high probability confidence set around true but unknown parameters of system is constructed as follows
 \begin{align}
\nonumber \mathcal{C}^i_t(\delta):=\big\{&\Theta^i \in \mathbb{R}^{(n_i+m_i)\times n_i}|\\
&\operatorname{Tr}\big((\Theta^i-\hat{\Theta}^i_{n_i(t)})^\top V^i_{n_i(t)}(\Theta^i-\hat{\Theta}^i_{n_i(t)})\big) \leq r^i_{n_i(t)}\big\} \label{eq:confSet1_tighterghfff}
\end{align}
where 
\begin{align}
    r^i_{n_i(t)}=\bigg( \sigma_{\omega} \sqrt{2n \log\frac{n\det(V^i_{n_i(s)}) }{\delta \det(\lambda_i I)}}+\sqrt{\lambda_i}  \epsilon_i \bigg)^2. \label{radius_centralEl_realTime}
\end{align}
It is guaranteed that the true parameter of the system $\Theta^i_*$ belongs to the confidence ellipsoid $\mathcal{C}^i_t(\delta)$ with probability at least $1-\delta$ where $0<\delta<1$. The regularization parameters $\lambda_i$'s are a user-defined parameter which needs to be specified in a way that the stability of system is guaranteed and the regret scales appropriately. We will specify it in the stability analysis section. It is noteworthy that $\mathcal{C}^i_t(\delta)$, constructed based on $n_i(t)$ data points, preserves information about the parameters of mode $i$ up to time $t$.

\subsection{Primal and Dual Relaxed-SDP Formulation}
While referring the readers to Appendix \ref{sec:primaldualSDP} for more detail, LQR control problem with known system parameters can be reformulated in Semi-definite Programming (SDP) in the form of primal and dual problems. Given the constructed confidence set (\ref{eq:confSet1_tighterghfff})-(\ref{radius_centralEl_realTime}), by applying the perturbation lemma (see Appendix \ref{perturbationLemma_Stab}), the primal and dual formulations are relaxed to account for the estimation error.

The relaxed primal SDP is given as follows:

\begin{align}
\begin{array}{rrclcl}
\displaystyle  \min & \multicolumn{1}{l}{\begin{pmatrix}
	Q^i& 0 \\
	0 & R^i
	\end{pmatrix}\bullet \Sigma}\\
\textrm{s.t.} & \Sigma_{xx}\geq {\hat{\Theta}_{n_i(t)}^{i^{\top}}} \Sigma\hat{\Theta}^i_{n_i(t)}+W-\mu^i_{n_i(t)}\big(\Sigma\bullet {V^i}^{-1}_{n_i(t)}\big)I,\\
&\Sigma\succ 0. 
\end{array}\label{eq:RelaxedSDP}
\end{align}

where minimization is with respect to 
\begin{align*}
    \Sigma=\begin{pmatrix}
\Sigma_{xx} & \Sigma_{xu} \\
\Sigma_{ux} & \Sigma_{uu}
\end{pmatrix}
\end{align*}
and $\mu_{n_i(t)}^i\geq r^i_{n_i(t)}+\sqrt{r^i_{n_i(t)}}\vartheta^i \|V^i_{n_i(t)}\|^{1/2}$. 
We denote the optimal solution of this program as the operator $\Sigma(\mathcal{C}_t^i, Q^i, R^i)$, which operates on $Q^i$, $R^i$, as well as the confidence ellipsoid $\mathcal{C}_t^i(\delta)$ that fully determines $\hat{\Theta}^i_{n_i(t)}$, $V^i_{n_i(t)}$, and $\mu^{n_i(t)}_i$. For brevity, we choose to use $\mathcal{C}_t^i$ rather than $\mathcal{C}_t^i (\delta)$ within the operator expression.

The controller extracted from solving the relaxed SDP (\ref{eq:RelaxedSDP}) is deterministic and linear in state ($u=K(\mathcal{C}_t^i, Q^i, R^i)x$) where 
\begin{align}
K(\mathcal{C}_t^i, Q^i, R^i)=\Sigma_{ux}(\mathcal{C}_t^i, Q^i, R^i)\Sigma_{xx}^{-1}(\mathcal{C}_t^i, Q^i, R^i). \label{Feedback}
\end{align} 

The designed control possesses strong stabilizing characteristics in accordance with the definition provided in Definition \ref{Def:StrongStability}. 

The relaxed primal problem (\ref{eq:RelaxedSDP}) is mainly used for control design purpose. However for the stability analysis, minimum mode-dependent dwell time design and regret bound analyses, we need its dual program which is given as follows:
\begin{align}
\begin{array}{rrclcl}
 \max &\multicolumn{1}{l}{P\bullet W}\\
\textrm{s.t.} & \begin{pmatrix}
Q^i-P& 0 \\
0 & R^i
\end{pmatrix}+\\
& \hat{\Theta}^i_{n_i(t)} P\; {\hat{\Theta}^{i^\top}_{n_i(t)}}\succeq \mu^i_{n_i(t)} \|P\|_*{{V}^{-1}_{n_i(t)}}\\
&P \succeq 0 
\end{array}\label{eq:RedSDP_DUAL}
\end{align}

In the context of optimization with respect to $P$, we denote the optimal solution of the relaxed dual problem (\ref{eq:RedSDP_DUAL}) as $P(\mathcal{C}^i_t, Q^i, R^i)$.

The derivation of the relaxed primal and dual formulations follows the analysis provided by \cite{cohen2019learning}. However, the fundamental distinction between our formulation and that presented in \cite{cohen2019learning} lies in how we define $\mu_{n_i(t)}^i$. This discrepancy arises from our choice not to normalize the confidence set. 
  
\subsection{Regret Definition}
\label{sec:RegretDef}

Let $\mathcal{A}_*$ represent the algorithm that possesses knowledge of the mode parameters. This algorithm devises a feedback policy by solving the DARE, calculates the mode-dependent dwell time associated with the DARE solution through the resolution of Problem 3, and follows the sequence $\mathcal{I}=\{i_0, i_1,..., i_{n}\}$. Under these conditions, the expression for the optimal accumulated cost is given by:

\begin{align}
	J_{\mathcal{A}_*}(\mathcal{I})= \sum_{k=0}^{n-1}\tau^*_{i_ki_{k+1}} J_*(\Theta_*^{i_k}, Q^{i_k}, R^{i_k})
 \label{eq:optcost} 
\end{align}

where $\tau^*_{i_ki_{k+1}}$ is given by (\ref{eq:tau_a_known}) and $J_*(\Theta_*^{i}, Q^{i}, R^{i})=P_*(\Theta_*^{i}, Q^{i}, R^{i}) \bullet W$ is optimal average expected cost for the mode $i\in \mathcal{M}$. To emphasize, the operator $P_*(\Theta_*^{i}, Q^{i}, R^{i})$ represents the solution of the dual SDP and should not be confused with $P(\mathcal{C}^i_t, Q^i, R^i)$, which is the solution of the relaxed dual SDP.

Now, let an arbitrary algorithm $\mathcal{A}$ address the $\bar{M}\bar{S}$ problem by following switching with epoch duration $\mathcal{T}$ and applying episodic policies $\mathcal{K}$. Having the corresponding accumulated cost denoted by $J_{\mathcal{A}} (\mathcal{I})$, we define the regret for the algorithm $\mathcal{A}$ as follows:

\begin{align}
	R_{\mathcal{A}}(\mathcal{I}) =J_{\mathcal{A}} (\mathcal{I})-J_{\mathcal{A}_*}(\mathcal{I}).\label{eq:Reg} 
\end{align}

\section {Problem Solution for $\bar{M}\bar{S}$ setup} \label{sec:solution}
\subsection{Overview}

In this section, we introduce our algorithm, which aims to design the feedback gain and the minimum mode-dependent dwell time for the switched system (\ref{eq:dyn_atttt}-\ref{eq:obs}) when there is only access to noisy measurements of the state. The objective is to ensure that the system switches according to an externally revealed step-by-step sequence $\mathcal{I} = \{i_{0},; i_{1},; ...i_{n}\}$ with minimal cost while maintaining control over the expected state norm, as defined in Definition \ref{def:underControl}. we first introduce specific symbols to streamline our results and analysis. We denote the estimated minimum mode-dependent dwell time of the $k^{th}$ switch between $i_k$ and $i_{k+1}$ by $\tau_{es}^{i_k i_{k+1}}$, where the $0^{th}$ switch starts at time $t=0$ with mode $i_0$. This symbol provides information about the index of the switch and both consecutive modes. The estimated time of occurrence of the $k^{th}$ switch, denoted as $\mathfrak{T}^{es}_{i_k}$, is defined as $\mathfrak{T}^{es}_{i_{k}}=\tau_{es}^{i_{k-1} i_{k}} + \mathfrak{T}^{es}_{i_{k-1}}$, with $\mathfrak{T}^{es}_{i_{0}}=0$. It is directly deduced that $\mathfrak{T}^{es}_{i_{k}}=\sum_{q=0}^{k}\tau_{es}^{i_{q-1}j_{q}}$. We apply an OFU-based principle through which we build a high probability confidence ellipsoid for systems parameters, using which we design our feedback policy through solving relaxed primal problem (\ref{eq:RelaxedSDP}) and specify the associated minimum mode-dependent dwell time using solution of relaxed dual SDP (\ref{eq:RedSDP_DUAL}). The way we apply OFU principle is slightly different than of OSLO algorithm proposed by \cite{cohen2018online} in that we apply an off-policy version of it where we exploit the information learned by $\mathfrak{T}^{es}_{i_{k}}$ for an epoch occurring in interval $[\mathfrak{T}^{es}_{i_{k}},\; \mathfrak{T}^{es}_{i_{k+1}})$. The system commits to apply the designed feedback policy throughout the epoch whose length is indeed the designed minimum dwell time that is sufficient condition to guarantee state norm boundedness in the sense of Definition \ref{def:underControl}.

\subsection{Main Steps of Algorithm \ref{alg:OSL3}}

Given initial parameter estimates for all the subsystems denoted by $\Theta_0^i$ such that $|\Theta_0^i-\Theta_*^i|\leq \epsilon_i$ (as defined in Appendix \ref{sec:initializationa}), Algorithm \ref{alg:OSL3} designs minimum-dwell time and feedback gain to be implemented. For such initialization, we propose a provably efficient strategy, Algorithm \ref{alg:IExp} adopted from \cite{cohen2018online} with slight modifications on the duration of the warm-up phase. We leave the supporting arguments and theorem to Appendix \ref{sec:initializationa} for brevity. 

The process for building confidence set is performed by following the process specified in Section \ref{sec:ident}. For $k^{th}$ epoch starting at time $\mathfrak{T}^{es}_{i_{k}}$ we choose $\lambda_{i_k}$ such that
\begin{align}
\lambda_{i_k}\geq \frac{4\bar{\mu}_{i_k}\nu_{i_k}}{\alpha_0^{i_k}\sigma_{\omega}^2}. \label{eq:cond1}
\end{align}
where
\begin{align}
&{\mu}^{i_k}_{n_{i_k}(\mathfrak{T}^{es}_{i_{k}})}=r^{i_k}_{n_{i_k}(\mathfrak{T}^{es}_{i_{k}})}+ \label{eq:mubarAlg}\\
\nonumber &\sqrt{r^{i_k}_{n_{i_k}(\mathfrak{T}^{es}_{i_{k}})}}\vartheta_{i_k} (\lambda_{i_k}+\|\sum_{q=1}^{n_{i_k}(\mathfrak{T}^{es}_{i_{k}})} z^{i_k}_q {z^{i_k}_q}^\top\|)^{0.5}. 
\end{align}

The algorithm employed is of the off-policy type, utilizing information learned by the beginning of each epoch to compute feedback gain and the minimum mode-dependent dwell time to be applied during the epoch.

To clarify, let $\Pi_{t}$ represent the set of all confidence ellipsoids updated by time $t$, defined as follows:

\begin{align}
    \Pi_{t}=\{\mathcal{C}_{t}^j(\delta)|\; j=1,...,|\mathcal{M}|\} \label{eq:confsetssets}
\end{align}

Now, consider the moment $\mathfrak{T}^{es}{i{k}}$ when the $k^{th}$ switch from mode $i_k$ to $i_{k+1}$ occurs. During this epoch, the fixed applied control is designed based on the most recently updated corresponding confidence ellipsoid $\mathcal{C}^{i_k}_{\mathfrak{T}^{es}_{i_{k}}}(\delta)\in \Pi_{{\mathfrak{T}^{es}_{i_{k}}}}$ and through solving the relaxed primal problem (\ref{eq:RelaxedSDP}). Furthermore, given the revealed next mode $i_{k+1}$, the minimum mode-dependent dwell time $\tau_{es}^{i_k i_{k+1}}$ is computed using $\mathcal{C}^{i_k}_{\mathfrak{T}^{es}_{i_{k}}}(\delta)$ and $\mathcal{C}^{i_{k+1}}_{\mathfrak{T}^{es}_{i_{k+1}}}(\delta)$. Theorem \ref{thm:minimum_average_dwell_revised} illustrates how to compute this quantity.
By actuating with the designed fixed policy, the algorithm maintains exploration throughout. When the epoch concludes at $\mathfrak{T}^{es}_{i_{k+1}}$, the set of confidence ellipsoids is updated with $\mathcal{C}^{i_k}_{\mathfrak{T}^{es}_{i_{k+1}}}(\delta)$. Consequently, the set of recently updated confidence ellipsoids is refreshed as $\Pi_{\mathfrak{T}^{es}_{i_{k+1}}}$, where $\mathcal{C}^j_{\mathfrak{T}^{es}_{i_{k+1}}}(\delta)=\mathcal{C}^j_{\mathfrak{T}^{es}_{i_{k}}}(\delta)$ for all $j\neq i$.


\begin{algorithm} 
	\caption{Safe and Fast Switching Algorithm (SFSA)} \label{alg:OSL3}
	\begin{algorithmic}[1]
		\STATE \textbf{Inputs:} $\bar{\alpha}\in (0,\; 1)$, $\alpha^i_0,$$\,\sigma_{\omega}^2,$ $\,\vartheta^i, $$\,\nu^i>0,$ $\,\delta\in(0,\;1),$ $\,Q^i\,, R^i,\, \Theta^i_{0}$ $\forall i=1,..., |\mathcal{M}|$ 
        \STATE Set $\mathfrak{T}^{es}_{i_{0}}=0$
		\STATE {Initialize the set $\Pi_{\mathfrak{T}^{es}_{i_{0}}}$ using the initial estimates $\Theta^i_{0}$} 
       \STATE Initialize first mode $i_0=i$
       \FOR {$k=0,1,...$}
         \STATE Receive the next mode $i_{k+1}$
         \STATE Use $\Pi_{\mathfrak{T}^{es}_{i_{k}}}$ and solve relaxed primal SDP (\ref{eq:RelaxedSDP}) for $\Sigma (\mathcal{C}_{\mathfrak{T}^{es}_{i_{k}}}^{i_k},Q^{i_k},R^{i_k})$ and compute feedback $K(\mathcal{C}_{\mathfrak{T}^{es}_{i_{k}}}^{i_k},Q^{i_k},R^{i_k})$ by (\ref{Feedback})
         \STATE Use $\Pi_{\mathfrak{T}^{es}_{i_{k}}}$ and solve dual Relaxed-SDP  (\ref{eq:RedSDP_DUAL}) for both $P(\mathcal{C}_{\mathfrak{T}^{es}_{i_{k}}}^{i_k},Q^{i_k},R^{i_k})$ and $P(\mathcal{C}_{\mathfrak{T}^{es}_{i_{k}}}^{i_{k+1}},Q^{i_{k+1}},R^{i_{k+1}})$
         \STATE Compute $\tau_{es}^{i_k i_{k+1}}$ by Theorem \ref{thm:minimum_average_dwell_revised} \STATE Set $\mathfrak{T}^{es}_{i_{k+1}}=\mathfrak{T}^{es}_{i_{k}}+\tau_{es}^{i_k i_{k+1}} $
         \FOR {$\mathfrak{T}^{es}_{i_{k}}\leq t <\mathfrak{T}^{es}_{i_{k+1}}$}
         \STATE Actuate in the mode $i_k$ by executing control $u_t^{i_k}=K(\mathcal{C}_{\mathfrak{T}^{es}_{i_{k}}}^{i_k},Q^{i_k},R^{i_k})x_t$
		\STATE Observe new state $x_{t+1}$, Save $(z^{i_k}_t,x_{t+1})$ into database of mode $i_k$ 
	  \ENDFOR
      \STATE update confidence ellipsoid of the mode $i_k$, $\mathcal{C}_{\mathfrak{T}^{es}_{i_{k+1}}}^{i_k}$ by following (\ref{compactdyn}- \ref{radius_centralEl_realTime}) using regularization parameter obtained by (\ref{eq:cond1}-\ref{eq:mubarAlg})
      \STATE update $\Pi_{\mathfrak{T}^{es}_{i_{k}}}$ and name it $\Pi_{\mathfrak{T}^{es}_{i_{k+1}}}$
      \ENDFOR
	\end{algorithmic}
\end{algorithm}

\section{Analysis and Guarantees} 
\label{sec:guarantee}

\subsection{Stability Analysis} 
\label{sec:stabReg}
In this section, we first demonstrate that the learning-based policy designed is $(\kappa, \gamma)$-strongly stabilizing within an epoch. The result we present are similar to those of \cite{cohen2019learning}, with the rationale detailed in its proof (see Appendix \ref{sec:initializationa}). The second part of this section introduces Theorem \ref{thm:minimum_average_dwell_revised}, which provides a formula for online computation of the minimum mode-dependent dwell-time. This computation is used in Algorithm \ref{alg:OSL3} and ensures control of state norm growth in the sense of Definition 1.

\subsubsection{Stability in an Epoch}

The following Theorem gives the stability properties of the designed OFU-based control applied on a mode $i$ within an epoch (i.e., interval between two subsequent switches in $\mathcal{I}$).

\begin{theorem} \label{Stability_thm17}

Suppose that at time $t$, a transition to an arbitrary mode $i$ takes place, and let $\mathcal{C}^i_{t}(\delta)\in \Pi_{t}$ indicate the confidence ellipsoid formed using $n_i(t)$ data points. Then defining $\kappa_i=\sqrt{2\nu_i /\alpha^i_0\sigma_{\omega}^2}$ and $\gamma_i=1/2\kappa_i^2$, it is evident that the policy $K(\mathcal{C}_t^i, Q^i, R^i)$ obtained through solving relaxed primal SDP exhibits $(\kappa_i,\gamma_i)$-strong stabilizing characteristics as per Definition \ref{Def:StrongStability}.
\end{theorem}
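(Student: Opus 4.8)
Within an epoch the closed loop is a single LTI mode $i$ driven by the static gain $K:=K(\mathcal{C}^i_t,Q^i,R^i)=\Sigma_{ux}\Sigma_{xx}^{-1}$ read off the optimal solution $\Sigma$ of the relaxed primal SDP (\ref{eq:RelaxedSDP}), so the argument is a per-mode specialization of the one in \cite{cohen2019learning}; I work on the event $\{\Theta^i_*\in\mathcal{C}^i_t(\delta)\}$, which holds with probability at least $1-\delta$. On this event the perturbation lemma (Appendix \ref{perturbationLemma_Stab}) shows that the stationary covariance $\Sigma_*(\Theta^i_*)$ of the true DARE-optimal closed loop is feasible for (\ref{eq:RelaxedSDP}): the slack term $\mu^i_{n_i(t)}(\Sigma\bullet {V^i_{n_i(t)}}^{-1})I$, with $\mu^i_{n_i(t)}\ge r^i_{n_i(t)}+\sqrt{r^i_{n_i(t)}}\,\vartheta^i\|V^i_{n_i(t)}\|^{1/2}$, is exactly calibrated to dominate the discrepancy between the quadratic forms $(\hat\Theta^i_{n_i(t)})^\top\Sigma\,\hat\Theta^i_{n_i(t)}$ and $(\Theta^i_*)^\top\Sigma\,\Theta^i_*$. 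Hence the optimal value of (\ref{eq:RelaxedSDP}) is at most $J_*(\Theta^i_*,Q^i,R^i)\le\nu_i$ by Assumption \ref{asum:Side}, and since $Q^i,R^i\succeq\alpha_0^iI$ every feasible $\Sigma$ has objective at least $\alpha_0^i\operatorname{Tr}(\Sigma)\ge\alpha_0^i\|\Sigma\|$, so the optimal $\Sigma$ obeys $\|\Sigma\|\le\nu_i/\alpha_0^i=\kappa_i^2\sigma_\omega^2/2$.

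Next I would convert the feasibility constraint into a Lyapunov inequality for the \emph{true} closed loop. Because $K=\Sigma_{ux}\Sigma_{xx}^{-1}$ and $\Sigma\succeq0$ with $\Sigma_{xx}\succ0$, the Schur complement gives $\Sigma\succeq\begin{pmatrix}I\\K\end{pmatrix}\Sigma_{xx}\begin{pmatrix}I&K^\top\end{pmatrix}$, hence $(\hat\Theta^i_{n_i(t)})^\top\Sigma\,\hat\Theta^i_{n_i(t)}\succeq(\hat A^i+\hat B^iK)\Sigma_{xx}(\hat A^i+\hat B^iK)^\top$. Substituting this into the SDP constraint, converting $\hat\Theta^i_{n_i(t)}$ back to $\Theta^i_*$ once more through the perturbation lemma, and using $\operatorname{Tr}({V^i_{n_i(t)}}^{-1})\le n/\lambda_i$ together with $\|\Sigma\|\le\kappa_i^2\sigma_\omega^2/2$ and the regularization rule (\ref{eq:cond1}) to push the aggregate perturbation term below $\tfrac12 W$, one arrives at
\begin{align*}
\Sigma_{xx}\succeq(A^i_*+B^i_*K)\Sigma_{xx}(A^i_*+B^i_*K)^\top+\tfrac12\sigma_\omega^2 I,
\end{align*}
so in particular $\underline{\lambda}(\Sigma_{xx})\ge\tfrac12\sigma_\omega^2$. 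Showing that replacing $\hat\Theta^i_{n_i(t)}$ by $\Theta^i_*$ and discarding the $\mu^i_{n_i(t)}$ slack together cost no more than $\tfrac12 W$ is the only delicate point: it is exactly where the calibrated relaxation parameter $\mu^i_{n_i(t)}$, the condition (\ref{eq:cond1}) on $\lambda_i$, and the a priori cost bound $\nu_i$ all enter, and it parallels the corresponding estimate in \cite{cohen2019learning}; everything else is bookkeeping in the PSD cone.

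Finally, set $H:=\Sigma_{xx}^{1/2}$ and $L:=H^{-1}(A^i_*+B^i_*K)H$, so that $A^i_*+B^i_*K=HLH^{-1}$. Conjugating the displayed inequality by $\Sigma_{xx}^{-1/2}$ yields $I\succeq LL^\top+\tfrac12\sigma_\omega^2\Sigma_{xx}^{-1}$, and since $\Sigma_{xx}^{-1}\succeq\|\Sigma_{xx}\|^{-1}I\succeq(2/\kappa_i^2\sigma_\omega^2)I$ this gives $LL^\top\preceq(1-\kappa_i^{-2})I$, i.e. $\|L\|\le\sqrt{1-\kappa_i^{-2}}\le1-\tfrac12\kappa_i^{-2}=1-\gamma_i$. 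Moreover $\|H\|\,\|H^{-1}\|=\sqrt{\|\Sigma_{xx}\|/\underline{\lambda}(\Sigma_{xx})}\le\sqrt{(\kappa_i^2\sigma_\omega^2/2)/(\sigma_\omega^2/2)}=\kappa_i$. Therefore $A^i_*+B^i_*K=HLH^{-1}$ with $\|L\|\le1-\gamma_i$ and $\|H\|\|H^{-1}\|\le\kappa_i$, which is precisely $(\kappa_i,\gamma_i)$-strong stability in the sense of Definition \ref{Def:StrongStability}, completing the proof.
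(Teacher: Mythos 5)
Your proof is correct, but it runs on the primal side of the relaxed SDP whereas the paper's proof runs on the dual side. You take the optimal primal solution $\Sigma$, bound $\|\Sigma\|\le\nu_i/\alpha_0^i=\kappa_i^2\sigma_\omega^2/2$ from the objective, use the Schur complement $\Sigma\succeq\binom{I}{K}\Sigma_{xx}(I\;\;K^\top)$ to turn the feasibility constraint into the covariance-type Lyapunov inequality $\Sigma_{xx}\succeq(A^i_*+B^i_*K)\Sigma_{xx}(A^i_*+B^i_*K)^\top+\tfrac12\sigma_\omega^2I$, and set $H=\Sigma_{xx}^{1/2}$. The paper instead takes the dual solution $P(\mathcal{C}^i_t,Q^i,R^i)$, combines the fixed-point identity of Lemma \ref{matix_policy} with the perturbation Lemma \ref{matix_perturbation_Bound1} and the calibration (\ref{eq:stabCon2}) to obtain the cost-to-go Lyapunov inequality $P\succeq\tfrac{\alpha_0^i}{2}(I+K^\top K)+(A^i_*+B^i_*K)^\top P(A^i_*+B^i_*K)$, and sets $H=P^{1/2}$; the two inequalities are transposes of one another and produce the identical constants $(\kappa_i,\gamma_i)$. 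The step you flag as delicate — absorbing the two perturbation contributions into $\tfrac12 W$ via the choice (\ref{eq:cond1}) of $\lambda_i$ — is exactly the counterpart of the paper's bound $\bar\mu_i\|P\|_*V^{-1}\preceq\tfrac{\alpha_0^i}{4}I$, so nothing is missing there. What the dual route buys the paper is that it simultaneously yields $\|K(\mathcal{C}^i_t,Q^i,R^i)\|\le\kappa_i$ and produces the matrix $\mathcal{H}(\mathcal{C}^i_t(\delta))$ and the quantity $\eta(\mathcal{C}^i_t(\delta))$ that are reused verbatim in Theorem \ref{thm:minimum_average_dwell_revised} and the regret decomposition; your primal route does not give these for free (though $\|K\|\le\kappa_i$ would follow from $K\Sigma_{xx}K^\top\preceq\Sigma_{uu}$ and $\underline{\lambda}(\Sigma_{xx})\ge\sigma_\omega^2/2$). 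What your route buys is directness: $\Sigma$ is the object the algorithm actually computes, the norm bound on it comes straight from the objective value, and no appeal to the dual program or to Lemma \ref{matix_policy} is needed. For the theorem as literally stated under Definition \ref{Def:StrongStability}, your argument is complete.
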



\subsubsection{Minimum Mode-Dependent Dwell-Time}

\begin{theorem}\label{thm:minimum_average_dwell_revised}
Consider the switched system described by (\ref{eq:dyn_atttt}), which undergoes a mode transition to mode $i$ at time $t$ and subsequently identifies the next mode index $j$ for the upcoming switch. Let $K(\mathcal{C}_t^{i}, Q^{i}, R^{i})$ denote the control feedback designed for mode $i$, and let $P(\mathcal{C}_t^{i}, Q^{i}, R^{i})$ and $P(\mathcal{C}_t^{j}, Q^{j}, R^{j})$ represent the corresponding solutions of relaxed dual SDPs. Then, the growth in state norms during the transition from mode $i$ to $j$ is under control in the sense of Definition \ref{def:underControl}, provided that the duration of actuation in mode $i$ lasts for at least the minimum dwell-time of $\tau^{ij}_{es}$ given by:

     \begin{align}
   \nonumber  &\tau^{ij}_{es}:=\max\\
    &\bigg\{1, -\frac{\ln \rho (\mathcal{C}^i_{t}(\delta), \mathcal{C}^j_{t}(\delta))+\ln \mathcal{X} (\mathcal{C}^i_{t}(\delta), \mathcal{C}^j_{t}(\delta))-\ln \bar{\alpha}}{\ln \big(1-\eta \big(\mathcal{C}^i_{t}(\delta)\big)\big)}\bigg\}\label{eq:tau_a_fast} 
     \end{align}
     where
     \begin{align}
	\eta \big(\mathcal{C}^i_{t}(\delta)\big):= \frac{\underline{\lambda}\big(\mathcal{H}(\mathcal{C}^i_{t}(\delta)\big)}{\overline{\lambda}\big(P(\mathcal{C}^i_{t}, Q^i, R^i)\big)}.\label{eq:EtaDef2}
	\end{align}
\begin{align}
\nonumber  \rho (\mathcal{C}^i_{t}(\delta), \mathcal{C}^j_{t}(\delta)):=\frac{\overline{\lambda}\big(P(\mathcal{C}^j_{t}, Q^j, R^j)\big)}{\underline{\lambda}\big(P(\mathcal{C}^i_{t}, Q^i, R^i)\big)},\\
\mathcal{X} (\mathcal{C}^i_{t}(\delta), \mathcal{C}^j_{t}(\delta)):=\frac{\overline{\lambda}\big(P(\mathcal{C}^i_{t}, Q^i, R^i)\big)}{\underline{\lambda}\big(P(\mathcal{C}^j_{t}, Q^j, R^j)\big)}
\label{eq:rhoDef}
\end{align} 
and
 	\begin{align}
	\nonumber \mathcal{H}(&\mathcal{C}^i_{t}(\delta))=Q^i+K^\top (\mathcal{C}^i_{t}, Q^i, R^i)R^i K(\mathcal{C}^i_{t}, Q^i, R^i)\\
 \nonumber &-2\mu^{i}_{n_i(t)}\|P(\mathcal{C}^i_{t}, Q^i, R^i)\|_*\times\\
&\begin{pmatrix} I \\ K(\mathcal{C}^i_{t}, Q^i, R^i) \end{pmatrix}{V^i}^{-1}_{n_i(t)}\begin{pmatrix} I \\ K(\mathcal{C}^i_{t}, Q^i, R^i) \end{pmatrix}^\top.
 \label{eq:HDef3p}
	\end{align}
\end{theorem}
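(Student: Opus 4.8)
The plan is to establish the state-norm contraction bound \eqref{eq:stategrowthpp} by tracking the one-step evolution of the Lyapunov-type quantity $x_t^\top P(\mathcal{C}^i_t, Q^i, R^i) x_t$ inside the epoch in mode $i$, then handle the mode change at the switch instant $\mathfrak{T}_{i_{k+1}}$ via a change-of-Lyapunov-function argument. Write $P_i := P(\mathcal{C}^i_t, Q^i, R^i)$, $K_i := K(\mathcal{C}^i_t, Q^i, R^i)$, and $\tilde A_i := A^i_* + B^i_* K_i$. First I would use the relaxed dual feasibility \eqref{eq:RedSDP_DUAL}, which (after pre/post-multiplying the matrix inequality by $(I\ K_i)$ and using $z_t = (I\ K_i)^\top x_t$ under the fixed policy) yields the pointwise inequality $\tilde A_i^\top P_i \tilde A_i \preceq P_i - \mathcal{H}(\mathcal{C}^i_t(\delta))$, with $\mathcal{H}$ as in \eqref{eq:HDef3p} — the $\mu^i_{n_i(t)}\|P_i\|_* V^{-1}$ correction term being exactly what absorbs the estimation error between $\hat\Theta^i_{n_i(t)}$ and $\Theta^i_*$. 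Taking conditional expectations and using Assumption \ref{Assumption 1} to kill the cross term and replace $\mathbb{E}[\omega_{t+1}\omega_{t+1}^\top\mid\mathcal{F}_t]$ by $W = \bar\sigma_\omega^2 I$, this gives
\begin{align}
\mathbb{E}[x_{t+1}^\top P_i x_{t+1}\mid \mathcal{F}_t] \le x_t^\top (P_i - \mathcal{H}(\mathcal{C}^i_t(\delta))) x_t + P_i\bullet W. \nonumber
\end{align}

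Next I would convert this into a geometric decay of $\mathbb{E}[x_t^\top P_i x_t]$. Since $\mathcal{H}(\mathcal{C}^i_t(\delta)) \succeq \underline\lambda(\mathcal{H}(\mathcal{C}^i_t(\delta))) I \succeq \eta(\mathcal{C}^i_t(\delta))\, P_i$ by definition \eqref{eq:EtaDef2} (using $\overline\lambda(P_i) I \succeq P_i$), we get $\mathbb{E}[x_{t+1}^\top P_i x_{t+1}\mid\mathcal{F}_t] \le (1-\eta(\mathcal{C}^i_t(\delta)))\, x_t^\top P_i x_t + P_i\bullet W$. Iterating this from $\mathfrak{T}_{i_k}^+$ to $\mathfrak{T}_{i_{k+1}}^-$ (an interval of $\tau^{ij}_{es}$ steps) and towering expectations down to $\mathcal{F}_{\mathfrak{T}_{i_k}-1}$, the accumulated noise terms form a geometric series bounded by $(P_i\bullet W)/\eta(\mathcal{C}^i_t(\delta))$, so
\begin{align}
\mathbb{E}[x_{\mathfrak{T}_{i_{k+1}}}^\top P_{i} x_{\mathfrak{T}_{i_{k+1}}}\mid \mathcal{F}_{\mathfrak{T}_{i_k}-1}] \le (1-\eta(\mathcal{C}^i_t(\delta)))^{\tau^{ij}_{es}} \mathbb{E}[x_{\mathfrak{T}_{i_k}^+}^\top P_i x_{\mathfrak{T}_{i_k}^+}\mid \mathcal{F}_{\mathfrak{T}_{i_k}-1}] + \frac{P_i\bullet W}{\eta(\mathcal{C}^i_t(\delta))}. \nonumber
\end{align}
(Here I am implicitly using that the decay per step uses the \emph{same} frozen ellipsoid $\mathcal{C}^i_t$ throughout the epoch, which is exactly how Algorithm \ref{alg:OSL3} is run — the off-policy/off-ellipsoid commitment.) A subtlety worth flagging: $x_{\mathfrak{T}_{i_k}^+}$ versus $x_{\mathfrak{T}_{i_k}}$ and the indexing of which $\omega$'s are included must be lined up carefully with Definition \ref{def:underControl}; I would absorb the single extra noise injection at the switch into the constant.

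Finally I would pass from $P_i$-weighted norms back to Euclidean norms to connect the two ends. On the left, $x^\top x \le x^\top P_j x / \underline\lambda(P_j)$ where $P_j := P(\mathcal{C}^j_t, Q^j, R^j)$ is the Lyapunov matrix of the \emph{next} mode (since the relevant quantity after the switch is measured in mode $j$); but before we switch, the state is still in mode $i$'s coordinates, so I would chain $x^\top x \le x^\top P_i x/\underline\lambda(P_i)$ at $\mathfrak{T}_{i_{k+1}}^-$, and $x^\top P_j x \le \overline\lambda(P_j)\, x^\top x$ to re-enter mode $j$ — this is where the operators $\rho = \overline\lambda(P_j)/\underline\lambda(P_i)$ and $\mathcal{X} = \overline\lambda(P_i)/\underline\lambda(P_j)$ enter, mirroring Proposition 1 and Program 3. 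On the right, $x^\top P_i x \le \overline\lambda(P_i) x^\top x$. Putting the pieces together, the multiplicative factor on the initial state becomes $\rho(\mathcal{C}^i_t(\delta),\mathcal{C}^j_t(\delta))\,\overline\lambda(P_i)/\underline\lambda(P_i)\cdot(1-\eta(\mathcal{C}^i_t(\delta)))^{\tau^{ij}_{es}}$, and the defining inequality $\tau^{ij}_{es} \ge -(\ln\rho + \ln\mathcal{X} - \ln\bar\alpha)/\ln(1-\eta)$ from \eqref{eq:tau_a_fast} is precisely what forces this factor to be $\le \bar\alpha$; the leftover additive noise term, after being scaled by $\overline\lambda(P_j)/(\underline\lambda(P_i)\eta)$ and bounded using $P_i\bullet W \le \overline\lambda(P_i)\,\mathrm{tr}(W) = n\bar\sigma_\omega^2\overline\lambda(P_i)$ together with the uniform bounds from Assumption \ref{asum:Side}, is what gets packaged into $\bar\beta\sigma_\omega^2$ as in Lemma \ref{lem:betaDef}. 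The main obstacle I anticipate is not any single inequality but getting the dual-feasibility-to-pointwise-Lyapunov step clean — i.e., verifying that the $\mu^i_{n_i(t)}\|P_i\|_* V^{-1}$ term in \eqref{eq:RedSDP_DUAL} genuinely dominates the perturbation $(\Theta^i_* - \hat\Theta^i_{n_i(t)})$ contributions so that $\mathcal{H}(\mathcal{C}^i_t(\delta)) \succeq 0$ (or at least has controlled negative part) on the event $\Theta^i_* \in \mathcal{C}^i_t(\delta)$ — since if $\mathcal{H}$ fails to be positive semidefinite, $\eta$ can be non-positive and the whole contraction collapses; this is handled by the perturbation lemma and the choice of $\mu^i_{n_i(t)}$, and I would lean on the cited relaxation machinery of \cite{cohen2019learning} for it.
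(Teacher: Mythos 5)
Your proposal is correct and follows essentially the same route as the paper's proof: the Lyapunov function $x_t^\top P(\mathcal{C}^i_t,Q^i,R^i)x_t$ built from the relaxed dual SDP solution, the one-step contraction $\tilde A_i^\top P_i \tilde A_i \preceq P_i - \mathcal{H}$ obtained from dual feasibility plus the perturbation lemma, iteration over the epoch with a geometric-series bound on the noise, the change of Lyapunov function at the switch introducing $\rho$ and $\mathcal{X}$, and the dwell-time inequality forcing the multiplicative factor below $\bar\alpha$. The positivity concern you flag for $\mathcal{H}$ (hence $0<\eta<1$) is resolved in the paper exactly as you anticipate, via the bound $\mu^i_{n_i(t)}\|P_i\|_* V^{-1}_{n_i(t)} \preceq \tfrac{\alpha^i_0}{4}I$ from the choice of $\lambda_i$.
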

Note that in the proof of Theorem \ref{thm:minimum_average_dwell_revised} it is shown that $0< \eta \big(\mathcal{C}^i_{t}(\delta)\big)< 1$. 


The following lemma gives an upper-bound for the state norm for the switched system.

\begin{lemma}\label{thm:res_sequential_stability_resp}
Algorithm \ref{alg:OSL3} guarantees
\begin{align}
&\mathbb{E}[x^\top_{\mathfrak{T}^{{es}^+}_{i_{k+1}}}x_{\mathfrak{T}^{{es}^+}_{i_{k+1}}}| \mathcal{F}_{\tau_{q}-1}]\leq \mathbb{E}[x^\top_{\mathfrak{T}^{{es}^+}_{i_{k}}}x_{\mathfrak{T}^{{es}^+}_{i_{k}}}|\mathcal{F}_{\tau_{q}-1} ]+\kappa_*^4 \frac{\alpha_1^*}{\alpha_0^*} \sigma_{\omega}^2 \label{eq:moghayese}
\end{align}
 for the state norm right after two subsequent switches occurring at $\mathfrak{T}^{{es}}_{i_{k}}$ and $\mathfrak{T}^{{es}}_{i_{k+1}}$.

 Furthermore, for any $\mathfrak{T}^{{es}}_{i_{k}}\leq t<\mathfrak{T}^{{es}}_{i_{k+1}}$
\begin{align}
	\mathbb{E}[x_t^\top x_t| \mathcal{F}_{t-1}]\leq \bar{\alpha}^k\kappa_*^2 x_0^\top x_0+ \frac{2-\bar{\alpha}}{1-\bar{\alpha}} \frac{\mathcal{X}_*^2}{\eta_*}\sigma_{\omega}^2:=\bar{X}_k \label{eq:overalbound}
\end{align}
 or overlay for any $t>0$
\begin{align}
	\mathbb{E}[x_t^\top x_t| \mathcal{F}_{t-1}]\leq \kappa_*^2 x_0^\top x_0+ \frac{2-\bar{\alpha}}{1-\bar{\alpha}} \frac{\mathcal{X}_*^2}{\eta_*}\sigma_{\omega}^2:=\tilde{X} \label{eq:overalbound2}
\end{align}
 where $\eta_*=1/\kappa_*^2$ and $\mathcal{X}_*=\kappa_*^2 {\alpha_1^*}/{\alpha_0^*}$.
\end{lemma}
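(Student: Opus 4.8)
The plan is to chain together three ingredients: (i) the per-epoch strong stability of the designed policy from Theorem \ref{Stability_thm17}, which gives $\kappa_* := \max_i \kappa_i$ and $\gamma_* := \min_i \gamma_i$; (ii) the dwell-time inequality built into $\tau^{ij}_{es}$ via Theorem \ref{thm:minimum_average_dwell_revised}, which was designed precisely so that the contraction $\bar\alpha$ of Definition \ref{def:underControl} holds across each switch; and (iii) a discrete-time Grönwall/geometric-series unrolling over epochs. First I would establish the single-epoch state recursion: within the $k$-th epoch the closed loop is $x_{t+1} = (A^{i_k}_* + B^{i_k}_* K)x_t + \omega_{t+1}$ with $K$ the SDP controller, and since $A^{i_k}_* + B^{i_k}_* K = H L H^{-1}$ with $\|L\|\le 1-\gamma_{i_k}$, $\|H\|\|H^{-1}\|\le\kappa_{i_k}$, iterating and taking conditional expectation (using Assumption \ref{Assumption 1}, so cross terms with $\omega$ vanish and $\mathbb{E}[\omega\omega^\top]=\bar\sigma_\omega^2 I$) yields a bound of the form $\mathbb{E}[x_{t}^\top x_t \mid \mathcal F] \le \kappa_*^2 (1-\gamma_*)^{2(t-\mathfrak{T}^{es}_{i_k})}\mathbb{E}[x_{\mathfrak{T}^{es\,+}_{i_k}}^\top x_{\mathfrak{T}^{es\,+}_{i_k}}\mid \mathcal F] + \kappa_*^2 \sum_{s\ge 0}(1-\gamma_*)^{2s}\bar\sigma_\omega^2$, and the geometric sum is at most $\kappa_*^2 \bar\sigma_\omega^2/(2\gamma_*-\gamma_*^2)$. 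The $\kappa_*^4 \frac{\alpha_1^*}{\alpha_0^*}$ factor in \eqref{eq:moghayese} comes from relating $\gamma_*$ to the cost-matrix conditioning through $\kappa_i = \sqrt{2\nu_i/\alpha_0^i\sigma_\omega^2}$, $\gamma_i = 1/2\kappa_i^2$, exactly as in Theorem \ref{Stability_thm17}; tracking these constants carefully gives the additive term $\kappa_*^4 (\alpha_1^*/\alpha_0^*)\sigma_\omega^2$.

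For \eqref{eq:moghayese} itself, the key point is that evaluating the single-epoch bound at $t = \mathfrak{T}^{es}_{i_{k+1}}$ would a priori still carry the contraction factor $(1-\gamma_*)^{2\tau_{es}^{i_ki_{k+1}}}$, but I would instead invoke Theorem \ref{thm:minimum_average_dwell_revised}: because $\tau_{es}^{i_k i_{k+1}} \ge -\big(\ln\rho + \ln\mathcal X - \ln\bar\alpha\big)/\ln(1-\eta)$, the $P$-weighted state energy contracts by $\bar\alpha$ across the switch, and converting back from the $P(\mathcal C^{i}_t,Q^i,R^i)$-norm to the Euclidean norm costs the ratio of extreme eigenvalues of the dual SDP solutions, which are themselves controlled by $\kappa_*$ and $\alpha_1^*/\alpha_0^*$ (via the perturbation lemma bounding $P(\mathcal C^i_t,Q^i,R^i)$ between multiples of $I$). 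Choosing $\bar\alpha$ and bounding the leftover noise contribution then produces the clean form in \eqref{eq:moghayese} with the contraction absorbed and only the stated additive constant surviving.

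Finally, \eqref{eq:overalbound} and \eqref{eq:overalbound2} follow by unrolling \eqref{eq:moghayese} over $k$ epochs: iterating $E_{k+1}\le \bar\alpha E_k + \beta\sigma_\omega^2$ (with $E_k := \mathbb{E}[x^\top_{\mathfrak{T}^{es\,+}_{i_k}}x_{\mathfrak{T}^{es\,+}_{i_k}}\mid\mathcal F]$ and $\beta$ the constant from \eqref{eq:moghayese}, though to get the displayed $\frac{2-\bar\alpha}{1-\bar\alpha}\frac{\mathcal X_*^2}{\eta_*}$ one uses the $P$-norm version of the recursion directly) gives $E_k \le \bar\alpha^k E_0 + \frac{1}{1-\bar\alpha}\beta\sigma_\omega^2$; combining with the within-epoch bound from step (i), where the extra term $\mathcal X_*^2/\eta_*$ accounts for the worst-case transient inside an epoch before the dwell time elapses, yields \eqref{eq:overalbound}, and dropping $\bar\alpha^k \le 1$ gives the uniform bound \eqref{eq:overalbound2}. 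The main obstacle I anticipate is bookkeeping the constants consistently across two norms (Euclidean vs.\ the $P$-weighted energy used by Theorem \ref{thm:minimum_average_dwell_revised}) so that the contraction $\bar\alpha$ and the eigenvalue ratios collapse to exactly $\kappa_*^4 \alpha_1^*/\alpha_0^*$ and $\frac{2-\bar\alpha}{1-\bar\alpha}\frac{\mathcal X_*^2}{\eta_*}$ rather than some looser surrogate; the conceptual structure (strong stability $+$ dwell-time contraction $+$ geometric unrolling) is otherwise routine.
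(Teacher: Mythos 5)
Your proposal is correct and follows essentially the same route as the paper's proof: a within-epoch Lyapunov contraction in the $P(\mathcal{C}^i_t,Q^i,R^i)$-weighted norm with rate $\eta$, the $\bar{\alpha}$-contraction across each switch enforced by the dwell-time design of Theorem \ref{thm:minimum_average_dwell_revised}, Rayleigh--Ritz conversions producing the constants $\eta_*=1/\kappa_*^2$ and $\mathcal{X}_*=\kappa_*^2\alpha_1^*/\alpha_0^*$ (via $\overline{\lambda}(P)\le \nu/\sigma_\omega^2$ and $\underline{\lambda}(P)\ge \alpha_0/2$), and a geometric unrolling over epochs. Your first-pass within-epoch bound via the $HLH^{-1}$ strong-stability factorization is a cosmetic variant of the paper's $P$-norm recursion, and you correctly note that the exact constants $\kappa_*^4\alpha_1^*/\alpha_0^*$ and $\frac{2-\bar{\alpha}}{1-\bar{\alpha}}\frac{\mathcal{X}_*^2}{\eta_*}$ require working with the dual-SDP eigenvalue bounds, which is precisely what the paper does.
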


\begin{corollary}
The policy employed in Algorithm 1 falls within the category of candidate policies as defined by (\ref{eq:policyClass}), i.e., $\kappa_i\leq \kappa^i_{c}$ for all $i\in |\mathcal{M}|$. Now, noting that $\gamma^i_{c}<1$, consequently, upon comparing (\ref{eq:moghayese}) from Lemma \ref{thm:res_sequential_stability_resp} with (\ref{eq:stategrowthpp}), it becomes evident that Algorithm 1 ensures the containment of state norm within bounds, aligning with the conditions specified in Definition \ref{def:underControl}. 
\end{corollary}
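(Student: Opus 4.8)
The plan is to split the statement into two parts and establish them in order: (i) the feedback gain that Algorithm~\ref{alg:OSL3} applies during any epoch lies in the candidate class $\mathcal{S}(\Theta_*^i)$ of (\ref{eq:policyClass}); and (ii) this membership, together with Lemma~\ref{lem:betaDef}, the per-epoch recursion (\ref{eq:moghayese}) of Lemma~\ref{thm:res_sequential_stability_resp}, and the dwell-time rule of Theorem~\ref{thm:minimum_average_dwell_revised}, yields an inequality of the exact form (\ref{eq:stategrowthpp}) with $\bar{\beta}$ given by (\ref{eq:betaDef}), i.e.\ the $(\bar{\alpha},\bar{\beta})$-under-control property of Definition~\ref{def:underControl}.

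For part (i) I would start from Theorem~\ref{Stability_thm17}: in an epoch spent in mode $i$ the gain $K(\mathcal{C}^i_t,Q^i,R^i)$ is $(\kappa_i,\gamma_i)$-strongly stabilizing with $\kappa_i=\sqrt{2\nu_i/\alpha_0^i\sigma_\omega^2}$ and $\gamma_i=1/2\kappa_i^2$. By Definition~\ref{Def:StrongStability} one can write $A_*^i+B_*^iK(\mathcal{C}^i_t,Q^i,R^i)=HLH^{-1}$ with $\|L\|\le 1-\gamma_i$, so that $\rho\big(A_*^i+B_*^iK(\mathcal{C}^i_t,Q^i,R^i)\big)=\rho(L)\le\|L\|\le 1-\gamma_i$; picking the class parameter $\gamma_c^i\in(0,\gamma_i)$ (possible because $\gamma_i<1$, as $\kappa_i\ge 1$) then gives the spectral-radius requirement of (\ref{eq:policyClass}) and in particular $\gamma_c^i<1$. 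The remaining requirement $\|K(\mathcal{C}^i_t,Q^i,R^i)\|\le\kappa_c^i$ I would obtain from the structure of the relaxed primal SDP (\ref{eq:RelaxedSDP}): by the perturbation lemma, on the event $\Theta_*^i\in\mathcal{C}^i_t(\delta)$ the true stationary covariance $\Sigma(\Theta_*^i,Q^i,R^i)$ is feasible for (\ref{eq:RelaxedSDP}), so the optimal value of (\ref{eq:RelaxedSDP}) is at most $J_*(\Theta_*^i,Q^i,R^i)\le\nu_i$; combining this with $\Sigma_{xx}\succeq W=\sigma_\omega^2 I$ and the Schur-complement inequality $\Sigma_{uu}\succeq K(\mathcal{C}^i_t,Q^i,R^i)\,\Sigma_{xx}\,K(\mathcal{C}^i_t,Q^i,R^i)^\top$ gives $\alpha_0^i\sigma_\omega^2\|K(\mathcal{C}^i_t,Q^i,R^i)\|^2\le R^i\bullet\Sigma_{uu}\le\nu_i$, hence $\|K(\mathcal{C}^i_t,Q^i,R^i)\|\le\sqrt{\nu_i/\alpha_0^i\sigma_\omega^2}\le\kappa_i\le\kappa_c^i$, the last step being legitimate because $\kappa_c^i$ may be fixed a priori no smaller than $\kappa_i$ (all of $\nu_i,\alpha_0^i,\sigma_\omega^2$ are known by Assumption~\ref{asum:Side}). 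Thus $K(\mathcal{C}^i_t,Q^i,R^i)\in\mathcal{S}(\Theta_*^i)$ for every $i$.

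For part (ii), once all epoch gains lie in their candidate classes Lemma~\ref{lem:betaDef} applies: Algorithm~\ref{alg:OSL3} keeps the system in mode $i_k$ for $\tau_{es}^{i_k i_{k+1}}$ steps, and by the perturbation/confidence-set analysis behind Theorem~\ref{thm:minimum_average_dwell_revised} this dwell time dominates the true threshold $\tau_{i_k i_{k+1}}$ required in Lemma~\ref{lem:betaDef}, so its hypothesis is met with $\bar{\beta}$ exactly (\ref{eq:betaDef}). It then remains to check that the noise coefficient $\kappa_*^4\,\alpha_1^*/\alpha_0^*$ appearing in (\ref{eq:moghayese}) does not exceed this $\bar{\beta}$: from part (i), $\kappa_*=\max_i\kappa_i\le\max_i\kappa_c^i=\kappa_c^*$, so $\kappa_*^4\le{\kappa_c^*}^4$, and the elementary bounds $(1+{\kappa_c^*}^2)^2\ge 1$, $1/{\gamma_c^*}^2>1$ (using $\gamma_c^*<1$), and $\alpha_1^*/\alpha_0^*\le 4{\alpha_1^*}^2/{\alpha_0^*}^2$ (using $\alpha_0^*\le\alpha_1^*$) together give $\kappa_*^4\,\alpha_1^*/\alpha_0^*\le\bar{\beta}$. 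Substituting this into (\ref{eq:moghayese}) and invoking the $\bar{\alpha}$-contraction of the state-weighted Lyapunov function that the dwell-time rule of Theorem~\ref{thm:minimum_average_dwell_revised} enforces reproduces an inequality of the form (\ref{eq:stategrowthpp}) with this $\bar{\beta}$, which is the assertion of Definition~\ref{def:underControl}.

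I expect the main obstacle to be the uniform norm bound $\|K(\mathcal{C}^i_t,Q^i,R^i)\|\le\kappa_i$ in part (i): one must make sure the optimal value of (\ref{eq:RelaxedSDP}) stays bounded by $\nu_i$ throughout the run, not only for the nominal plant, which is precisely where the choice (\ref{eq:cond1}) of $\lambda_{i_k}$, the definition (\ref{eq:mubarAlg}) of $\mu^{i_k}$, and the perturbation lemma come in (these are the same ingredients underpinning the proof of Theorem~\ref{Stability_thm17}). The rest reduces to direct appeals to Theorems~\ref{Stability_thm17} and~\ref{thm:minimum_average_dwell_revised}, Lemmas~\ref{lem:betaDef} and~\ref{thm:res_sequential_stability_resp}, and the scalar inequalities above.
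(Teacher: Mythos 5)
Your proposal is correct and matches what the paper intends: the corollary is stated without a separate proof, and the ingredients you assemble --- the bound $\|K(\mathcal{C}^i_t,Q^i,R^i)\|\le\kappa_i$ and $(\kappa_i,\gamma_i)$-strong stability from Theorem~\ref{Stability_thm17}, the $\bar{\alpha}$ (i.e.\ $\mathcal{Y}$) contraction enforced by the dwell-time rule inside the proof of Lemma~\ref{thm:res_sequential_stability_resp}, and the elementary comparison $\kappa_*^4\,\alpha_1^*/\alpha_0^*\le\bar{\beta}$ with $\bar{\beta}$ from (\ref{eq:betaDef}) --- are exactly the ones the paper relies on. The only cosmetic difference is that you re-derive $\|K\|\le\kappa_i$ from the primal relaxed SDP, where the feasibility constraint only yields $\Sigma_{xx}\succeq W-\mu^i_{n_i(t)}\big(\Sigma\bullet {V^i}^{-1}_{n_i(t)}\big)I$ rather than $\Sigma_{xx}\succeq W$ (so a factor of $1/2$ must be tracked, which is where the $2$ in $\kappa_i=\sqrt{2\nu_i/\alpha_0^i\sigma_\omega^2}$ comes from), whereas the paper obtains the same bound from the dual solution via $P\succeq\frac{\alpha_0^i}{2}K^\top K$ and $\|P\|\le\nu_i/\sigma_\omega^2$ in the proof of Theorem~\ref{Stability_thm17}.
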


The following theorem establishes an upper bound on the estimation error of the dwell time. This bound is highly useful for regret bound analysis of the algorithm. 

In the $M\bar{S}$ setting, the problem of minimizing costs while switching according to $\mathcal{I}$ can be framed as minimizing the expected time, as the objective is to spend as little time as possible in each mode while ensuring the state remains bounded, as defined in Definition 1. In this context, the expected cumulative time required to achieve the goal is expressed as:
\begin{align}
T_*=\sum_{k=0}^{n-1}\tau_*^{i_ki_{k+1}}.
\end{align}
where $\tau_*^{i_ki_{k+1}}$'s  are determined by (\ref{eq:tau_a_known}). 

In the $\bar{M}\bar{S}$ setting, the duration, denoted as
\begin{align}
T_{es}=\sum_{k=0}^{n-1}\tau_{es}^{i_ki_{k+1}},
\end{align}
is not explicitly known. Theorem \ref{Thm:dwellTimeError} also offers insights into the order of the total expected time for switching according to $\mathcal{I}$ under the mode index announcement strategy, particularly when ensuring the boundedness of the state is a concern.

\begin{theorem} (Dwell-Time Estimation Error Upper-Bound)
\label{Thm:dwellTimeError}
Suppose that at time $t$, the system initiates actuation in mode $i$ and immediately receives to mode $j$. In this scenario, the following statements hold.

$(a)$ The error of the estimated minimum mode-dependant dwell time $\tau_{es}^{ij}$  has the following upper-bound
\begin{align}
    \tau_{es}^{ij}-\tau_*^{ij}\leq \frac{\ln \bigg(1+ \bar{\beta}_0\overline{\lambda}\big(\chi^i_{t}\big)\bigg)}{-\ln \big(1-\kappa_i^{-2}\big)} 
\end{align}
where 
\begin{align}
  \nonumber \chi_{t}^i:=&\frac{2\kappa_i^2\mu^i_{t}}{\gamma}\|P(\mathcal{C}_t^i, Q^i, R^i)\|_* \bigg\|\begin{pmatrix} I \\ K(\mathcal{C}_t^i, Q^i, R^i) \end{pmatrix}\times\\
  &{V_{n_i(t)}^{i^{-1}}}\begin{pmatrix} I \\ K(\mathcal{C}_t^i, Q^i, R^i) \end{pmatrix}^\top\bigg\|I ,\label{eq:chiddef3}
\end{align}

\begin{align}
    \alpha_0^i\geq \bar{\alpha}^i_0:=\max_{j}\frac{\sqrt{2\nu_i \nu_j}}{\sigma_{\omega}^2},\;\;  \bar{\beta}^{ij}_0=\frac{2\nu_j}{\sigma_{\omega}^2 {\bar{\alpha}^{i^{2}}_0}}.
\end{align}
$(b)$ The expected cumulative time $T_{es}$ required to accomplish switching according to $\mathcal{I}$ with state norm growth under control, as defined in Definition 1, is of $\mathcal{O}(|\mathcal{M}|\sqrt{n})$, where $n$ is the total number of switches.
\end{theorem}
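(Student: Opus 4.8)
\textbf{Proof Plan for Theorem \ref{Thm:dwellTimeError}.}

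The plan is to compare, term by term, the online dwell-time formula \eqref{eq:tau_a_fast} with the benchmark formula \eqref{eq:tau_a_known} and to show that each quantity built from the \emph{relaxed} dual SDP is a controlled perturbation of the corresponding quantity built from the \emph{exact} dual SDP. For part $(a)$, I would first recall from Theorem \ref{Stability_thm17} that the online policy is $(\kappa_i,\gamma_i)$-strongly stabilizing with $\kappa_i=\sqrt{2\nu_i/\alpha_0^i\sigma_\omega^2}$ and $\gamma_i=1/2\kappa_i^2$, so that $1-\eta(\mathcal{C}^i_t(\delta))\le 1-\kappa_i^{-2}$ and similarly $1-\eta_*^{i}\le 1-\kappa_i^{-2}$; this pins down the denominators in both formulas and explains the $-\ln(1-\kappa_i^{-2})$ appearing in the claimed bound. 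Next, using the perturbation lemma (Appendix \ref{perturbationLemma_Stab}) together with the confidence-set guarantee \eqref{eq:confSet1_tighterghfff}--\eqref{radius_centralEl_realTime}, I would bound $\|P(\mathcal{C}^i_t,Q^i,R^i)-P_*(\Theta_*^i,Q^i,R^i)\|$ and the analogous deviation of $\mathcal{H}$ from $H$ by quantities proportional to $\mu^i_{n_i(t)}\|P\|_*$ times the $V^{i^{-1}}_{n_i(t)}$-weighted feature norm — precisely the object $\chi_t^i$ defined in \eqref{eq:chiddef3}. Translating these matrix perturbations into perturbations of $\overline\lambda(\cdot)$ and $\underline\lambda(\cdot)$ (via Weyl's inequality) gives $\ln\rho(\mathcal{C}^i_t,\mathcal{C}^j_t)+\ln\mathcal{X}(\mathcal{C}^i_t,\mathcal{C}^j_t)\le \ln\rho_*^{ij}+\ln\mathcal{X}_*^{ij}+\ln(1+\bar\beta_0\overline\lambda(\chi_t^i))$, where the constant $\bar\beta_0$ absorbs the uniform bounds $\nu_i,\nu_j$ and $\bar\alpha_0^i$ on the eigenvalues of the exact SDP solutions. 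Dividing by the common denominator $-\ln(1-\kappa_i^{-2})$ and taking the max with $1$ (which only helps the inequality) yields the stated bound on $\tau_{es}^{ij}-\tau_*^{ij}$.

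For part $(b)$, I would sum the bound from $(a)$ over the $n$ switches of $\mathcal{I}$: $T_{es}-T_*\le \sum_{k=0}^{n-1}\frac{\ln(1+\bar\beta_0\overline\lambda(\chi^{i_k}_{\mathfrak{T}^{es}_{i_k}}))}{-\ln(1-\kappa_{i_k}^{-2})}$. Since $\ln(1+x)\le x$, this is at most a constant times $\sum_{k}\overline\lambda(\chi^{i_k}_{\mathfrak{T}^{es}_{i_k}})$, and up to constants $\overline\lambda(\chi^{i_k}_\cdot)$ is proportional to $\mu^{i_k}_{n_{i_k}}\,\|(I;K)^\top V^{i_k^{-1}}_{n_{i_k}}(I;K)\|$. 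Grouping the sum by mode $i\in\mathcal{M}$ and using that within each mode the feature norms telescope through the log-determinant potential — i.e. $\sum_{\text{visits to }i}\|z\|^2_{V^{-1}}\le \log\det(V^i_{n_i}/\lambda_i I)=\mathcal{O}(\log n_i)$, the standard self-normalized/elliptical-potential argument, combined with the growth of $\mu^i\sim r^i\sim \log n_i$ — gives a per-mode contribution of order $\sqrt{n_i}$ after a Cauchy--Schwarz step (trading the $\sum\|z\|^2_{V^{-1}}$ bound against $n_i$). Summing over the $|\mathcal{M}|$ modes and using $\sum_i\sqrt{n_i}\le\sqrt{|\mathcal{M}|}\sqrt{\sum_i n_i}$ (or simply $\sum_i\sqrt{n_i}\le|\mathcal{M}|\sqrt{n}$) yields $T_{es}=T_*+\mathcal{O}(|\mathcal{M}|\sqrt{n})$, and since $T_*=\mathcal{O}(n)$ this also gives the order of $T_{es}$ itself.

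The main obstacle I anticipate is the bookkeeping in part $(a)$: carefully propagating the confidence-radius error $r^i_{n_i(t)}$ through the perturbation lemma into an eigenvalue perturbation of \emph{both} $\rho$ and $\mathcal{X}$ simultaneously, and checking that the combined multiplicative error really collapses into the single clean factor $1+\bar\beta_0\overline\lambda(\chi^i_t)$ with the stated $\bar\beta_0$ — this requires using the uniform lower bounds $\underline\lambda(P_*)\ge c$ and upper bounds $\overline\lambda(P_*)\le\nu_i$ in exactly the right places, and verifying $0<\eta(\mathcal{C}^i_t(\delta))<1$ so that the logarithm in the denominator is well defined and negative (as noted after Theorem \ref{thm:minimum_average_dwell_revised}). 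A secondary technical point in part $(b)$ is ensuring the choice of $\lambda_{i_k}$ in \eqref{eq:cond1}--\eqref{eq:mubarAlg} is compatible with both the elliptical-potential bound and the state-norm bound \eqref{eq:overalbound2}, so that $\|z_t\|$ stays uniformly bounded and the potential argument applies with the claimed logarithmic growth; this is where the earlier stability lemmas feed in.
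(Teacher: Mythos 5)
Your plan for part $(a)$ takes a genuinely different route from the paper, and that route has a concrete gap. The paper does \emph{not} compare the two dwell-time formulas term by term: it observes that the worst case for the error occurs at a benign switch, where $\tau_*^{ij}=1$, and therefore simply bounds $\tau_{es}^{ij}$ itself. It then applies Lemma \ref{lem:tightnessSol} ($P(\hat\Theta^i)\preceq P_*^i\preceq P(\hat\Theta^i)+\chi^i$), Weyl's inequality, the condition-number bound $\overline\lambda(P(\hat\Theta^i))/\underline\lambda(P(\hat\Theta^i))\le\kappa_i^2$ together with $\underline\lambda(P(\hat\Theta^i))\ge\alpha_0^i/2$ to get $\ln\rho+\ln\mathcal{X}\le\sum_{s=i,j}\ln\bigl(\kappa_s^2+\tfrac{2}{\alpha_0^s}\overline\lambda(\chi^s)\bigr)$, and uses $\eta(\mathcal{C}^i_t(\delta))\ge\kappa_i^{-2}$ only to lower-bound the single denominator of $\tau_{es}^{ij}$. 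No comparison between $-\ln(1-\eta(\mathcal{C}))$ and $-\ln(1-\eta_*)$ is ever required. Your approach, by contrast, needs exactly that comparison: you have $\tau_{es}=N_{es}/D_{es}$ and $\tau_*=N_*/D_*$ with \emph{different} denominators, and "dividing by the common denominator $-\ln(1-\kappa_i^{-2})$" does not yield $\tau_{es}-\tau_*\le(N_{es}-N_*)/(-\ln(1-\kappa_i^{-2}))$. To subtract $\tau_*$ cleanly after bounding $N_{es}\le N_*+\Delta$ you would need $D_{es}\ge D_*$, i.e.\ $\eta(\mathcal{C}^i_t(\delta))\ge\eta_*^i$; that is not established and is not obviously true, since $\mathcal{H}(\mathcal{C}^i_t(\delta))$ carries the negative correction $-2\mu^i\|P\|_*(\cdot)V^{-1}(\cdot)^\top$ and is built from $K(\mathcal{C}^i_t)\ne K_*$, so $\underline\lambda(\mathcal{H}(\mathcal{C}))$ versus $\underline\lambda(H_*)$ has no a priori ordering. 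A secondary issue: passing from Lemma \ref{lem:tightnessSol} to your multiplicative factor requires lower-bounding $\underline\lambda(P(\mathcal{C}^i))$ by $\underline\lambda(P_*^i)-\overline\lambda(\chi^i)$, which produces a factor $\bigl(1-\overline\lambda(\chi^i)/\underline\lambda(P_*^i)\bigr)^{-1}$ rather than $1+\bar\beta_0\overline\lambda(\chi^i)$ and is only valid when $\overline\lambda(\chi^i)$ is small; the paper's use of the uniform bounds $\kappa_i^2$ and $\alpha_0^i/2$ on the \emph{estimated} solution avoids this.

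For part $(b)$ your outline (sum the per-switch bound, $\ln(1+x)\le x$ or the paper's $\ln(1+x)\le x/\sqrt{1+x}$, group by mode, elliptical-potential/log-determinant control of the $V^{-1}$-weighted feature norms, Cauchy--Schwarz to trade against $n_i$, then $\sum_i\sqrt{n_i}\le|\mathcal{M}|\sqrt{n}$) is consistent with, and in fact more explicit than, what the paper provides — the paper states that part $(a)$ "can also be applied to prove part $(b)$" and defers the potential-function bookkeeping to the regret analysis. Your caveat about the compatibility of $\lambda_{i_k}$ in \eqref{eq:cond1}--\eqref{eq:mubarAlg} with the potential argument is exactly the right thing to check there. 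So part $(b)$ is fine as a plan; the substantive repair needed is in part $(a)$, where you should either adopt the paper's reduction $\tau_{es}^{ij}-\tau_*^{ij}\le\tau_{es}^{ij}$ or supply the missing ordering of the two denominators.
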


\subsection{Regret Bound Analysis} \label{sec:regban}

The regret, defined in (\ref{eq:Reg}), can be partitioned into two components, denoted as $R_{\mathcal{A}}^1$ and $R_{\mathcal{A}}^2$, which are outlined as follows:
\begin{align}
	&\nonumber R_{\mathcal{A}}(\mathcal{I}) =\overbrace{\mathbb {E} [\sum _{k=0}^{n-1}\sum _{t=\mathfrak{T}^{{es}}_{i_{k}}}^{\mathfrak{T}^{{*}}_{i_{k+1}}-1} (c_{t}^{\sigma(t)}-J_*(\Theta_*^{i_k}, Q^{i_k}, R^{i_k}))]}^{R_{\mathcal{A}}^1}\\
 &+\underbrace{\mathbb {E} [\sum _{k=0}^{ns-1}\sum _{t=\mathfrak{T}^{{*}}_{i_{k+1}}}^{\mathfrak{T}^{{es}}_{i_{k+1}}-1} c_{t}^{\sigma(t)}]}_{R_{\mathcal{A}}^2} \label{eq:RegretSwitchDec}
\end{align}
Here, $\sigma(t)=i_k$ for $\mathfrak{T}^{{es}}_{i_{k}}\leq t<\mathfrak{T}^{{es}}_{i_{k+1}}$.

The component $R_{\mathcal{A}}^1$ characterizes the regret arising from the sub-optimality of the feedback policy generated by the algorithm. In contrast, the term $R_{\mathcal{A}}^2$ gauges the regret attributed to estimation error of the minimum mode-dependent dwell time. To establish an upper bound for $R^1_{\mathcal{A}}$, we adopt a decomposition approach similar to that found in \cite{cohen2019learning}, while also incorporating considerations for controlling state norm growth. For the latter term, we rely on the insights derived from Theorem \ref{Thm:dwellTimeError}.

To establish an upper bound for the regret, we introduce the event $\mathcal{E}_{\mathfrak{T}^{{es}}_{i_{k}}}(\mathcal{I})$ which is defined as follows:

\begin{align}
\nonumber \mathcal{E}_{\mathfrak{T}^{{es}}_{i_{k}}}(\mathcal{I})=\big\{&\Pi_{\mathfrak{T}^{{es}}_{i_{k}}}\; \textit{and}\\
&E[x_t^\top x_t|\mathcal{F}_{t-1}]\leq \bar{X}_k\;\; \textit{for} \;\; \mathfrak{T}^{{es}}_{i_{k}}\leq t<\mathfrak{T}^{{es}}_{i_{k+1}}\big\}.
\end{align}

This event contains the collection of confidence ellipsoids updated until time $\mathfrak{T}^{{es}}_{i_{k}}$ using which we designed the feedback gain and the minimum mode-dependent dwell time, subsequently upper-bounding the expected state norm, which is also included in the event. The value of $\bar{X}_k$ is determined according to (\ref{eq:overalbound}) during the execution of the mission following the prescribed sequence $\mathcal{I}$.

\begin{theorem}
\label{thm:RegretBound}
 Suppose Assumptions \ref{Assumption 1} and \ref{asum:Side} holds, then Algorithm \ref{alg:OSL3} achieves the expected regret of $\mathcal{O}(|\mathcal{M}|\sqrt{ns})$ with probability at least $1-\delta$.
\end{theorem}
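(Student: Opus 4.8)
\textbf{Proof proposal for Theorem \ref{thm:RegretBound}.}

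The plan is to bound the two regret components $R_{\mathcal{A}}^1$ and $R_{\mathcal{A}}^2$ from the decomposition \eqref{eq:RegretSwitchDec} separately, and then combine them. Throughout I would work on the event $\mathcal{E}_{\mathfrak{T}^{{es}}_{i_{k}}}(\mathcal{I})$, which holds with probability at least $1-\delta$ by the confidence-set guarantee of Section \ref{sec:ident} together with the state-norm bound \eqref{eq:overalbound} from Lemma \ref{thm:res_sequential_stability_resp}; the contribution of the complement is controlled by the uniform state bound $\tilde X$ in \eqref{eq:overalbound2} and is absorbed into lower-order terms. For $R_{\mathcal{A}}^2$ — the regret due to overshooting the optimal dwell times — I would use part (a) of Theorem \ref{Thm:dwellTimeError}: each epoch is overrun by at most $\frac{\ln(1+\bar\beta_0\overline{\lambda}(\chi^i_t))}{-\ln(1-\kappa_i^{-2})}$ steps, during each of which the per-step cost is $\mathcal{O}(\tilde X)$ by strong stability (Theorem \ref{Stability_thm17}). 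Summing the logarithmic terms over the $n_i(t)$-indexed epochs for each mode and using the standard potential/"elliptical" argument (the sum of $\log\det V^i_{n_i}$ increments telescopes), the per-mode contribution is $\mathcal{O}(\sqrt{ns})$, giving $R_{\mathcal{A}}^2 = \mathcal{O}(|\mathcal{M}|\sqrt{ns})$; this is exactly the content flagged by part (b) of Theorem \ref{Thm:dwellTimeError}, namely $T_{es}-T_* = \mathcal{O}(|\mathcal{M}|\sqrt{ns})$.

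For $R_{\mathcal{A}}^1$ I would follow the decomposition strategy of \cite{cohen2019learning}, adapted to the switched setting. Inside a single epoch of mode $i_k$ with fixed gain $K(\mathcal{C}^{i_k}_{\mathfrak{T}^{es}_{i_k}},Q^{i_k},R^{i_k})$, write the instantaneous cost gap $c_t^{i_k}-J_*(\Theta_*^{i_k},Q^{i_k},R^{i_k})$ using the Bellman/Lyapunov identity for the relaxed dual solution $P(\mathcal{C}^{i_k}_t,Q^{i_k},R^{i_k})$. This produces (i) a telescoping term $\mathbb{E}[x_t^\top P x_t - x_{t+1}^\top P x_{t+1}]$ that sums across the epoch to a boundary term controlled by $\bar X_k$ and $\overline{\lambda}(P)$; (ii) a martingale-difference term that vanishes in expectation by Assumption \ref{Assumption 1}; (iii) an optimism/underestimation term bounded by the SDP relaxation margin $\mu^{i_k}_{n_{i_k}(t)}\|P\|_*\,(\Sigma\bullet {V^{i_k}}^{-1})$, which is where the $\mathcal{O}(\sqrt{\cdot})$ rate enters. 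Summing (iii) over an epoch gives the per-step exploration cost times epoch length; summing over all epochs of mode $i$ and invoking the elliptical potential lemma ($\sum_t z_t^\top V_t^{-1} z_t = \mathcal{O}(\log\det V_{n_i})$) together with $\log\det V_{n_i} = \mathcal{O}(n_i\log n_i)$ yields $\mathcal{O}(\sqrt{n_i(\mathcal{I})})$ per mode, hence $\sum_i \mathcal{O}(\sqrt{n_i}) = \mathcal{O}(|\mathcal{M}|\sqrt{ns})$ by Cauchy–Schwarz and $\sum_i n_i \le C\, ns$ (each epoch length is $\mathcal{O}(1)$ in expectation by Theorem \ref{Thm:dwellTimeError}(b) and boundedness of the dwell-time operators). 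The boundary terms (i) telescope across consecutive epochs of the same mode up to the controlled jumps at switches, contributing only $\mathcal{O}(|\mathcal{M}|)$ or $\mathcal{O}(\sqrt{ns})$.

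The main obstacle I anticipate is the off-policy, switched structure: unlike the single-mode OSLO analysis, here the gain used during an epoch of mode $i$ is frozen from an \emph{earlier} visit to mode $i$, so the within-epoch telescoping is with respect to a $P$ that is not updated, and the "chaining" of boundary terms must be done across non-consecutive time intervals while accounting for the state-norm amplification at each of the $ns$ switches (the $\bar\alpha\,\mathbb{E}[\cdot]+\bar\beta\sigma_\omega^2$ recursion). Making the epoch-length-weighted exploration sum collapse to $\mathcal{O}(\sqrt{n_i})$ requires that the (random) epoch lengths $\tau^{i_k i_{k+1}}_{es}$ be uniformly bounded in expectation, which follows from the boundedness of $\rho,\mathcal{X},\eta$ via Assumption \ref{asum:Side} and Theorem \ref{thm:minimum_average_dwell_revised}, but this interplay between the dwell-time estimates and the potential argument is the delicate point and must be made rigorous before the Cauchy–Schwarz step. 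Once the epoch lengths are shown to be $\mathcal{O}(1)$ (so $\sum_i n_i(\mathcal{I}) = \Theta(ns)$) and both $R^1_{\mathcal{A}}$ and $R^2_{\mathcal{A}}$ are each $\mathcal{O}(|\mathcal{M}|\sqrt{ns})$ on $\mathcal{E}$, combining with the $\delta$-probability failure term completes the proof.
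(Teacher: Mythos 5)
Your proposal is correct and follows essentially the same route as the paper: the same split into $R^1_{\mathcal{A}}$ (policy suboptimality, handled via the telescoping/martingale/exploration decomposition of the relaxed dual solution $P$ and the elliptical potential lemma) and $R^2_{\mathcal{A}}$ (dwell-time overshoot, handled via Theorem \ref{Thm:dwellTimeError}), worked on the high-probability event $\mathcal{E}$. The only minor discrepancy is bookkeeping: in the paper the exploration term is further split at a data-threshold switch index into $\Gamma_1+\Gamma_2$ and comes out at the lower order $\mathcal{O}(|\mathcal{M}|^{3/2}\,ns^{1/4})$, so it is the dwell-time term $R^2_{\mathcal{A}}$ that actually dominates and fixes the $\mathcal{O}(|\mathcal{M}|\sqrt{ns})$ rate, whereas you bound both pieces by $\mathcal{O}(|\mathcal{M}|\sqrt{ns})$ — a coarser but still sufficient estimate.
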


\section{Conclusion}
In this paper, we propose an algorithm that guarantees safe switching with minimum cost in a scenario where switch sequences are revealed in an online fashion. Our algorithm is based on the Optimism in the Face of Uncertainty (OFU) principle, which integrates system identification, control, and dwell-time design procedures. By utilizing constructed confidence sets for parameter estimates, we develop a strategy to accurately estimate the minimum dwell time. 
Furthermore, we prove that our proposed algorithm achieves an expected regret of $\mathcal{O}(|\mathcal{M}|\sqrt{ns})$, compared to the case when all parameters are known, where $ns$ represents the total number of switches and $|\mathcal{M}|$ is the number of mode candidates. A possible extension for this work involves a setup in which the decision of the next mode to switch becomes an integral part of the strategy design. For instance, the LQ formulation of the problem discussed in \cite{li2023online} has the potential to jointly design both the timing and selection of the next mode, aiming to identify the most effective actuation mode with minimal regret.

\bibliographystyle{IEEEtran}
\bibliography{ref}

\appendices




\section{Further Analysis}
\label{partD}
In this section, we dig further and provide proofs, rigorous analysis of the algorithms, properties of the closed-loop system, and regret bounds.

\subsection{Notes on $(\kappa, \gamma)$- stability}

To get a better sense for $(\kappa, \gamma)$- stability, we provide illustrations of the proof of Lemma B.1 by \cite{cohen2018online}. For any linear system traditionally defined by pair $(A, B)$, the stabilizing policy $K$ is $(\kappa, \gamma)$- stabilizing. The closed loop system is stable if $\rho(A+BK)= 1-\gamma$ where $0<\gamma<1$. If we define matrix $Q=(1-\gamma)^{-1}(A+BK)$, then $Q$ is stable if there exists a positive definite matrix $P$ such that

\begin{align}
Q^\top P Q\preceq P. \label{komak}
\end{align}

It is also trivial that when $Q$ is stable, $(A+BK)$ is also stable.

From (\ref{komak}) one can write

\begin{align}
(A+BK)^\top P(A+BK)\preceq (1-\gamma)^2 P. \label{eq:komak2}
\end{align}

Pre and post multiplying both sides of (\ref{eq:komak2}) by $P^{-\frac{1}{2}}$ and $P^{\frac{1}{2}}$ yields 

\begin{align}
L^\top L\preceq (1-\gamma)^2 . \label{eq:komak3}
\end{align}
where $L=P^{\frac{1}{2}}(A+BK)P^{-\frac{1}{2}}$. Then defining $H=P^{-\frac{1}{2}}$, one can write

\begin{align}
HLH^{-1}=A+BK.
\end{align}

Setting the condition number of $P^{-\frac{1}{2}}$ to $\kappa$, i.e., $\|H\|\|H^{-1}\|\leq \kappa$ and having $\|L\|\leq 1-\gamma$ from  (\ref{eq:komak3}) complete proof.

It is worthy to note that this definition directly results $\rho (A+BK)\leq \|L\|\leq 1-\gamma$.
\begin{lemma} (Lemma 25 of \cite{cohen2019learning}) \label{lem:komaki}
    Let $X$ and $Z$ denote matrices of equal size and let $Y$ denote a $(\kappa, \gamma)$ stable matrix such that $X\preceq Y^\top X Y+Z$, then $X\preceq \frac{\kappa^2}{\gamma}\|Z\| I$.
\end{lemma}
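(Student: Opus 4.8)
The plan is to iterate the given inequality and exploit the geometric contraction guaranteed by $(\kappa,\gamma)$-stability. First I would record the elementary monotonicity fact that congruence preserves the semidefinite order: if $A \preceq B$ then $M^\top A M \preceq M^\top B M$ for any conformable $M$. Applying this with $M = Y$ to the hypothesis $X\preceq Y^\top X Y+Z$ and substituting the resulting bound back into itself, an induction on $n$ yields
\begin{align*}
X \preceq (Y^n)^\top X Y^n + \sum_{k=0}^{n-1}(Y^k)^\top Z Y^k
\end{align*}
for every $n \geq 1$.

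Next I would control the powers $Y^k$ through Definition \ref{Def:StrongStability}. Writing $Y = HLH^{-1}$ with $\|L\| \leq 1-\gamma$ and $\|H\|\|H^{-1}\| \leq \kappa$, we get $Y^k = HL^kH^{-1}$, so that $\|Y^k\| \leq \|H\|\|H^{-1}\|\|L\|^k \leq \kappa(1-\gamma)^k$. Since $Z$ is symmetric, each summand obeys $(Y^k)^\top Z Y^k \preceq \|(Y^k)^\top Z Y^k\| I \leq \|Y^k\|^2\|Z\| I \leq \kappa^2(1-\gamma)^{2k}\|Z\| I$. Summing the resulting geometric series over all $k$ gives
\begin{align*}
\sum_{k=0}^{\infty}(Y^k)^\top Z Y^k \preceq \frac{\kappa^2\|Z\|}{1-(1-\gamma)^2} I = \frac{\kappa^2\|Z\|}{\gamma(2-\gamma)} I,
\end{align*}
and since $0 < \gamma < 1$ forces $2-\gamma > 1$, this is dominated by $\frac{\kappa^2}{\gamma}\|Z\| I$.

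Finally I would dispose of the residual term $(Y^n)^\top X Y^n$: the same power bound gives $\|(Y^n)^\top X Y^n\| \leq \kappa^2(1-\gamma)^{2n}\|X\| \to 0$ as $n \to \infty$. Passing to the limit in the iterated inequality then produces the claimed bound $X \preceq \frac{\kappa^2}{\gamma}\|Z\| I$. I expect the only step demanding genuine care to be this limiting argument: rather than manipulating the matrix inequality as $n\to\infty$ directly, I would note that $v^\top X v \leq v^\top\big((Y^n)^\top X Y^n\big)v + \frac{\kappa^2}{\gamma}\|Z\|\,v^\top v$ holds for every fixed vector $v$ and every $n$, let the middle term vanish, and invoke closedness of the semidefinite cone to conclude. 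This requires knowing $X$ is a fixed finite matrix (which it is, being given), so that the residual genuinely contracts; everything else reduces to a routine geometric-series estimate.
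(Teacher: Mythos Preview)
Your argument is correct. The paper does not supply its own proof of this lemma; it merely quotes the statement from \cite{cohen2019learning}, so there is nothing to compare against here beyond confirming soundness. Your iteration of the hypothesis, the power bound $\|Y^k\|\leq\kappa(1-\gamma)^k$ coming from $Y=HL^kH^{-1}$, the termwise estimate $(Y^k)^\top Z Y^k\preceq\kappa^2(1-\gamma)^{2k}\|Z\|I$, and the geometric-series summation are all valid, and passing to the limit via scalar quadratic forms is the clean way to handle it. This is in fact exactly the standard proof one finds in \cite{cohen2019learning}, so there is no methodological divergence to report.
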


\subsection{Proofs of Section \ref{eq:knownsetting}}

\textbf{Proof of Corollary \ref{cor:mintimecond}}

\begin{proof}

Considering that Program 1 is a minimization problem linear in $\tau_{i_k}$, it attains its minimum at the equality condition of (\ref{eq:conststab}). Utilizing a similar, albeit simpler, approach as shown in the proof of Theorem \ref{thm:minimum_average_dwell_revised}, we can establish that when a system at time $t$ starts actuation in mode $i_k$ using a stabilizing policy $K_{i^k}$ and is subsequently informed of the next mode-to-switch, denoted as $i_{k+1}$ to actuate with another stabilizing policy $K_{i_{k+1}}$, the duration of actuation in mode $i_k$, denoted as $\tau_{i_k}$, must satisfy the following condition:

\begin{align*}
     & \ln \bar{\rho}(K_{i_k}, K_{i_{k+1}})+ {\tau_{i_k}} \ln \bigg(1-\bar{\eta} \big(K_{i_k}\big)\bigg) \\
& +\ln \bar{\mathcal{X}}(K_{i_k}, K_{i_{k+1}})\leq  \ln \mathcal{Y}
\end{align*}
or equivalently 

\begin{align}
     {\tau}_{i_k} \geq -\frac{\ln \bar{\rho}(K_{i_k}, K_{i_{k+1}})+\ln \mathcal{X}(K_{i_k}, K_{i_{k+1}})-\ln \mathcal{Y}}{\ln \big(1-\bar{\eta} \big(K_{i_k})\big)}. \label{eq:cond}
\end{align}
where
 \begin{align}
	&\nonumber \bar{\eta} \big(K_{i_k}) := \frac{\underline{\lambda}\big(H_{K_{i_k}}\big)}{\overline{\lambda}\big(P_{K_{i_k}}\big)}, \; \bar{\rho}(K_{i_k}, K_{i_{k+1}}):=\frac{\overline{\lambda}\big(P_{K_{i_{k+1}}}\big)}{\underline{\lambda}\big(P_{K_{i_k}})}\\
& \bar{\mathcal{X}}(K_{i_k}, K_{i_{k+1}}):= \frac{\overline{\lambda}\big(P_{K_{i_k}}\big)}{\underline{\lambda}\big(P_{K_{i_{k+1}}})}, H_{K_{i_k}}=Q^{i_k}+K_{i_k}^\top R^{i_k} K_{i_k}	
\end{align}  

then again similar to the proof of Theorem we have

 	\begin{align}
\nonumber &\mathbb{E}[\mathcal{V}_{i_{k+1}}((t+\tau_{i_k})^+)]\leq\\
\nonumber  &\bar{\rho}(K_{i_k}, K_{i_{k+1}})\bigg(1-\bar{\eta} \big(K_{i_k})\bigg)^{\tau_{i_k}}\mathbb{E}[\mathcal{V}_{i_k}(t^+)]+ \\
  &\big(\sum_{k=0}^{\tau_{i_k}} (1-\bar{\eta} \big(K_{i_k}))\big) \overline{\lambda} (P_{K_{i_k}})\sigma_{\omega}^2 \label{eq:joonesh]}
	\end{align}

where by definition of $\mathcal{V}_{k+1}((t+\tau_{i_k})^+)=x^\top _{(t+{\tau}_{i_k})^+}P_{K_{i+1}}x_{(t+{\tau}_{i_k})^+}$ and $\mathcal{V}_{k+1}((t+\tau_{i_k})^+)=x^\top _{t^+}P_{K_{i}}x_{t^+}$, and applying Rayleigh-Ritz inequality it yields

\begin{align*}
    &\mathbb{E}[x^\top _{(t+{\tau}_{i_k})^+}x_{(t+{\tau}_{i_k})^+}]\leq\\
    &\bar{\mathcal{X}}(K_{i_k}, K_{i_{k+1}}) \bar{\rho}(K_{i_k}, K_{i_{k+1}})\bigg(1-\bar{\eta} \big(K_{i_k})\bigg)^{\tau_{i_k}}  \mathbb{E}[x^\top _{t^+}x_{\tau_{t^+}}]+ \\
    &\frac{\bar{\mathcal{X}}(K_{i_k}, K_{i_{k+1}})}{\bar{\eta} \big(K_{i_k})} \sigma_{\omega}^2
\end{align*}
 where we applied an upper-bound on the geometric series in (\ref{eq:joonesh]}).

By definition of $\tau_{i_k}$ we have

\begin{align}
    &\bar{\mathcal{X}}(K_{i_k}, K_{i_{k+1}}) \bar{\rho}(K_{i_k}, K_{i_{k+1}})\bigg(1-\bar{\eta} \big(K_{i_k})\bigg)^{\tau_{i_k}} \leq \mathcal{Y}<1
\end{align}

Furthermore by applying Lemma \ref{lem:komaki}, we have

\begin{align*}
    \|P_{K_{i_k}}\|_*\leq \alpha_1 (1+\kappa^2)\frac{\kappa^2}{\gamma},
\end{align*}
 using which one can show

\begin{align}
    \frac{\bar{\mathcal{X}}(K_{i_k}, K_{i_{k+1}})}{\bar{\eta} \big(K_{i_k})} \sigma_{\omega}^2 \leq \frac{\alpha_1^2}{\alpha_0^2}\frac{\kappa^4(1+\kappa^2)^2}{\gamma^2}\sigma_{\omega}^2
\end{align}

in which we applied the fact that $P_{K_{i}}\succeq \alpha_0 I$ which holds for $i=i_k, i_{k+1}$.

Observing that the policy $K_{i_k}\in \mathcal{S}(\Theta_*^{i_k})$, we have $\kappa \leq \kappa_c^*$ and $\gamma\geq \gamma_c^*$. This indicates that we indeed fulfill the boundedness of state immediately after a switch, as defined in accordance with Definition \ref{def:underControl}.

Furthermore, in cases where $\ln \bar{\rho}(K_{i_k}, K_{i_{k+1}})+\ln \mathcal{X}(K_{i_k}, K_{i_{k+1}})<0$, it signifies that prolonging mode $i_k$ is needless. However, our proposed scenario requires a mandatory presence of at least one step in mode $i_k$. Thus, we include this condition within the definition of minimum dwell-time.
    
\end{proof}



\textbf{Proof of Corollary \ref{coro: compare}}

\begin{proof}
  The evidence is a straightforward outcome of a situation where the system operates in a mode with the highest cost and longest duration. This scenario offers a robust upper-bound for cost that can be used for any given sequence of switches, denoted as $\mathcal{I}$.
\end{proof}
\subsection{Primal-Dual SDP formulation of LQR}
 \label{sec:primaldualSDP}

For any system $\Theta_*$ with quadratic cost specified with $Q$ and $R$, the standard LQR control problem can be reformulated into the Semi-Definite Programming (SDP) form as follows:

\begin{align}
\nonumber\textrm{minimize}\; \; \; \begin{pmatrix}
Q & 0 \\
0 & R
\end{pmatrix}\bullet \Sigma(\Theta_*, Q,R)\\
\nonumber\textrm{S.t.}\; \; \; \Sigma_{xx}(\Theta_*, Q,R)={\Theta_*}^\top\Sigma(\Theta_*, Q,R)\Theta_*+W\\
\Sigma(\Theta_*, Q,R)>0\label{eq:SDPKhali}
\end{align}
where for a given $\Theta_*$, $\Sigma(\Theta_*, Q,R)$ is the covariance matrix of joint distribution $(x,u)$ in steady-state with
\begin{align*}
\Sigma(\Theta_*, Q,R)= \begin{pmatrix}
\Sigma_{xx} & \Sigma_{xu} \\
\Sigma_{ux} & \Sigma_{uu}
\end{pmatrix}
\end{align*}
in which $\Sigma_{xx}(\Theta_*, Q,R) \in  \mathbb{R}^{n \times n}$, $\Sigma_{uu}(\Theta_*, Q,R) \in  \mathbb{R}^{m \times m}$, and $\Sigma_{ux}(\Theta_*, Q,R)=\Sigma_{xu}(\Theta_*, Q,R) \in  \mathbb{R}^{m \times n}$ for $Q\in \mathbb{R}^{n\times n}$ and $R\in \mathbb{R}^{m\times m}$. The optimal value is exactly average expected cost $J_*(\Theta_*, Q,R)$ and for $W>0$ (which guarantees $\Sigma_{xx}>0$) the optimal policy of system $K_*(\Theta_*, Q,R)$ can be obtained from $K_*(\Theta_*, Q,R)=\mathcal{K} (\Sigma(\Theta_*, Q,R))=\Sigma_{ux}(\Theta_*, Q,R)\Sigma^{-1}_{xx}(\Theta_*, Q,R)$. In other words, the matrix
\begin{align*}
\mathcal{E}(K)= \begin{pmatrix}
X & XK^T \\
KX & KXK^T
\end{pmatrix}
\end{align*}
is a feasible solution for the SDP. With stabilizing $K_*(\Theta_*, Q,R)$ the system state converges to the steady state distribution with covariance of $X=\mathbb{E}[xx^T]$.

The dual of this program is written as follows:

\begin{align}
\begin{array}{rrclcl}
\displaystyle \max & \multicolumn{1}{l}{P(\Theta_*, Q, R)\bullet W}\\
\textrm{s.t.} & \begin{pmatrix}
Q-P(\Theta_*, Q, R) & 0 \\
0 & R
\end{pmatrix}+\Theta_* P(\Theta_*, Q, R) {\Theta_*}^\top=0\\
&P(\Theta_*, Q, R)\geq 0 
\end{array}.
\end{align}
using its optimal solution $P_*(\Theta_*, Q, R)$ we define the optimal average expected cost as $J_*(\Theta_*, Q, R)=P_*(\Theta_*, Q, R) \bullet W$.

\subsection{Initialization Algorithm}
\label{sec:initializationa}

Given a $(\kappa_0,\gamma_0)-$stabilizing policy $K_0$, Algorithm \ref{alg:IExp} guarantees reaching to an appropriate $\epsilon_i$ for all $i\in \mathcal{M}$. The only remaining question is the specification of the algorithm input $T_0^i$ for each mode, to address which we first explicate the main steps of the warm-up phase.

\begin{algorithm} 
	\caption{\small  \cite{chekan2022learn} Warm-Up Phase \normalsize} \label{alg:IExp}
		\begin{algorithmic}[1]
		\STATE \textbf{Inputs:} $K^i_{0},$ $T^i_{0},$ $\vartheta^i$, $\delta$, $\sigma_{\omega},\, \eta_t \sim \mathcal{N}(0, 2\sigma_{\omega}^2{\kappa^i_0}^2 I)$, $\forall i$
		
		\STATE set $\lambda_i=\sigma_{\omega}^2\vartheta_i^{-2}$, $V^i_0=\lambda_i I$ $\forall i \in \mathcal{M}$
		\FOR {$i = 1: |\mathcal{M}|$}
		\FOR {$t = 0: T^i_0$}
		\STATE Observe $x_t$
		\STATE Play $u_t=K^i_0 x_t+\eta_t$
		\STATE observe $x_{t+1}$ 
		\STATE using $u_t$ and $x_t$ form $z_t$ and save $(z_t, x_{t+1})$ and update the ellipsoid according to the procedure given (\ref{eq:thetaHatWarmup}-\ref{radius_centralEl_warmUp})
		\ENDFOR
		\ENDFOR
		\STATE \textbf{Output:} $\Theta_0$  the center of constructed ellipsoid
		\end{algorithmic}
	\end{algorithm}

\subsubsection{Main Steps of Algorithm 1}

In Algorithm \ref{alg:IExp}, using data obtained by executing control input $u_t=K^i_0 x_t+\eta_t$ (which consists of linear strongly stabilizing term and an extra exploratory sub-gaussian noise $\eta_t$), a confidence set is constructed around the true parameters of the system by following the steps of Section \ref{sec:ident}. For warm-up phase We set  $\bar{\lambda}_i=\sigma_{\omega}^2{\vartheta^i}^{-1}$ and $\Theta^i_0=0$ in (\ref{eq:LSE_pr}) that results in by-time-$t$ estimates of
\begin{align}
 \hat{\Theta}^i_{t} &={\bar{V}_t^{i^{-1}}}{Z_{t}^\top} X_{t}. \label{eq:thetaHatWarmup}
 \end{align}
where the covariance matrix $\bar{V}^i_t$ is constructed similar to Section \ref{sec:ident}. Recalling $\|\Theta^i_*\|_*\leq \vartheta^i$ by assumption, a high probability confidence set around true but unknown parameters of system is constructed as
 \begin{align}
\nonumber \bar{\mathcal{C}}^i_t(\delta):=\big\{&\Theta^i \in \mathbb{R}^{(n+m_i)\times n}|\\
&\operatorname{Tr}\big((\Theta^i-\hat{\Theta}^i_t)^\top \bar{V}^i_t(\Theta^i-\hat{\Theta}^i_t)\big) \leq \bar{r}^i_t\big\} \label{eq:confSet1_warmup}
\end{align}
where 
\begin{align}
     \bar{r}^i_t=\bigg( \sigma_{\omega} \sqrt{2n \log\frac{n\det(\bar{V}^i_t) }{\delta \det(\bar{\lambda}_iI)}}+\sqrt{\bar{\lambda}_i\vartheta^i } \bigg)^2. \label{radius_centralEl_warmUp}
\end{align}
We need following definition to specify warm-up phase duration.

 \begin{definition} \cite{chekan2022learn} 
 The set $\mathcal{N}_s(\Theta_*)=\{\Theta\in \mathbb{R}^{(n+m)\times n}:\; \|\Theta-\Theta_*\|\leq \epsilon\}$ for some $\epsilon>0$ is $(\kappa,\gamma)-$stabilizing neighborhood of the system (\ref{eq:dyn_atttt}-\ref{eq:obs}),  if for any $\Theta^{\prime}\in \mathcal{N}_s(\Theta_*)$,  $K(\Theta^{\prime}, Q, R)$  is $(\kappa^{\prime},\gamma^{\prime})-$strongly stabilizing on the system (\ref{eq:dyn_atttt}) with $\kappa^{\prime}\leq \kappa$.
 \end{definition}

 Initialization phase, with an given initial policy $K_0^i$ is carried out for all modes to achieve an estimate $\|\Theta_0^i-\Theta_*^i\|\leq \epsilon_i$ such that it fulfills two goals. 
 
 First, it resides in $(\kappa^i_0, \gamma^i_0)-$stabilizing neighborhood. In other words, the confidence ellipsoid $\bar{\mathcal{C}}^i_t(\delta)$ is guaranteed to be a subset of $(\kappa^i_0,\gamma^i_0)-$strong stabilizing neighborhood. This, indeed, guarantees that Algorithm \ref{alg:OSL3} starts with a policy as good as the initial $(\kappa_0^i, \gamma_0^i)$-stabilizing one, $K_0^i$.

 The following theorem is adopted from \cite{chekan2022learn} that gives the required duration $T_0^i$ to fulfill the first goal.

\begin{theorem} \cite{chekan2022learn}
\label{thm:WarmUp_Duration}
For any mode $i$ there are explicit constants $C^i_0$ and $\epsilon^i_0=poly({\alpha^i_0}^{-1},\alpha^i_1,\vartheta_i, \nu_i, n, m)$ such that for a given $(\kappa^i_0, \gamma^i_0)-$ strongly stabilizing input $K^i_0$, the smallest time $T^i_0$ that satisfies
\begin{align}
\nonumber\frac{80}{T^i_0 \sigma^2_{\omega}}\bigg( &\sigma_{\omega} \bigg[(2n (\log \frac{n}{\delta} + \log (1+\\
\nonumber &\frac{300\sigma_{\omega}^2{\kappa^i_0}^4}{{\gamma_0^i}^2}(n+\vartheta^2\kappa^2_0)\log \frac{T^i_0}{\delta}\bigg]^{1/2}+\sqrt{\lambda} \vartheta^i \bigg)^2\\
&\leq \min \bigg\{{\kappa^i_0}^2\frac{ \alpha^i_0 \sigma^2_{\omega}}{C^i_0}, {\epsilon_0^i}^2 \bigg\}:=\tilde{\epsilon}_i^2 \label{firstcond}
\end{align}
guarantees that $K(\hat{\Theta}^i_{T^i_0}, Q^i, R^i)$ applied on the system $\Theta^i_*$ produces $(\kappa^i_0,\gamma^i_0)$ closed-loop system where $\hat{\Theta}^i_{T^i_0}$ is the center of confidence ellipsoid $\mathcal{C}^i_{T^i_0}(\delta)$ constructed by Algorithm \ref{alg:IExp}. 
\end{theorem}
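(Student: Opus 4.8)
The plan is to show that the duration condition (\ref{firstcond}) forces the warm-up estimate $\hat{\Theta}^i_{T^i_0}$ to lie within the $(\kappa^i_0,\gamma^i_0)$-stabilizing neighborhood of $\Theta^i_*$, and then to invoke a perturbation argument so that the SDP-synthesized gain $K(\hat{\Theta}^i_{T^i_0},Q^i,R^i)$ remains $(\kappa^i_0,\gamma^i_0)$-strongly stabilizing when closed around the \emph{true} plant. The central quantity to control is the parameter error $\|\hat{\Theta}^i_{T^i_0}-\Theta^i_*\|$, which I would bound through the standard ratio
\begin{align*}
\|\hat{\Theta}^i_{T^i_0}-\Theta^i_*\|^2 \leq \frac{\bar{r}^i_{T^i_0}}{\underline{\lambda}\big(\bar{V}^i_{T^i_0}\big)},
\end{align*}
so that the whole argument decomposes into (i) an upper bound on the confidence radius $\bar{r}^i_{T^i_0}$ of (\ref{radius_centralEl_warmUp}) and (ii) a high-probability lower bound on the smallest eigenvalue of the covariance $\bar{V}^i_{T^i_0}$ accumulated during the warm-up phase of Algorithm \ref{alg:IExp}.

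For (i), since the input $u_t=K^i_0x_t+\eta_t$ keeps the closed loop $(\kappa^i_0,\gamma^i_0)$-strongly stable, the regressors $z_t$ stay bounded in a high-probability sense, and the self-normalized martingale bound of \cite{abbasi2011regret} applied to the sub-Gaussian noise $\omega_{t+1}$ certifies that $\Theta^i_*\in\bar{\mathcal{C}}^i_{T^i_0}(\delta)$ while bounding $\bar{r}^i_{T^i_0}$ by the parenthesized expression of (\ref{firstcond}) up to the $\log(T^i_0/\delta)$ and determinant-ratio terms; the factor $\det(\bar{V}^i_{T^i_0})/\det(\bar{\lambda}_iI)$ is in turn controlled using $\|\bar{V}^i_{T^i_0}\|\lesssim T^i_0\,\sigma_\omega^2({\kappa^i_0}^4/{\gamma^i_0}^2)(n+\vartheta^2\kappa_0^2)$, which accounts for the $300\sigma_\omega^2{\kappa^i_0}^4/{\gamma^i_0}^2$ constant appearing inside the logarithm. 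For (ii), I would exploit the \emph{injected} exploratory noise $\eta_t\sim\mathcal{N}(0,2\sigma_\omega^2{\kappa^i_0}^2I)$: it renders the input persistently exciting, so a matrix concentration inequality of the kind used in \cite{cohen2018online} yields $\underline{\lambda}(\bar{V}^i_{T^i_0})\gtrsim T^i_0\sigma_\omega^2/80$ with probability at least $1-\delta$. Substituting these two bounds into the ratio reproduces precisely the left-hand side of (\ref{firstcond}).

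Combining (i) and (ii) gives $\|\hat{\Theta}^i_{T^i_0}-\Theta^i_*\|^2\leq\tilde{\epsilon}_i^2$, and the two arguments of the minimum defining $\tilde{\epsilon}_i^2$ then play distinct roles. The bound by ${\epsilon^i_0}^2$ places $\hat{\Theta}^i_{T^i_0}$ inside the $(\kappa^i_0,\gamma^i_0)$-stabilizing neighborhood (by its very definition), guaranteeing that the synthesized gain is strongly stabilizing with condition number at most $\kappa^i_0$. The bound by ${\kappa^i_0}^2\alpha^i_0\sigma_\omega^2/C^i_0$ supplies the quantitative margin so that the perturbation lemma (Appendix \ref{perturbationLemma_Stab}), applied to the relaxed-SDP dual solution evaluated at $\hat{\Theta}^i_{T^i_0}$, certifies that $A^i_*+B^i_*K(\hat{\Theta}^i_{T^i_0},Q^i,R^i)$ admits a Lyapunov certificate $P\succ0$ of the form in Definition \ref{Def:StrongStability} with the claimed parameters; this is essentially the warm-up analogue of Theorem \ref{Stability_thm17}, with $\hat{\Theta}^i_{T^i_0}$ playing the role of the ellipsoid center. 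The choice of $C^i_0$ and of the polynomial $\epsilon^i_0$ is then dictated by requiring both perturbation inequalities to hold simultaneously.

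The main obstacle is step (ii), the linear-in-$T^i_0$ lower bound on $\underline{\lambda}(\bar{V}^i_{T^i_0})$. The regressors $z_t$ are temporally correlated through the closed-loop dynamics, so $z_tz_t^\top$ cannot be treated as i.i.d.; the argument must isolate the deterministic excitation contributed by $\eta_t$ from the state-feedback component and control the martingale fluctuations via a matrix concentration bound, while tracking the failure probability and the constants (notably the factor $80$) explicitly so that the final inequality matches (\ref{firstcond}) exactly. Since the statement is quoted from \cite{chekan2022learn}, I would either reproduce that eigenvalue concentration argument or cite it directly, verifying only that the slightly modified warm-up duration used here leaves the constants unchanged.
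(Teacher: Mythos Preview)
The paper does not supply its own proof of this theorem: it is explicitly ``adopted from \cite{chekan2022learn}'' and is stated without proof, after which the text immediately moves on to the second initialization goal. There is therefore nothing in the paper to compare your argument against; the result functions as an imported black box.

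That said, your reconstruction follows the standard route one would expect for such a warm-up guarantee (and that \cite{chekan2022learn}, building on \cite{cohen2018online,cohen2019learning}, presumably takes): control the confidence radius via the self-normalized bound, lower-bound $\underline{\lambda}(\bar V^i_{T^i_0})$ using the injected Gaussian exploration $\eta_t$, combine to bound $\|\hat\Theta^i_{T^i_0}-\Theta^i_*\|$, and then invoke a perturbation/SDP argument to transfer strong stability to the true closed loop. You also correctly flag the only nontrivial step---the linear-in-$T^i_0$ eigenvalue lower bound under temporally correlated regressors---and note that the honest course is to cite the concentration argument from the source rather than re-derive the specific constants. Since the present paper itself defers entirely to \cite{chekan2022learn}, your closing remark is exactly the right disposition.
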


 Before, describing the second goal, we need to find an upper-bound on the duration of the posted problem, given number of switches $n_s$. The following lemma gives an upper bound on duration of an epoch.

  \begin{lemma}\label{thm:minimum_average_dwell} 
 	Maximum duration for an epoch is  
\begin{align}
\tau^{dw}_{max}:= -\frac{\ln \kappa_*}{\ln(1-\gamma_*)}
\label{eq:tau_a_slow} 
\end{align}
 	where with the definition $\kappa_i:=\sqrt{2\nu_i/\sigma_{\omega}^2\alpha_0^i}$,
 	\begin{align}
 	\kappa^*=\max_{i\in \{1,..., |\mathcal{B}^*|\}} \kappa_i,\;\;\; and\;\;
 	 \gamma^*=\frac{1}{2\kappa_{*}^2} \label {eq:def_ruand_eta} 
 	\end{align}
 \end{lemma}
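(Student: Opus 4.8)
The plan is to bound the length of every epoch produced by Algorithm~\ref{alg:OSL3} by the largest possible value of the online minimum mode-dependent dwell time $\tau^{ij}_{es}$ from Theorem~\ref{thm:minimum_average_dwell_revised}, maximized over all ordered mode pairs $(i,j)\in\mathcal{M}^2$ and over all epochs. By the closed form~(\ref{eq:tau_a_fast}), this reduces to producing, uniformly in the epoch and in the two modes involved, an upper bound on the product $\rho(\mathcal{C}^i_t(\delta),\mathcal{C}^j_t(\delta))\,\mathcal{X}(\mathcal{C}^i_t(\delta),\mathcal{C}^j_t(\delta))$ and a lower bound on $\eta(\mathcal{C}^i_t(\delta))$, and then reading $\tau^{ij}_{es}$ off that form.

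First I would record two uniform spectral estimates, both of which are already implicit in the stability analysis. Since the relaxed-primal value is optimistic, $P(\mathcal{C}^i_t,Q^i,R^i)\bullet W\le J_*(\Theta^i_*,Q^i,R^i)\le\nu_i$ by Assumption~\ref{asum:Side}, so $\overline{\lambda}\big(P(\mathcal{C}^i_t,Q^i,R^i)\big)\le\operatorname{Tr}\big(P(\mathcal{C}^i_t,Q^i,R^i)\big)\le\nu_i/\sigma_\omega^2$; and under the regularization rule~(\ref{eq:cond1}) the estimation-error correction term in~(\ref{eq:RedSDP_DUAL}) is dominated by $\tfrac12 Q^i$, so $P(\mathcal{C}^i_t,Q^i,R^i)\succeq\tfrac12 Q^i\succeq\tfrac{\alpha_0^i}{2}I$ — these are exactly the bounds through which the proof of Theorem~\ref{Stability_thm17} certifies $K(\mathcal{C}^i_t,Q^i,R^i)$ to be $(\kappa_i,\gamma_i)$-strongly stable with $\kappa_i=\sqrt{2\nu_i/(\sigma_\omega^2\alpha_0^i)}$ and $\gamma_i=1/(2\kappa_i^2)$, since they force $\overline{\lambda}(P)/\underline{\lambda}(P)\le 2\nu_i/(\sigma_\omega^2\alpha_0^i)=\kappa_i^2$. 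Hence $\rho(\mathcal{C}^i_t(\delta),\mathcal{C}^j_t(\delta))\,\mathcal{X}(\mathcal{C}^i_t(\delta),\mathcal{C}^j_t(\delta))\le\kappa_i^2\kappa_j^2\le\kappa_*^4$, and applying the same two estimates to $\mathcal{H}(\mathcal{C}^i_t(\delta))$ in~(\ref{eq:HDef3p}) — via the quantitative form $\mathcal{H}(\mathcal{C}^i_t(\delta))\succeq\tfrac12 Q^i$ of the positivity that Theorem~\ref{thm:minimum_average_dwell_revised} uses to show $0<\eta(\mathcal{C}^i_t(\delta))<1$ — gives $\eta(\mathcal{C}^i_t(\delta))\ge(\alpha_0^i/2)/(\nu_i/\sigma_\omega^2)=1/\kappa_i^2\ge1/\kappa_*^2\ge\gamma_*$, i.e.\ $1-\eta(\mathcal{C}^i_t(\delta))\le1-\gamma_*$. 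Substituting these into~(\ref{eq:tau_a_fast}) bounds $\tau^{ij}_{es}$ by a quantity of the order of $-\ln\kappa_*/\ln(1-\gamma_*)$. The most transparent reading of the same fact is time-domain: by Definition~\ref{Def:StrongStability} the mode-$i$ closed loop obeys $\|(A^i_*+B^i_*K_i)^\tau\|\le\kappa_i(1-\gamma_i)^\tau$, which is precisely the transient that multiplies the state over an epoch, so no epoch has to last beyond the first $\tau$ for which $\kappa_i(1-\gamma_i)^\tau\le1$, namely $\tau\ge -\ln\kappa_i/\ln(1-\gamma_i)$; the worst mode has $\kappa_i\le\kappa_*$ and $\gamma_i\ge\gamma_*$, which produces the claimed $\tau^{dw}_{max}$. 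Because $\kappa_*\ge1$ always and $\gamma_*=1/(2\kappa_*^2)<1/2$, the quantity $\tau^{dw}_{max}$ is finite and positive.

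The step I expect to be the genuine obstacle is the \emph{lower} bounds $\underline{\lambda}\big(P(\mathcal{C}^i_t,Q^i,R^i)\big)\gtrsim\alpha_0^i$ and $\eta(\mathcal{C}^i_t(\delta))\gtrsim\gamma_*$ for the data-driven relaxed SDP solutions rather than for the idealized ones: the relaxation inserts the correction $\mu^i_{n_i(t)}\|P(\mathcal{C}^i_t,Q^i,R^i)\|_*\,{V^i_{n_i(t)}}^{-1}$ into the dual constraint~(\ref{eq:RedSDP_DUAL}) and the matching term into $\mathcal{H}(\mathcal{C}^i_t(\delta))$, and one has to show that these are swamped by $Q^i$. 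This is exactly the estimate carried out in the stability analysis under~(\ref{eq:cond1}) (one may also invoke Lemma~\ref{lem:komaki} to bound $\|P(\mathcal{C}^i_t,Q^i,R^i)\|_*$ directly from the Lyapunov inequality), so once it is in hand the rest is routine eigenvalue arithmetic. The only remaining point is cosmetic: the verbatim closed form $-\ln\kappa_*/\ln(1-\gamma_*)$ silently absorbs the $-\ln\bar\alpha$ term and the numerical factors relating $\kappa_*^4$ to $\kappa_*$ and $\gamma_*$ to $1/\kappa_*^2$, which is harmless because only the resulting order $\mathcal{O}(\kappa_*^2\ln\kappa_*)$ of the maximal epoch length — hence $\mathcal{O}(n_s\,\kappa_*^2\ln\kappa_*)$ for the whole mission — is what the subsequent warm-up duration bound relies on.
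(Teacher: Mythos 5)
Your proposal is correct and in fact contains the paper's own argument as its second, ``time-domain'' reading: the paper proves this lemma in three lines by invoking the strong-stability transient bound $\|x_t\|\leq \kappa_i (1-\gamma_i)^{t-t_0}\|x_{t_0}\|+\frac{\kappa_i}{\gamma_i}\max_{t_0\le s\le t}\|w_s\|$ (its Eq.~(\ref{eq:rsp_seq_st})) and noting that the contraction factor drops below one after $-\ln\kappa_i/\ln(1-\gamma_i)$ steps, with the worst mode giving $\kappa_*$ and $\gamma_*$. What your first route adds, and what the paper omits, is the link between this crude bound and the dwell times the algorithm actually computes: by pushing the uniform spectral estimates $\overline{\lambda}(P(\mathcal{C}^i_t,Q^i,R^i))\le\nu_i/\sigma_\omega^2$ and $\underline{\lambda}(P(\mathcal{C}^i_t,Q^i,R^i))\ge\alpha_0^i/2$ (exactly the ones used in Theorem~\ref{Stability_thm17} and in (\ref{eq:khoobeaali})) through the closed form (\ref{eq:tau_a_fast}), you show that $\tau^{ij}_{es}$ is uniformly $\mathcal{O}\big(\kappa_*^2\ln\kappa_*\big)$, which is the statement the warm-up duration bound (\ref{secondCondition}) actually needs. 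You are also right, and more candid than the paper, that the verbatim expression $-\ln\kappa_*/\ln(1-\gamma_*)$ does not literally dominate (\ref{eq:tau_a_fast}) constant-for-constant (the honest bound carries $\ln(\kappa_*^4/\bar\alpha)$ in the numerator rather than $\ln\kappa_*$), but since the lemma is only ever used for its order of magnitude this discrepancy is immaterial. The one point worth tightening if you wrote this out in full is the claim $\mathcal{H}(\mathcal{C}^i_t(\delta))\succeq\tfrac12 Q^i$: the paper's own computation in the proof of Lemma~\ref{thm:res_sequential_stability_resp} obtains the slightly different form $\mathcal{H}\succeq\tfrac{\alpha_0^i}{2}(1+\kappa_i^2)I\cdot$(loosely) via (\ref{eq:stabCon2}), but the resulting lower bound $\eta\ge 1/\kappa_i^2$ is the same, so your conclusion stands.
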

 \begin{proof}
The proof is straightforward. 
Applying strongly stabilizing $K$ on the system with initial state $x_{t_0}$ guarantees the boundedness of state trajectory as follows:
 \begin{align}
\|x_t\|\leq \kappa_i (1-\gamma_i)^{t-t_0}\|x_{t_0}\|+\frac{\kappa_i}{\gamma_i} \max\limits_{t_0\leq s\leq t} \|w_s\|.\label{eq:rsp_seq_st} 
 \end{align}
The proof is completed by noting that the worst mode after a switch is the less suppressive one, i.e., the one with $\kappa_i=\kappa^*$.
 \end{proof}



The second goal is that the estimate $\Theta_0^i$ satisfies $\|\Theta^i_0-\Theta^i_*\|\leq \bar{\epsilon}_i$ where $\bar{\epsilon}_i\bar{\lambda}_{\max}^i<1$ and $ \bar{\lambda}_{\max}^i=\max_{\tau_q} \tau_{\tau_q}^i$ where $\tau_q$'s are start of epochs. In Corollary 5 of \cite{cohen2018online} it is shown that if $n_s \tau^{dw}_{max}>poly(n, \nu, \vartheta^i, \alpha_0^{-1}, \sigma_{\omega}^{-1}, \kappa_0, \gamma_0^{-1}, \log (\delta^{-1}))$ with 

\begin{align}
    \tilde{T}_0=\mathcal{O}\big(\frac{n^2\nu_i^2\vartheta_i}{\alpha_0^5\sigma_{\omega}^{10}}\sqrt{(n_s\tau_{max}^{dw})\log^2\frac{n_s\tau_{max}^{dw}}{\delta}}\big) \label{secondCondition}
\end{align}
time steps we fulfill the second goal.

To put all in nutshell, to achieve an initial estimate $\Theta_0^i$ such that $\|\Theta_0^i-\Theta_*^i\|\leq min \{\bar{\epsilon}_i, \tilde{\epsilon}_i\}:=\epsilon_i$ we need to run warm-up algorithm for $T_0$ time steps that satisfies (\ref{firstcond}) and (\ref{secondCondition}).

 We can relax the assumption of having an initial stabilizing $K_0^i$ by adapting the results of \cite{lale2022reinforcement} and \cite{chekan2022joint}; however we prefer to stick to this assumption and be more focused on presenting the core contribution of the paper, learning mode-dependent minimum dwell time. 

\subsection{Stability Analysis}
\label{perturbationLemma_Stab}

We first need the following lemmas as required ingredients to prove Theorem \ref{Stability_thm17} and afterward Theorem \ref{thm:minimum_average_dwell_revised}. 

\begin{lemma} (Perturbation Lemma \cite{cohen2019learning})
\label{matix_perturbation_Bound1}
	Let $X$ and $\Delta$ be matrices of matching sizes and let $\Delta\Delta^T\leq r V^{-1}$ for positive definite matrix $V$. Then for any $P\geq 0$ and $\mu\geq r(1+2\|X\|\|V\|^{1/2})$, we have
	\begin{align*}
-\mu \|P\|_*V^{-1}\leq (X+\Delta)^\top P(X+\Delta)\leq \mu \|P\|_*V^{-1}.
	\end{align*}
\end{lemma}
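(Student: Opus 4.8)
The plan is to prove the stated two--sided Loewner bound by separating the genuinely perturbative part of the quadratic form and controlling it term by term. Working in the orientation in which the relaxation weight $V^{-1}$ appears (the one of the dual constraint (\ref{eq:RedSDP_DUAL}), where the form is $\hat\Theta P\hat\Theta^\top$ with $X=\Theta_*$, $\Delta=\hat\Theta-\Theta_*$; the other orientation is identical by transposition), I expand
\begin{align*}
(X+\Delta)P(X+\Delta)^\top=XPX^\top+\big(XP\Delta^\top+\Delta PX^\top\big)+\Delta P\Delta^\top .
\end{align*}
The nominal term $XPX^\top$ cancels against its counterpart when (\ref{eq:RedSDP_DUAL}) is compared with its exact analogue, so the content of the lemma is the estimate $-\mu\|P\|_* V^{-1}\preceq XP\Delta^\top+\Delta PX^\top+\Delta P\Delta^\top\preceq\mu\|P\|_* V^{-1}$. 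I would bound the second--order term and the cross term separately and add the contributions to recover $\mu\geq r\big(1+2\|X\|\,\|V\|^{1/2}\big)$.

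For the second--order term I would use $0\preceq P\preceq\|P\|I\preceq\|P\|_* I$ (the spectral norm is dominated by the Frobenius norm $\|P\|_*$) to get $0\preceq\Delta P\Delta^\top\preceq\|P\|_*\,\Delta\Delta^\top$, and then invoke the hypothesis $\Delta\Delta^\top\preceq rV^{-1}$ directly to conclude $0\preceq\Delta P\Delta^\top\preceq r\|P\|_* V^{-1}$. Being positive semidefinite, this term is harmless for the lower bound and supplies the standalone $r$ inside $\mu$.

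The cross term is the crux, since it mixes the possibly large factor $X$ with the small factor $\Delta$ while the target must be weighted by $V^{-1}$ rather than by the identity. The key maneuver is to conjugate by $V^{1/2}$ and read the hypothesis as an operator--norm statement: $\Delta\Delta^\top\preceq rV^{-1}$ gives $V^{1/2}\Delta\Delta^\top V^{1/2}\preceq rI$, hence $\|V^{1/2}\Delta\|\leq\sqrt r$, while $\|V^{1/2}X\|\leq\|V\|^{1/2}\|X\|$. Since $XP\Delta^\top+\Delta PX^\top$ is symmetric, bounding its operator norm after conjugation yields
\begin{align*}
V^{1/2}\big(XP\Delta^\top+\Delta PX^\top\big)V^{1/2}\preceq 2\|P\|_*\,\|V^{1/2}X\|\,\|V^{1/2}\Delta\|\,I\preceq 2\sqrt r\,\|X\|\,\|V\|^{1/2}\|P\|_* I,
\end{align*}
so that $-2\sqrt r\,\|X\|\,\|V\|^{1/2}\|P\|_* V^{-1}\preceq XP\Delta^\top+\Delta PX^\top\preceq 2\sqrt r\,\|X\|\,\|V\|^{1/2}\|P\|_* V^{-1}$, the two--sided form being automatic from the operator--norm estimate. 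Adding this to the second--order bound gives both inequalities with a constant of the form $r\big(1+2\|X\|\,\|V\|^{1/2}\big)$, the confidence radius being normalized so that the $\sqrt r$ and $r$ factors are absorbed into the single $\mu$.

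The main obstacle I anticipate is exactly this conversion of the cross term into a $V^{-1}$--weighted Loewner bound: a naive operator--norm estimate only produces a bound of the form $cI$, which is useless against $\mu\|P\|_* V^{-1}$, so the conjugation by $V^{1/2}$ and the use of $\Delta\Delta^\top\preceq rV^{-1}$ as a bound on $\|V^{1/2}\Delta\|$ are essential. The remaining bookkeeping---applying $\|P\|\leq\|P\|_*$ consistently and matching the side on which $P$ acts to the orientation $\Delta\Delta^\top$ versus $\Delta^\top\Delta$---is routine.
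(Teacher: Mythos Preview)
Your approach is correct and matches the argument of Lemma~24 in \cite{cohen2019learning}, to which the paper simply defers: expand the quadratic, dominate the second--order piece via $P\preceq\|P\|_*I$ and $\Delta\Delta^\top\preceq rV^{-1}$, and handle the cross term by conjugating with $V^{1/2}$ so that $\|V^{1/2}\Delta\|\leq\sqrt r$ and $\|V^{1/2}X\|\leq\|V\|^{1/2}\|X\|$ yield a $V^{-1}$--weighted Loewner bound. One bookkeeping remark: your tally actually produces $r+2\sqrt r\,\|X\|\,\|V\|^{1/2}$, not $r(1+2\|X\|\,\|V\|^{1/2})$; the paper itself uses the $\sqrt r$ form just below (\ref{eq:RelaxedSDP}), so this is an inconsistency in the lemma's stated constant rather than a gap in your argument.
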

\begin {proof}
The proof follows the same steps of Lemma 24 in \cite{cohen2019learning} with only the difference in $\mu$ definition, which is slightly different than the one of \cite{cohen2019learning}. This is due to the fact that unlike \cite{cohen2019learning}, we have not normalized our confidence ellipsoid.
\end{proof}

\begin{lemma} (Lemma 16 in \cite{cohen2019learning})
\label{matix_policy}
	Assume at time $\tau_q$ a switch to mode $i$ occurs. Let $\mu_i\geq r^i_{\tau_q}(1+2\vartheta_i\|V^i_{n_i(\tau_q)}\|^{1/2})$ and $V^i_{n_i(\tau_q)}\geq (4\nu_i\mu_i/\alpha^i_0\sigma_{\omega}^2)I$ (This holds true by rationale illustrated in Theorem \ref{Stability_thm17}). Let $\Sigma(\hat{\Theta}^i_{\tau_q}, Q^i, R^i)$ be the primal relaxed SDP solution of (\ref{eq:RedSDP_DUAL}) and let $P(\hat{\Theta}^i_{\tau_q}, Q^i, R^i)$ denote the solution of dual relaxed SDP (\ref{eq:RedSDP_DUAL}). Then $\Sigma_{xx}(\hat{\Theta}^i_{\tau_q}, Q^i, R^i)$ is invertible and
	\begin{align}
\nonumber &P(\hat{\Theta}^i_{\tau_q}, Q^i, R^i)=Q^i+K^\top (\hat{\Theta}^i_{\tau_q}, Q^i, R^i)R^i K(\hat{\Theta}^i_{\tau_q}, Q^i, R^i)\\
 \nonumber &+(\hat{A}^i_t+\hat{B}_t^iK(\hat{\Theta}^i_{\tau_q}, Q^i, R^i))^\top P(\hat{\Theta}^i_{\tau_q}, Q^i, R^i)\times\\
 \nonumber &(\hat{A}^i_{\tau_q}+\hat{B}_{\tau_q}^iK(\hat{\Theta}^i_{\tau_q}, Q^i, R^i))-\mu_i\|P(\hat{\Theta}^i_{\tau_q}, Q^i, R^i)\|_*\times\\
 &\begin{pmatrix} I \\ K(\hat{\Theta}^i_{\tau_q}, Q^i, R^i) \end{pmatrix}{V^i}^{-1}_{n_i(\tau_q)}\begin{pmatrix} I \\ K(\hat{\Theta}^i_{\tau_q}, Q^i, R^i) \end{pmatrix}^\top
\label{eq:Lemma8_res}
	\end{align}
	where $(\hat{A}^i_{\tau_q}, \hat{B}^i_{\tau_q})^\top=\hat{\Theta}^i_{\tau_q}$ and $K(\hat{\Theta}^i_{\tau_q}, Q^i, R^i)=\Sigma_{ux}(\hat{\Theta}^i_{\tau_q}, Q^i, R^i){\Sigma}^{-1}_{xx}(\hat{\Theta}^i_{\tau_q}, Q^i, R^i)$.
\end{lemma}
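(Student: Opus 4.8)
The plan is to read (\ref{eq:Lemma8_res}) off as the stationarity/complementary-slackness relation linking the relaxed primal SDP (\ref{eq:RelaxedSDP}) to its dual (\ref{eq:RedSDP_DUAL}), and to extract the invertibility of $\Sigma_{xx}$ from primal feasibility. Throughout I abbreviate $\hat\Theta:=\hat\Theta^i_{n_i(\tau_q)}$, $V:=V^i_{n_i(\tau_q)}$, $\mu:=\mu^i_{n_i(\tau_q)}$, and write $P$, $\Sigma$, $K$ for the optimal dual, primal, and recovered-controller matrices. First I would establish strong duality. The hypotheses $\mu\geq r^i_{\tau_q}(1+2\vartheta_i\|V\|^{1/2})$ and $V\succeq(4\nu_i\mu/\alpha^i_0\sigma_\omega^2)I$ keep the relaxation term $\mu(\Sigma\bullet V^{-1})I$ small, so $\Sigma=cI$ is strictly feasible for small $c>0$; Slater's condition then yields zero duality gap, attainment of both optima, and complementary slackness. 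Concretely, forming the Lagrangian of (\ref{eq:RelaxedSDP}) with multiplier $P\succeq0$ and collecting the terms linear in $\Sigma$ gives $L=S\bullet\Sigma+P\bullet W$, where $S:=\begin{pmatrix}Q^i-P&0\\0&R^i\end{pmatrix}+\hat\Theta P\hat\Theta^\top-\mu\|P\|_*V^{-1}$ is the dual slack; the dual objective is therefore $P\bullet W$ and finiteness of $\inf_{\Sigma\succ0}L$ forces $S\succeq0$, which is exactly the dual constraint of (\ref{eq:RedSDP_DUAL}).

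Next I would prove $\Sigma_{xx}$ invertible directly from primal feasibility. Since the optimal $\Sigma\succ0$ gives $\hat\Theta^\top\Sigma\hat\Theta\succeq0$, the primal matrix inequality yields $\Sigma_{xx}\succeq W-\mu(\Sigma\bullet V^{-1})I$. Bounding $\Sigma\bullet V^{-1}\leq\operatorname{Tr}(\Sigma)/\underline\lambda(V)$ together with $\operatorname{Tr}(\Sigma)\leq\nu_i/\alpha^i_0$ (from $Q^i,R^i\succeq\alpha^i_0 I$ and the cost upper bound $\nu_i$), the lower bound on $V$ makes $\mu(\Sigma\bullet V^{-1})$ strictly smaller than $\underline\lambda(W)$, whence $\Sigma_{xx}\succ0$. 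This legitimizes the definition $K=\Sigma_{ux}\Sigma_{xx}^{-1}$ used in the statement.

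To obtain (\ref{eq:Lemma8_res}) itself, I would use that the optimal primal solution of the (relaxed) LQR-SDP admits the rank-$n$ factorization $\Sigma=\begin{pmatrix}I\\K\end{pmatrix}\Sigma_{xx}\begin{pmatrix}I\\K\end{pmatrix}^\top$, i.e. the Schur complement of $\Sigma_{xx}$ vanishes at the optimum (inherited from \cite{pmlr-v97-cohen19b}). Substituting this into complementary slackness $S\bullet\Sigma=0$ and using $\Sigma_{xx}\succ0$ with $S\succeq0$ forces $\begin{pmatrix}I\\K\end{pmatrix}^\top S\begin{pmatrix}I\\K\end{pmatrix}=0$. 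Expanding this quadratic form with the identities $\hat\Theta^\top\begin{pmatrix}I\\K\end{pmatrix}=\hat A+\hat B K$ and $\begin{pmatrix}I\\K\end{pmatrix}^\top\begin{pmatrix}Q^i-P&0\\0&R^i\end{pmatrix}\begin{pmatrix}I\\K\end{pmatrix}=Q^i-P+K^\top R^i K$, and then rearranging for $P$, produces exactly (\ref{eq:Lemma8_res}).

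The main obstacle is the factorization step. Unlike the exact LQR-SDP of Appendix \ref{sec:primaldualSDP}, whose dual constraint is an equality forcing $S=0$, the relaxation turns it into an inequality, so $S$ need not vanish and the full-rank complementary-slackness shortcut is unavailable; moreover the term $\mu(\Sigma\bullet V^{-1})I$ couples all blocks of $\Sigma$, so the rank-$n$ structure is not automatic. I would therefore argue that the optimum is still attained at a rank-$n$ matrix, equivalently that replacing any feasible $\Sigma$ by its projection onto the range of $\begin{pmatrix}I\\K\end{pmatrix}$ preserves feasibility while not increasing the $R^i$-weighted objective. Once this factored structure is secured, the remaining manipulations are routine trace identities and the invertibility claim reduces to the elementary eigenvalue estimate above.
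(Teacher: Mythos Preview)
The paper does not prove this lemma itself; it is quoted as Lemma~16 of \cite{cohen2019learning} and invoked without argument, so there is no in-paper proof to compare against. Your outline---strong duality for the pair (\ref{eq:RelaxedSDP})--(\ref{eq:RedSDP_DUAL}), complementary slackness $S\bullet\Sigma=0$, rank-$n$ reduction of the optimal $\Sigma$, and then compression of the dual slack $S$ by $\begin{pmatrix}I\\K\end{pmatrix}$ to read off the Lyapunov-type identity---is precisely the argument in \cite{cohen2019learning}.

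One correction: $\Sigma=cI$ with \emph{small} $c$ is not strictly primal-feasible, because as $c\to0$ the right-hand side of the relaxed constraint is dominated by $W=\sigma_\omega^2 I\not\preceq cI$. A valid Slater point is instead the true steady-state covariance $\mathcal{E}\big(K_*(\Theta_*^i,Q^i,R^i)\big)$ of Appendix~\ref{sec:primaldualSDP}, which the perturbation Lemma~\ref{matix_perturbation_Bound1} (with the hypothesis on $\mu_i$) places strictly inside the relaxed feasible set. Your rank-reduction sketch---zeroing the Schur complement $\Sigma_{uu}-\Sigma_{ux}\Sigma_{xx}^{-1}\Sigma_{xu}$ leaves $\Sigma_{xx}$ unchanged, can only shrink both $\hat\Theta^\top\Sigma\hat\Theta$ and $\Sigma\bullet V^{-1}$, and strictly lowers the $R^i$-weighted objective whenever that Schur complement is nonzero---is exactly the mechanism that closes the gap you flagged, after which the trace identities you list deliver (\ref{eq:Lemma8_res}).
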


\textbf{Proof of Theorem \ref{Stability_thm17}}


\begin{proof} 

First we need to find a $\mu_i$ such that $\mu_i\geq r^i_{\tau_q}(1+2\vartheta_i\|V^i_{n_i(\tau_q)}\|^{1/2})$, we denote it by $\bar{\mu}^{\tau_q}_i$ which is defined as follows:
 \begin{align}
\nonumber &r^i_{\tau_q}(1+2\vartheta_i\|V^i_{n_i(\tau_q)}\|^{1/2})\leq \\
&r^i_{\tau_q}\bigg(1+2\vartheta_i \big(n_i(\tau_q)+\|\sum_{k=1}^{n_i(\tau_q)} z^i_k {z^i_k}^\top\|\big)^{0.5}\bigg):=\bar{\mu}^{\tau_q}_i \label{eq:mofid}
	\end{align}
where $r^i_{\tau_q}$ is $\mathcal{O}\big(\log (n_i(\tau_q))\big)$ considering the definition of $\epsilon_i$. In the second inequality we apply the fact that $\lambda_i^{\tau_q}< n_i(\tau_q)$ considering the definition given for $\lambda_{\tau_q}^i$ below.

Applying the perturbation Lemma \ref{matix_perturbation_Bound1} in (\ref{eq:Lemma8_res}) of Lemma \ref{matix_policy} yields	
\begin{align}
\nonumber &P(\hat{\Theta}^i_{\tau_q}, Q^i, R^i)\succeq Q^i+K^\top (\hat{\Theta}^i_{\tau_q}, Q^i, R^i)R^i K(\hat{\Theta}^i_{\tau_q}, Q^i, R^i)\\
\nonumber &+(A_*+B_*^iK(\hat{\Theta}^i_{\tau_q}, Q^i, R^i))^\top P(\hat{\Theta}^i_{\tau_q}, Q^i, R^i)\times \\
\nonumber &(A_*+B_*^iK(\hat{\Theta}^i_{\tau_q}, Q^i, R^i))-2\bar{\mu}_i^{\tau_q}\|P(\hat{\Theta}^i_{\tau_q}, Q^i, R^i)\|_*\times\\
&\begin{pmatrix} I \\ K(\hat{\Theta}^i_{\tau_q}, Q^i, R^i) \end{pmatrix}{V^i}^{-1}_{n_i(\tau_q)}\begin{pmatrix} I \\ K(\hat{\Theta}^i_{\tau_q}, Q^i, R^i) \end{pmatrix}^\top.
\label{eq:Lemma8_res22}
\end{align}

Recalling the dual formulation (\ref{eq:RedSDP_DUAL}) and the fact $\nu_i\geq J_*^i$ one can write 
	\begin{align*}
\nu_i\geq J_*^i\geq P(\hat{\Theta}^i_{\tau_q}, Q^i, R^i)\bullet W\geq \begin{pmatrix}
Q^i & 0 \\
0 & R^i
\end{pmatrix}\bullet W\geq \alpha^i_0\sigma_{\omega}^2 
	\end{align*}
	which clearly shows that $\nu_i/\alpha^i_0\sigma_{\omega}^2\geq1$. 
	
Now we let $\lambda^{\tau_q}_i:= 4\nu_i\bar{\mu}^{\tau_q}_i/\alpha^i_0\sigma_{\omega}^2$  and together with $\|P(\hat{\Theta}^i_{\tau_q}, Q^i, R^i)\|_*\leq\nu_i/\sigma_{\omega}^2$, and the fact that $V^i_{n_i(t)}\geq \lambda^{\tau_q}_i I$ $\forall t$ it yields	
\begin{align}
\nonumber \bar{\mu}_i\|P(\hat{\Theta}^i_{\tau_q}, Q^i, R^i)\|_*{V^i}_{n_i({\tau_q})}^{-1}&\leq \bar{\mu}_i\frac{\nu_i}{\sigma_{\omega}^2}\frac{\alpha^i_0\sigma_{\omega}^2}{4\nu_i\bar{\mu}_i}I\\
&\leq\bar{\mu}_i\frac{\nu_i}{\sigma_{\omega}^2}\frac{\alpha^i_0\sigma_{\omega}^2}{4\nu_i\bar{\mu}_i}I=\frac{\alpha^i_0}{4}I \label{eq:stabCon2}
\end{align}
plugging (\ref{eq:stabCon2}) together with the assumptions $Q^i, R^i\geq\alpha^i_0I$ into (\ref{eq:Lemma8_res22}) yields,

\begin{align}
\nonumber P(\hat{\Theta}^i_{\tau_q}, Q^i, R^i)& \geq \frac{\alpha^i_0}{2} I+\frac{\alpha^i_0}{2}K^\top (\hat{\Theta}^i_{\tau_q}, Q^i, R^i)K(\hat{\Theta}^i_{\tau_q}, Q^i, R^i)\\
 \nonumber &+(A_*+B^i_*K(\hat{\Theta}^i_{\tau_q}, Q^i, R^i))^\top P(\hat{\Theta}^i_{\tau_q}, Q^i, R^i)\\
 &\times (A_*+B^i_*K(\hat{\Theta}^i_{\tau_q}, Q^i, R^i)). \label{eq:inequality_of_P}
\end{align} 

Defining $\kappa_i=\sqrt{\frac{2\nu_i}{\alpha_0^i\sigma_{\omega}^2}}$ and using the fact that $\|P(\hat{\Theta}^i_{\tau_q}, Q^i, R^i)\|\leq\frac{\nu_i}{\sigma_{\omega}^2}$ (which implies $\kappa_i^{-2} P(\hat{\Theta}^i_{\tau_q}, Q^i, R^i)\leq \frac{\nu_i\kappa_i^{-2}}{\sigma_{\omega}^{2}}I$),

it holds true that
\begin{align}
P(\hat{\Theta}^i_{\tau_q}, Q^i, R^i)-\frac{1}{2}\alpha^i_0I\leq(1-\kappa_i^{-2}) P(\hat{\Theta}^i_{\tau_q}, Q^i, R^i) \label{eq:kappa_in_inequality}
\end{align}
From (\ref{eq:inequality_of_P}) and (\ref{eq:kappa_in_inequality}) one has:
	\begin{align}
\nonumber &{P}^{-1/2}(\hat{\Theta}^i_{\tau_q}, Q^i, R^i)(A_*+B^i_*K(\hat{\Theta}^i_{\tau_q}, Q^i, R^i))^\top \times \\
\nonumber &P(\hat{\Theta}^i_{\tau_q}, Q^i, R^i)(A_*+B^i_*K(\hat{\Theta}^i_{\tau_q}, Q^i, R^i)){P}^{-1/2}(\hat{\Theta}^i_{\tau_q}, Q^i, R^i)\\
&\leq (1-\kappa_i^{-2})I. \label{eq:uppBound_L}
\end{align}
Denoting 
\begin{align*}
    H_{\tau_q}^i=&{P}^{1/2}(\hat{\Theta}^i_{\tau_q}, Q^i, R^i)\\
    L_{\tau_q}t^i=&{P}^{-1/2}(\hat{\Theta}^i_{\tau_q}, Q^i, R^i)(A_*+B^i_*K(\hat{\Theta}^i_{\tau_q}, Q^i, R^i))\\
    &{P}^{1/2}(\hat{\Theta}^i_{\tau_q}, Q^i, R^i)
\end{align*}

 (\ref{eq:uppBound_L}) leads $\|L^i_{\tau_q}\|\leq\sqrt{1-\kappa_i^{-2}}\leq 1-1/2\kappa_i^{-2}$. 

Furthermore, (\ref{eq:inequality_of_P}) results in $P(\hat{\Theta}^i_{\tau_q}, Q^i, R^i)\geq \frac{\alpha^i_0}{2}{K}^\top (\hat{\Theta}^i_{\tau_q}, Q^i, R^i)K(\hat{\Theta}^i_{\tau_q}, Q^i, R^i)$ which together with $\|P(\hat{\Theta}^i_{\tau_q}, Q^i, R^i)\|\leq\frac{\nu_i}{\sigma_{\omega}^2}$  imply $\|K(\hat{\Theta}^i_{\tau_q}, Q^i, R^i)\|\leq \kappa_i$. No need to mention that in an epoch, for any  $t\in [\tau_q,\; \tau_{q+1}]$, $K(\hat{\Theta}^i_{t}, Q^i, R^i)=K(\hat{\Theta}^i_{\tau_q}, Q^i, R^i)$. 

This automatically yields $\|H^i_{\tau_q}\|\leq\sqrt{\nu_i/\sigma_{\omega}^2}$ and $\|{H^i_{\tau_q}}^{-1}\|\leq\sqrt{\frac{2}{\alpha^i_0}}$.
So, our proof is complete and $K(\hat{\Theta}^i_{\tau_q}, Q^i, R^i)$'s are $(\kappa_i,\gamma_i)-$ strongly stabilizing.
\end{proof}

\subsection{Minimum Dwell Time}

Before providing proof of Theorem \ref{thm:minimum_average_dwell_revised} we need the following Lemma. 

\begin{lemma} \cite{cohen2019learning}
\label{lem:tightnessSol}
    For mode $i$ starting $q-$th epoch at $\tau_q$ 
    \begin{align}
      P(\hat{\Theta}_{\tau_q}^i, Q^i, R^i)\preceq P({\Theta}_{*}^i, Q^i, R^i)\preceq &P(\hat{\Theta}_{\tau_q}^i, Q^i, R^i)+ \chi_{\tau_q}^i\label{eq:closness ofSolu}
    \end{align}
    where
    \begin{align}
 \nonumber  &\chi_{\tau_q}^i:=\frac{2\kappa_i^2\mu_i^{\tau_q}}{\gamma}\|P(\hat{\Theta}^i_{\tau_q}, Q^i, R^i)\|_* \\
  &\bigg\|\begin{pmatrix} I \\ K(\hat{\Theta}^i_{\tau_q}, Q^i, R^i) \end{pmatrix}{V^i}^{-1}_{n_i(\tau_q)}\begin{pmatrix} I \\ K(\hat{\Theta}^i_{\tau_q}, Q^i, R^i) \end{pmatrix}^\top\bigg\|I \label{eq:chiddef}
    \end{align}
    holds true.
\end{lemma}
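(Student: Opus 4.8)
The plan is to establish the two inequalities separately, in each case combining the structure of the relaxed dual SDP (\ref{eq:RedSDP_DUAL}) with the Perturbation Lemma (Lemma \ref{matix_perturbation_Bound1}) to trade the estimated parameter $\hat{\Theta}^i_{\tau_q}$ for the true one $\Theta^i_*$, which lies in the confidence ellipsoid $\mathcal{C}^i_{\tau_q}(\delta)$ with probability at least $1-\delta$. Throughout I write $\hat{P}:=P(\hat{\Theta}^i_{\tau_q},Q^i,R^i)$ for the relaxed dual solution, $P_*:=P({\Theta}^i_{*},Q^i,R^i)$ for the exact (true) dual Riccati solution appearing in the statement, $K:=K(\hat{\Theta}^i_{\tau_q},Q^i,R^i)$, $K_*:=K_*(\Theta^i_*,Q^i,R^i)$, and $\Phi(K'):=\begin{pmatrix} I \\ K' \end{pmatrix}$. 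By Theorem \ref{Stability_thm17} the gain $K$ renders $A^i_*+B^i_*K$ $(\kappa_i,\gamma_i)$-strongly stable, and $A^i_*+B^i_*K_*$ is stable as the optimal closed loop.

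For the lower inequality $\hat{P}\preceq P_*$, I would use that $\hat{P}$ is feasible for (\ref{eq:RedSDP_DUAL}), so the feasibility matrix is PSD; pre- and post-multiplying it by $\Phi(K')^\top$ and $\Phi(K')$ for an arbitrary gain $K'$ yields the Bellman-type inequality
\[\hat{P}\preceq Q^i+{K'}^\top R^i K'+(\hat{A}+\hat{B}K')^\top \hat{P}(\hat{A}+\hat{B}K')-\mu^i_{n_i(\tau_q)}\|\hat{P}\|_*\Phi(K')^\top {V^i}^{-1}_{n_i(\tau_q)}\Phi(K'),\]
valid for every $K'$. Choosing $K'=K_*$ and applying Lemma \ref{matix_perturbation_Bound1} to replace $(\hat{A}+\hat{B}K_*)$ by $(A^i_*+B^i_*K_*)$ costs exactly $+\mu^i_{n_i(\tau_q)}\|\hat{P}\|_*\Phi(K_*)^\top {V^i}^{-1}_{n_i(\tau_q)}\Phi(K_*)$, which cancels the relaxation term and leaves $\hat{P}\preceq Q^i+K_*^\top R^i K_*+(A^i_*+B^i_*K_*)^\top \hat{P}(A^i_*+B^i_*K_*)$. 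Subtracting the exact Riccati identity for $P_*$ at the optimal gain $K_*$ gives $\hat{P}-P_*\preceq(A^i_*+B^i_*K_*)^\top(\hat{P}-P_*)(A^i_*+B^i_*K_*)$; iterating this contraction with the stable matrix $A^i_*+B^i_*K_*$ sends the right-hand side to zero, so $\hat{P}-P_*\preceq 0$.

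For the upper inequality $P_*\preceq \hat{P}+\chi^i_{\tau_q}$, I would start from the relaxed Riccati fixed-point equation for $\hat{P}$ at gain $K$ supplied by Lemma \ref{matix_policy}, and apply Lemma \ref{matix_perturbation_Bound1} in the opposite direction to lower-bound $(\hat{A}+\hat{B}K)^\top \hat{P}(\hat{A}+\hat{B}K)$ by its true-dynamics version minus $\mu^i_{n_i(\tau_q)}\|\hat{P}\|_*\Phi(K)^\top {V^i}^{-1}_{n_i(\tau_q)}\Phi(K)$; combined with the relaxation term already present, this produces a deficit $Z:=2\mu^i_{n_i(\tau_q)}\|\hat{P}\|_*\Phi(K)^\top {V^i}^{-1}_{n_i(\tau_q)}\Phi(K)$. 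On the other side, Bellman optimality of $P_*$ gives $P_*\preceq Q^i+K^\top R^i K+(A^i_*+B^i_*K)^\top P_*(A^i_*+B^i_*K)$ for the suboptimal gain $K$. Subtracting yields $P_*-\hat{P}\preceq(A^i_*+B^i_*K)^\top(P_*-\hat{P})(A^i_*+B^i_*K)+Z$, and invoking Lemma \ref{lem:komaki} with the $(\kappa_i,\gamma_i)$-stable matrix $A^i_*+B^i_*K$ gives $P_*-\hat{P}\preceq\frac{\kappa_i^2}{\gamma_i}\|Z\|I$, which is exactly $\chi^i_{\tau_q}$ as defined in (\ref{eq:chiddef}).

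The main obstacle I anticipate is the careful two-directional bookkeeping of the Perturbation Lemma: matching its $X,\Delta$ to the closed-loop map $\Theta^\top\Phi(K')$, checking the threshold $\mu\ge r(1+2\|X\|\|V\|^{1/2})$ for both $K$ and $K_*$ so that the correction carries precisely the coefficient $\mu^i_{n_i(\tau_q)}$ (which must cancel in the lower bound and double to give the factor $2$ in $Z$), and ensuring the $(\kappa_i,\gamma_i)$-stability used by Lemma \ref{lem:komaki} is the one guaranteed by Theorem \ref{Stability_thm17}. Everything else is exact-cancellation algebra together with a standard contraction argument.
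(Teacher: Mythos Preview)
The paper does not supply its own proof of this lemma; it is stated with a citation to \cite{cohen2019learning} and then used directly in the proof of Theorem \ref{thm:minimum_average_dwell_revised}. Your reconstruction is correct and is precisely the argument one would extract from the ingredients the paper does provide: Lemma \ref{matix_perturbation_Bound1} (perturbation), Lemma \ref{matix_policy} (the relaxed Riccati identity at the extracted gain $K$), Theorem \ref{Stability_thm17} ($(\kappa_i,\gamma_i)$-strong stability of $A^i_*+B^i_*K$), and Lemma \ref{lem:komaki} (the $X\preceq Y^\top X Y+Z\Rightarrow X\preceq(\kappa^2/\gamma)\|Z\|I$ bound). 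Both halves of your argument are sound: for the lower bound, feasibility of $\hat{P}$ in (\ref{eq:RedSDP_DUAL}) evaluated at $K_*$, one application of the Perturbation Lemma cancelling the relaxation term, and the stable-contraction iteration indeed give $\hat{P}\preceq P_*$; for the upper bound, combining (\ref{eq:Lemma8_res22}) with the Bellman inequality for $P_*$ at the suboptimal gain $K$ yields $P_*-\hat{P}\preceq (A^i_*+B^i_*K)^\top(P_*-\hat{P})(A^i_*+B^i_*K)+Z$, after which Lemma \ref{lem:komaki} delivers exactly $\chi^i_{\tau_q}$. The bookkeeping concern you raise is real but minor: the threshold in Lemma \ref{matix_perturbation_Bound1} depends on $\|\Theta\|$ (bounded by $\vartheta_i$), not on the particular gain plugged into $\Phi(\cdot)$, so the same $\mu^i_{n_i(\tau_q)}$ works at both $K$ and $K_*$.
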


\textbf{Proof of Theorem \ref{thm:minimum_average_dwell_revised}}

\begin{proof}
	Let the system actuate in the mode $i$ within an epoch starting at time $\tau_q$ and ending at $\tau_{q+1}$. Then for any $t\in [\tau_q, \; \tau_{q+1}]$, we designate the potential Lyapunov function as follows:
	\begin{align}
	\mathcal{V}_i(t)=x_t^\top P(\hat{\Theta}^i_{\tau_q}, Q^i, R^i)x_t
	\end{align}
	where $P(\hat{\Theta}^i_{\tau_q}, Q^i, R^i)$ is the solution of relaxed dual program (\ref{eq:RedSDP_DUAL}) for mode $i$, computed by $n_i(\tau_q)$ number of data. Notably, this solution is both positive definite and radially unbounded. Also, by applying Rayleigh-Ritz inequality one can write
 
	\begin{align}
	\nonumber &\underline{\lambda}\big(P(\hat{\Theta}^i_{\tau_q}, Q^i, R^i)\big)\mathbb{E}[x_t^\top x_t| \mathcal{F}_{t-1}]\leq\\
 &\mathbb{E}[\mathcal{V}_i(t)| \mathcal{F}_{t-1}]\leq \overline{\lambda}\big(P(\hat{\Theta}^i_{\tau_q}, Q^i, R^i)\big)\mathbb{E}[x_t^\top x_t| \mathcal{F}_{t-1}].\label{Ray-Rit}
	\end{align}

by martingale difference properties of process noise we have

	\begin{align}
	\nonumber &\mathbb{E}[\mathcal{V}_i(t+1)| \mathcal{F}_t]-\mathcal{V}_i(t)=\\
 \nonumber &x^\top_t(A^i_*+B^i_*K(\hat{\Theta}^i_{\tau_q}, Q^i, R^i))^\top P(\hat{\Theta}^i_{\tau_q}, Q^i, R^i)\times \\
 \nonumber &(A^i_*+B^i_*K(\hat{\Theta}^i_{\tau_q}, Q^i, R^i))x_t+\\
	& \mathbb{E}[\omega^\top_{t+1}P(\hat{\Theta}^i_{\tau_q}, Q^i, R^i)\omega_{t+1}|\mathcal{F}_t]-x^T_tP(\hat{\Theta}^i_{\tau_q}, Q^i, R^i)x_t \label{eq:LyapunocIneq}
	\end{align}
	Now, using (\ref{eq:Lemma8_res22}) one can write,

\begin{align}
\nonumber &P(\hat{\Theta}^i_{\tau_q}, Q^i, R^i)- Q^i-K^\top (\hat{\Theta}^i_{\tau_q}, Q^i, R^i)R^i K(\hat{\Theta}^i_{\tau_q}, Q^i, R^i)+\\
\nonumber &2\mu_i\|P(\hat{\Theta}^i_{\tau_q}, Q^i, R^i)\|_*\times \\
\nonumber &\begin{pmatrix} I \\ K(\hat{\Theta}^i_{\tau_q}, Q^i, R^i) \end{pmatrix}{V^i}^{-1}_{n_i(\tau_q)}\begin{pmatrix} I \\ K(\hat{\Theta}^i_{\tau_q}, Q^i, R^i) \end{pmatrix}^\top \succeq \\
\nonumber &(A^i_*+B_*^iK({\Theta}^i_{\tau_q}, Q^i, R^i))^\top P(\hat{\Theta}^i_{\tau_q}, Q^i, R^i)\times \\
&(A^i_*+B_*^iK(\hat{\Theta}^i_{\tau_q}, Q^i, R^i)). \label{eq:useful3}
\end{align}

Substituting which in (\ref{eq:LyapunocIneq}) yields
 
	\begin{align}
\nonumber &\mathbb{E}[\mathcal{V}_i(t+1)|\mathcal{F}_t]-\mathcal{V}_i(t)\leq \\
&-x^\top_t\mathcal{H}(\mathcal{C}^i_{\tau_q}(\delta))x_t+\overline{\lambda}\big(P(\hat{\Theta}^i_{\tau_q}, Q^i, R^i)\big)W \label{eq:simplifiedLyap}
	\end{align}
	in which,
	\begin{align}
	\nonumber &\mathcal{H}(\mathcal{C}^i_{\tau_q}(\delta))=\\
\nonumber &Q^i+K^\top (\hat{\Theta}^i_{\tau_q}, Q^i, R^i)R^i K(\hat{\Theta}^i_{\tau_q}, Q^i, R^i)-\\
\nonumber&2\mu_i^{\tau_q}\|P(\hat{\Theta}^i_{\tau_q}, Q^i, R^i)\|_*\\
 &\begin{pmatrix} I \\ K(\hat{\Theta}^i_{\tau_q}, Q^i, R^i) \end{pmatrix}{V^i}^{-1}_{n_i(\tau_q)}\begin{pmatrix} I \\ K(\hat{\Theta}^i_{\tau_q}, Q^i, R^i) \end{pmatrix}^\top .\label{eq:Hdefinition}
	\end{align} 


It is noteworthy that (\ref{eq:Hdefinition}) can be easily computed using confidence ellipsoid $\mathcal{C}^i_{\tau_q}(\delta)$.

 Observing that 

\begin{align*}
    \mathbb{E}[\mathbb{E}[\mathcal{V}_i(t+1)| \mathcal{F}_t]|\mathcal{F}_{t-1}]=\mathbb{E}[\mathcal{V}_i(t+1)| \mathcal{F}_t]
\end{align*}
holds true and the fact that
\begin{align*}
    \mathbb{E}[\mathcal{H}(\mathcal{C}^i_{\tau_q}(\delta))| \mathcal{F}_{t-1}]=\mathcal{H}(\mathcal{C}^i_{\tau_q}(\delta))
\end{align*}
and 
\begin{align*}
    \mathbb{E}[P(\hat{\Theta}^i_{\tau_q}, Q^i, R^i)| \mathcal{F}_{t-1}]=P(\hat{\Theta}^i_{\tau_q}, Q^i, R^i).
\end{align*}
are true for all $t\geq \tau_q$ we can rewrite (\ref{eq:simplifiedLyap}) as follows
	\begin{align}
	\nonumber &\mathbb{E}[\mathcal{V}_i(t+1)| \mathcal{F}_t]-\mathbb{E}[\mathcal{V}_i(t)| \mathcal{F}_{t-1}] \leq \\
 &-\underline{\lambda}\big(\mathcal{H}(\mathcal{C}^i_{\tau_q}(\delta))\big)\mathbb{E}[x_t^\top x_t| \mathcal{F}_{t-1}]+\overline{\lambda}\big(P(\hat{\Theta}^i_{\tau_q}, Q^i, R^i)\big)\sigma_{\omega}^2
	\end{align}
	combining which with (\ref{Ray-Rit}) gives
	\begin{align}
	&\nonumber \mathbb{E}[\mathcal{V}_i(t+1)| \mathcal{F}_t]-\mathbb{E}[\mathcal{V}_i(t)| \mathcal{F}_{t-1}]\leq\\
 & -\frac{\underline{\lambda}\big(\mathcal{H}(\mathcal{C}^i_{\tau_q}(\delta)\big)}{\overline{\lambda}\big(P(\hat{\Theta}^i_{\tau_q}, Q^i, R^i)\big)}\mathbb{E}[\mathcal{V}_i(t)| \mathcal{F}_{t-1}]+\overline{\lambda}\big(P(\hat{\Theta}^i_{\tau_q}, Q^i, R^i)\big)\sigma_{\omega}^2
	\end{align}
which can be briefly rewritten as follows:
	\begin{align}
	\nonumber &\mathbb{E}[\mathcal{V}_i(t+1)| \mathcal{F}_t]\leq\\
 &\bigg(1-\eta \big(\mathcal{C}^i_{\tau_q}(\delta)\big)\bigg)\mathbb{E}[\mathcal{V}_i(t)| \mathcal{F}_{t-1}]+\overline{\lambda}\big(P(\hat{\Theta}^i_{\tau_q}, Q^i, R^i)\big)\sigma_{\omega}^2 \label{ineq_epoch}
	\end{align}
 where 
	\begin{align}
	\eta \big(\mathcal{C}^i_{\tau_q}(\delta)\big):= \frac{\underline{\lambda}\big(\mathcal{H}(\mathcal{C}^i_{\tau_q}(\delta)\big)}{\overline{\lambda}\big(P(\hat{\Theta}^i_{\tau_q}, Q^i, R^i)\big)}. \label{eq:etadefApp}
	\end{align}

By carrying the inequality (\ref{ineq_epoch}) through an epoch cycle, starting from $\tau_{q+1}^-$ (just prior to the next switch time) and propagating back to $\tau_{q}^+$ (immediately after the current switch time), we arrive at:

 	\begin{align}
	\nonumber &\mathbb{E}[\mathcal{V}_i(\tau_{q+1}^-)| \mathcal{F}_{\tau_{q+1}-1}]\leq\\
 \nonumber &\bigg(1-\eta \big(\mathcal{C}^i_{\tau_q}(\delta)\big)\bigg)^{\tau_{q+1}-\tau_q}\mathbb{E}[\mathcal{V}_i(\tau_{q}^+)|\mathcal{F}_{\tau_{q}-1} ]+\\
 &\sum_{k=\tau_q+1}^{\tau_{q+1}}\bigg(1-\eta \big(\mathcal{C}^i_{\tau_q}(\delta)\big)\bigg)^{k-\tau_q-1}\overline{\lambda}\big(P(\hat{\Theta}^i_{\tau_q}, Q^i, R^i)\big)\sigma_{\omega}^2 \label{eq:inmode}
	\end{align}
 
Suppose the system, after being in mode $i$ at time $\tau_{q+1}$, switches to mode $j$. In this context, we introduce teh following candidate Lyapunov function for mode $j$:
    \begin{align}
	\mathcal{V}_j(t)=x_t^\top P(\hat{\Theta}^j_{\tau_q}, Q^j, R^j)x_t,
	\end{align} 
Then one can establish

   	\begin{align}
\nonumber &\mathbb{E}[\mathcal{V}_i(\tau_{q+1}^-)| \mathcal{F}_{\tau_{q+1}-1}]\geq \\
\nonumber &\underline{\lambda}\big(P(\hat{\Theta}^i_{\tau_q}, Q^i, R^i)\big)\mathbb{E}[x_{\tau_{q+1}}^\top x_{\tau_{q+1}}| \mathcal{F}_{\tau_{q+1}-1}]\geq \\
&\frac{\underline{\lambda}\big(P(\hat{\Theta}^i_{\tau_q}, Q^i, R^i)}{\overline{\lambda}\big(P({\Theta}^j_{*}, Q^j, R^j)\big)}\mathbb{E}[\mathcal{V}_j(\tau_{q+1}^+)| \mathcal{F}_{\tau_{q+1}-1}]
\label{eq:jumpLyap}
	\end{align}
 where in the first and second inequalities we applied (\ref{Ray-Rit}). 

By defining

\begin{align}
   \rho (\mathcal{C}^i_{\tau_q}(\delta), \mathcal{C}^j_{\tau_{q}}(\delta)):=\frac{\overline{\lambda}\big(P(\hat{\Theta}^j_{\tau_{q}}, Q^j, R^j))\big)}{\underline{\lambda}\big(P(\hat{\Theta}^i_{\tau_{q}}, Q^i, R^i)}
\end{align}

 and combining (\ref{eq:jumpLyap}) with the left hand side of (\ref{eq:inmode}) yields

 	\begin{align}
	\nonumber &\mathbb{E}[\mathcal{V}_j(\tau_{q+1}^+)| \mathcal{F}_{\tau_{q+1}-1}]\leq\\
\nonumber  &\rho (\mathcal{C}^i_{\tau_q}(\delta), \mathcal{C}^j_{\tau_{q}}(\delta))\bigg(1-\eta \big(\mathcal{C}^i_{\tau_q}(\delta)\big)\bigg)^{\tau_{q+1}-\tau_q}\mathbb{E}[\mathcal{V}_i(\tau_{q}^+)|\mathcal{F}_{\tau_{q}-1} ]+ \\
 &\sum_{k=\tau_q+1}^{\tau_{q+1}}\bigg(1-\eta \big(\mathcal{C}^i_{\tau_q}(\delta)\big)\bigg)^{k-\tau_q-1} \overline{\lambda}\big(P(\hat{\Theta}^i_{\tau_q}, Q^i, R^i)\big)\sigma_{\omega}^2 \label{eq:UppExV0}
	\end{align}

Now applying (\ref{Ray-Rit}) yields 

	\begin{align}
	\nonumber &\mathbb{E}[x^\top_{\tau_{q+1}^+}x_{\tau_{q+1}^+}| \mathcal{F}_{\tau_{q+1}-1}]\leq \rho (\mathcal{C}^i_{\tau_q}(\delta), \mathcal{C}^j_{\tau_{q}}(\delta)) \mathcal{X} (\mathcal{C}^i_{\tau_q}(\delta), \mathcal{C}^j_{\tau_{q}}(\delta)) \\
\nonumber  &\bigg(1-\eta \big(\mathcal{C}^i_{\tau_q}(\delta)\big)\bigg)^{\tau_{q+1}-\tau_q}\mathbb{E}[x^\top_{\tau_{q}^+}x_{\tau_{q}^+}|\mathcal{F}_{\tau_{q}-1} ]+ \\
 &\sum_{k=\tau_q+1}^{\tau_{q+1}}\bigg(1-\eta \big(\mathcal{C}^i_{\tau_q}(\delta)\big)\bigg)^{k-\tau_q-1} \mathcal{X} (\mathcal{C}^i_{\tau_q}(\delta), \mathcal{C}^j_{\tau_{q}}(\delta))\sigma_{\omega}^2 \label{eq:UppExV}
	\end{align}

where

\begin{align*}
    \mathcal{X} (\mathcal{C}^i_{\tau_q}(\delta), \mathcal{C}^j_{\tau_{q}}(\delta)):=\frac{\overline{\lambda}\big(P(\hat{\Theta}^i_{\tau_{q}}, Q^i, R^i)\big)}{\underline{\lambda}\big(P(\hat{\Theta}^j_{\tau_{q}}, Q^j, R^j)}
\end{align*}

The boundedness of the second term on the right hand side of (\ref{eq:UppExV}) follows by following argument. Recalling the definition of $\mathcal{H} (\mathcal{C}^i_{\tau_q}(\delta))$ as provided in (\ref{eq:Hdefinition}), and taking into consideration (\ref{eq:Lemma8_res22}), along with the fact that $P(\hat{\Theta}^i_{\tau_q}, Q^i, R^i)\succeq 0$, this leads to the conclusion that $P(\hat{\Theta}^i_{\tau_q}, Q^i, R^i)\succeq \mathcal{H}(\mathcal{C}^i_{\tau_q}(\delta))$, resulting in $\eta \big(\mathcal{C}^i_{\tau_q}(\delta)\big)<1$. Also by stability condition (\ref{eq:stabCon2}) it is straight forward to notice $\eta \big(\mathcal{C}^i_{\tau_q}(\delta)\big)>0$. This ensures that the sum of the geometric series $1-\eta \big(\mathcal{C}^i_{\tau_q}(\delta)\big)$ remains bounded. Moreover, in conjunction with $\overline{\lambda}\big(P(\hat{\Theta}^i_{\tau_q}, Q^i, R^i)\leq \nu_i/\sigma_{\omega}^2$, the boundedness of the second term on the right-hand side of (\ref{eq:UppExV}) is guaranteed.

In order to manage the growth of the expected state norm, the following condition needs to hold for a user-defined $\mathcal{Y}<1$:

\begin{align}
   \rho (\mathcal{C}^i_{\tau_q}(\delta), \mathcal{C}^j_{\tau_{q}}(\delta))  \mathcal{X} (\mathcal{C}^i_{\tau_q}(\delta), \mathcal{C}^j_{\tau_{q}}(\delta)) \big(1-\eta \big(\mathcal{C}^i_{\tau_q}(\delta)\big)^{\tau_{dw}^q}&\leq \mathcal{Y}\label{eq:contaction}
\end{align}

which is done by requiring $\tau_{dw}^q:=\tau_{q+1}-\tau_{q}$, denoted as the dwell time of the epoch starting at time $\tau_q$, meets the condition:

 \begin{align}
	\tau_{dw}^q\geq -\frac{\ln \rho (\mathcal{C}^i_{\tau_q}(\delta), \mathcal{C}^j_{\tau_{q}}(\delta))+\ln  \mathcal{X} (\mathcal{C}^i_{\tau_q}(\delta), \mathcal{C}^j_{\tau_{q}}(\delta))-\ln \mathcal{Y}}{\ln \big(1-\eta \big(\mathcal{C}^i_{\tau_q}(\delta)\big)}.
	\label{tau_a} 
	\end{align}

Note that when 

\begin{align*}
    \ln \rho (\mathcal{C}^i_{\tau_q}(\delta), \mathcal{C}^j_{\tau_{q}}(\delta))+\ln  \mathcal{X} (\mathcal{C}^i_{\tau_q}(\delta), \mathcal{C}^j_{\tau_{q}}(\delta))< \ln \mathcal{Y}
\end{align*}
 it indicates a benign situation where there is no necessity to prolong mode $i$ and a subsequent switch can occur quickly. However, since it is mandatory to spend a minimum of one second in each mode, we set forth the minimum mode-specific dwell time as follows:

\begin{align}
  \nonumber  & [\tau_{md}^{ij}]_{q}:=\\
    &\max\bigg\{1, -\frac{\ln \rho (\mathcal{C}^i_{\tau_q}(\delta), \mathcal{C}^j_{\tau_{q}}(\delta))+\ln \mathcal{X} (\mathcal{C}^i_{\tau_q}(\delta), \mathcal{C}^j_{\tau_{q}}(\delta))-\ln \mathcal{Y}}{\ln \big(1-\eta \big(\mathcal{C}^i_{\tau_q}(\delta)\big)}\bigg\} \label{eq:komaki3}
\end{align}

which completes the proof.
\end{proof}

\textbf{Proof of Lemma \ref{thm:res_sequential_stability_resp}}

\begin{proof}

Our first step involves determining a lower bound for $\eta \big(\mathcal{C}^i_{\tau_q}(\delta)\big)$. This can be derived as follows: Given the definition of $\mathcal{H}(\mathcal{C}^i_{\tau_q}(\delta))$ according to equation (\ref{eq:Hdefinition}), we can conclude that

\begin{align*}
   \mathcal{H}(\mathcal{C}^i_{\tau_q}(\delta))\geq \alpha_0(1+\kappa^2)-2 \frac{\alpha_0}{4}(1+\kappa^2)=\frac{\alpha_0}{2}(1+\kappa^2)
\end{align*}
in which we applied (\ref{eq:stabCon2}). Consequently, following the definition of $\eta \big(\mathcal{C}^i_{\tau_q}(\delta)\big)$, one can write

\begin{align}
   \nonumber  &\eta \big(\mathcal{C}^i_{\tau_q}(\delta)\big) \geq \frac{\frac{\alpha^i_0}{2}(1+\kappa_i^2)}{\frac{\nu^i}{\sigma_{\omega}^2}}=\frac{(1+\kappa_i^2)}{\kappa_i^2}\geq \frac{1}{\kappa_i^2}\geq \frac{1}{\kappa_*^2}:=\eta_* \\
   &  \mathcal{X} (\mathcal{C}^i_{\tau_q}(\delta), \mathcal{C}^j_{\tau_{q}}(\delta))\leq \frac{\frac{\nu^i}{\sigma_{\omega}^2}}{\frac{\alpha_0^j}{2}}=\kappa_i^2 \frac{\alpha_0^i}{\alpha_0^j}\leq \kappa_*^2 \frac{\alpha_1^*}{\alpha_0^*}:=\mathcal{X}_*
    \label{eq:khoobeaali}
\end{align}
where the last inequality of both expression hold by definitions $\max_{i\in |\mathcal{M}|}\kappa_i=: \kappa_*$ and $\alpha^*_0 I\preceq Q^i, R^i \preceq \alpha^*_1 I$ for all $i\in |\mathcal{M}|$.

Let's begin by providing the proof for the first statement of the lemma. We start with equations (\ref{eq:UppExV}) and (\ref{eq:contaction}), which yield the following result:

	\begin{align}
	\nonumber &\mathbb{E}[x^\top_{\tau_{q+1}^+}x_{\tau_{q+1}^+}| \mathcal{F}_{\tau_{q+1}-1}]\leq \mathcal{Y}\mathbb{E}[x^\top_{\tau_{q}^+}x_{\tau_{q}^+}|\mathcal{F}_{\tau_{q}-1} ]+ \\
 &\sum_{k=\tau_q+1}^{\tau_{q+1}}\bigg(1-\eta \big(\mathcal{C}^i_{\tau_q}(\delta)\big)\bigg)^{k-\tau_q-1} \mathcal{X} (\mathcal{C}^i_{\tau_q}(\delta), \mathcal{C}^j_{\tau_{q}}(\delta))\sigma_{\omega}^2 \\
 &\leq \mathcal{Y}\mathbb{E}[x^\top_{\tau_{q}^+}x_{\tau_{q}^+}|\mathcal{F}_{\tau_{q}-1} ]+ \frac{1}{\eta \big(\mathcal{C}^i_{\tau_q}(\delta)\big)} \frac{\frac{\nu^i}{\sigma_{\omega}^2}}{\frac{\alpha_0^j}{2}}\sigma_{\omega}^2\leq \\
 & \mathcal{Y}\mathbb{E}[x^\top_{\tau_{q}^+}x_{\tau_{q}^+}|\mathcal{F}_{\tau_{q}-1} ]+\kappa_i^4 \frac{\alpha_0^i}{\alpha_0^j} \sigma_{\omega}^2\leq\\
 &\mathcal{Y}\mathbb{E}[x^\top_{\tau_{q}^+}x_{\tau_{q}^+}|\mathcal{F}_{\tau_{q}-1} ]+\kappa_*^4 \frac{\alpha_1^*}{\alpha_0^*} \sigma_{\omega}^2
	\end{align}
inwhich, we have used an upper boundary of $1/\eta \big(\mathcal{C}^i_{\tau_q}(\delta)\big)$ on the summation of a geometric series, considering the fact that $0<\eta \big(\mathcal{C}^i_{\tau_q}(\delta)\big)<1$ then applied (\ref{eq:khoobeaali}). Moreover, we employed the inequalities $\overline{\lambda}\big(P(\hat{\Theta}^i_{\tau_{q}}, Q^i, R^i)\big)\leq \nu^i/\sigma_{\omega}^2$ and $\underline{\lambda}\big(P(\hat{\Theta}^j_{\tau_{q}}, Q^j, R^j)\geq \alpha_0^j/2$. Subsequently, we applied (\ref{eq:khoobeaali}). These steps together constitute the completion of the proof for the first part of the lemma.

To prove the second and third components of the lemma, we proceed as follow

Recalling definition of $k^{th}$ switch time $\tau_k=\sum_{q=0}^{k-1}[\tau_{md}^{i_qi_{q+1}}]_q$ and employing (\ref{eq:UppExV}),  we propagate from $\tau_k$ back to the initial time $t=0$. We use the intermittent variables $\eta_*$ and $\mathcal{X}_*$ defined in (\ref{eq:khoobeaali}), and apply gemoetric series bound on the summation of $(1-\eta_*)$ which results in

\begin{align}
    \mathbb{E}[x^\top_{\tau_{k}^+}x_{\tau_{k}^+}| \mathcal{F}_{\tau_{k}-1}]\leq \mathcal{Y}^k x_0^\top x_0 +\frac{\mathcal{X}_*}{\eta_*} \sigma_{\omega}^2\sum_{m=1}^{k-1} \mathcal{Y}^m \label{eq:onemorest}
\end{align}

Moreover, for any instant $t$ falling within the epoch initiated at $\tau_{k}$ and involving action in mode $j$, the application of (\ref{ineq_epoch}) along with a similar analogy to (\ref{eq:inmode}) enables us to formulate:

 	\begin{align}
	\nonumber &\mathbb{E}[\mathcal{V}_j(t)| \mathcal{F}_{t-1}]\leq\\
 \nonumber &\bigg(1-\eta \big(\mathcal{C}^j_{\tau_k}(\delta)\big)\bigg)^{t-\tau_{k}}\mathbb{E}[\mathcal{V}_j(\tau_{k}^+)|\mathcal{F}_{\tau_{k}-1} ]+\\
 &\sum_{m=\tau_{k}+1}^{t}\bigg(1-\eta \big(\mathcal{C}^j_{\tau_k}(\delta)\big)\bigg)^{k-\tau_k-1}\overline{\lambda}\big(P(\hat{\Theta}^j_{\tau_q}, Q^j, R^j)\big)\sigma_{\omega}^2 \label{eq:good to go}
	\end{align}
which yields to

	\begin{align}
	\nonumber &\mathbb{E}[x_t^\top x_t| \mathcal{F}_{t-1}]\leq \kappa_*^2 \bigg(1-\eta \big(\mathcal{C}^j_{\tau_k}(\delta)\big)\bigg)^{t-\tau_{k}}\mathbb{E}[x^\top_{\tau_{k}^+}x_{\tau_{k}^+}| \mathcal{F}_{\tau_{k}-1}]\\
 &+\frac{\mathcal{X_*}}{\eta_*}  \sigma_{\omega}^2.\label{eq:borokeberim}
	\end{align}
 Now by combining (\ref{eq:onemorest}) and (\ref{eq:borokeberim}) for $\tau_k\leq t<\tau_{k+1}$ we summarize

 \begin{align}
  \nonumber &\mathbb{E}[x_t^\top x_t| \mathcal{F}_{t-1}]\leq \mathcal{Y}^k\kappa_*^2 \bigg(1-\eta \big(\mathcal{C}^j_{\tau_k}(\delta)\big)\bigg)^{t-\tau_{k}}x_0^\top x_0 +\\
  \nonumber&(1+\frac{1}{1-\mathcal{Y}}) \frac{\mathcal{X}_*^2}{\eta_*}\sigma_{\omega}^2\leq\\
 & \mathcal{Y}^k\kappa_*^2 x_0^\top x_0+ \frac{2-\mathcal{Y}}{1-\mathcal{Y}} \frac{\mathcal{X}_*^2}{\eta_*}\sigma_{\omega}^2
\end{align}
which completes proof of the second part. The third part directly follows by the second part.
 
\end{proof}

\subsection{Minimum Mode-dependent Dwell-Time Estimation Error}

In this subsection, we give an upper-bound for dwell time estimation error $[\tau_{md}^{ij}]_{\tau_q}-\tau_*^{ij} $, which is used for the regret bound analysis.

\textbf{Proof of Theorem \ref{Thm:dwellTimeError}}

\begin{proof}

\textbf{proof of part (a)}

The most pessimistic upper limit on the error in estimating the minimum mode-dependent dwell time $[\tau_{md}^{ij}]_{\tau_q}-\tau^{ij}_*$ occurs when the type of switch is benign according to Definition \ref{def:malignant}. In such a case, $\tau_*^{ij}$ reaches its lowest value of one. Consequently, we directly bound the minimum mode-dependent dwell time itself. This result can also be applied to prove part (b) of the proof.

By Lemma \ref{lem:tightnessSol} for mode $i$ we can write

\begin{align}
\nonumber \overline{\lambda}\big(P(\hat{\Theta}^i_{\tau_q}, Q^i, R^i)\big)&\leq \overline{\lambda} (\big(P(\hat{\Theta}^i_{\tau_q}, Q^i, R^i)+ \chi^i_{\tau_q}\big)\\
\nonumber &\leq \overline{\lambda} \big(P(\hat{\Theta}^i_{\tau_q}, Q^i, R^i)\big)+ \overline{\lambda} \big( \chi^i_{\tau_q}\big)\\
& \leq \kappa_i^2 \underline{\lambda} \big(P(\hat{\Theta}^i_{\tau_q}, Q^i, R^i)\big)+ \overline{\lambda} \big( \chi^i_{\tau_q}\big) \label{eq:bekar}
\end{align}

where in the second inequality we applied Weyl's inequality and in the last inequality we applied 

\begin{align}
\frac{\overline{\lambda}\big(P(\hat{\Theta}^i_{\tau_q}, Q^i, R^i)\big)}{\underline{\lambda}\big(P(\hat{\Theta}^i_{\tau_q}, Q^i, R^i)\big)}\leq \frac{\frac{\nu_i}{\sigma_{\omega}^2}}{\alpha_0^i/2}=\kappa_i^2.
\end{align}
By dividing both sides of (\ref{eq:bekar}) by $\underline{\lambda} \big(P(\hat{\Theta}^i_{\tau_q}, Q^i, R^i)\big)$ and noting that $\underline{\lambda} \big(P(\hat{\Theta}^i_{\tau_q}, Q^i, R^i)\big)\geq \alpha_o^i/2$
one can write:
\begin{align}
\frac{\overline{\lambda}\big(P(\hat{\Theta}^j_{\tau_q}, Q^j, R^j)}{\underline{\lambda}\big(P(\hat{\Theta}^i_{\tau_q}, Q^i, R^i)\big)}\leq \kappa_i^2+ \frac{2}{\alpha_0^i}\overline{\lambda}\big(\chi^i_{\tau_q}\big).
\end{align}
Now, we have
\begin{align}
 \nonumber  &\ln \rho (\mathcal{C}^i_{\tau_q}(\delta), \mathcal{C}^j_{\tau_{q}}(\delta))+\ln  \mathcal{X} (\mathcal{C}^i_{\tau_q}(\delta), \mathcal{C}^j_{\tau_{q}}(\delta))=  \\
 \nonumber  &\ln \frac{\overline{\lambda}\big(P(\hat{\Theta}^i_{\tau_q}, Q^i, R^i)\big)}{\underline{\lambda}\big(P(\hat{\Theta}^i_{\tau_q}, Q^i, R^i)\big)} +\ln \frac{\overline{\lambda}\big(P(\hat{\Theta}^j_{\tau_q}, Q^j, R^j)\big)}{\underline{\lambda}\big(P(\hat{\Theta}^j_{\tau_q}, Q^j, R^j)\big)}\leq \\
 \nonumber & \ln \big(\kappa_i^2+ \frac{2}{\alpha_0^i}\overline{\lambda}\big(\chi^i_{\tau_q}\big)\big) + \ln \big(\kappa_j^2+ \frac{2}{\alpha_0^j}\overline{\lambda}\big(\chi^j_{\tau_q}\big)\big)=\\
  & \ln \kappa_i^2 \ln \big(1+ \frac{2}{\kappa_i^2\alpha_0^i}\overline{\lambda}\big(\chi^i_{\tau_q}\big)\big)+\ln \kappa_j^2 \ln \big(1+ \frac{2}{\kappa_j^2\alpha_0^j}\overline{\lambda}\big(\chi^j_{\tau_q}\big)\big)
\end{align}

Now considering the definition of minimum dwell time and the fact that $\eta \big(\mathcal{C}^i_{\tau_q}(\delta)\big)\geq 1/\kappa_i^2$ one can write:

\begin{align}
    -ln \big(1-\eta \big(\mathcal{C}^i_{\tau_q}(\delta)\big)\geq -ln \big(1-\kappa_i^{-2}\big).
\end{align}

which yields to

\begin{align}
 \nonumber &[\tau_{md}^{ij}]_{\tau_q}-\tau_*^{ij}\leq\\
 &[\tau_{md}^{ij}]\leq \\
   \nonumber  &\frac{ \sum_{s=i,j}\ln \kappa_s^2+ \ln \big(1+ \frac{2}{\kappa_s^2\alpha_0^s}\overline{\lambda}\big(\chi^s_{\tau_q}\big)\big)-\ln \mathcal{Y}}{-ln \big(1-\kappa_i^{-2}\big)}
\end{align}

that completes the proof.
\end{proof}

Following lemma gives an statement to be used in the regret bound analysis section.

\begin{lemma}
    The following statement holds true.

    \begin{align}
    [\tau_{md}^{ij}]_{\tau_q}-\tau_*^{ij}\leq \frac{\ln \frac{\kappa_i^2 \kappa_j^2}{\mathcal{Y}}}{-ln \big(1-\kappa_s^{-2}\big)}+\sum_{s=i,j}\frac{\sqrt{\frac{2}{\kappa_s^2 \alpha_0^s}\overline{\lambda}\big(\chi^s_{\tau_q}\big)}}{-ln \big(1-\kappa_s^{-2}\big) }
\end{align}
    
\end{lemma}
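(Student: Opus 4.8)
The starting point is the estimate already obtained inside the proof of Theorem~\ref{Thm:dwellTimeError}(a): following the same worst‑case (benign) reduction used there, $\tau_*^{ij}=1$ is the least favorable situation, and carrying that chain of inequalities through to the end gives
\[
[\tau_{md}^{ij}]_{\tau_q}-\tau_*^{ij}\;\leq\; \frac{\displaystyle\sum_{s=i,j}\Big(\ln \kappa_s^2+ \ln \big(1+ \tfrac{2}{\kappa_s^2\alpha_0^s}\,\overline{\lambda}(\chi^s_{\tau_q})\big)\Big)-\ln \mathcal{Y}}{-\ln \big(1-\kappa_i^{-2}\big)}.
\]
So the whole job is post‑processing this bound: isolate the pure‑logarithm part of the numerator from the two $\ln(1+\cdot)$ correction terms, replace each correction term by a square root, collapse the remaining logarithms, and distribute the (positive) denominator.

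The one genuinely non‑trivial ingredient is the elementary inequality $\ln(1+x)\le\sqrt{x}$, valid for every $x\ge 0$. I would prove it in a line: with $f(x)=\sqrt{x}-\ln(1+x)$ one has $f(0)=0$ and
\[
f'(x)=\frac{1}{2\sqrt{x}}-\frac{1}{1+x}=\frac{(\sqrt{x}-1)^2}{2\sqrt{x}\,(1+x)}\;\ge\;0,
\]
so $f$ is nondecreasing on $(0,\infty)$ and hence $f\ge 0$ there. Since $\chi^s_{\tau_q}\succeq 0$ (it is a nonnegative multiple of the identity by its definition \eqref{eq:chiddef}), the arguments $x_s:=\tfrac{2}{\kappa_s^2\alpha_0^s}\,\overline{\lambda}(\chi^s_{\tau_q})$ are nonnegative for $s=i,j$, so $\ln(1+x_s)\le\sqrt{x_s}$ may be substituted termwise. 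The leftover logarithms then regroup as $\ln\kappa_i^2+\ln\kappa_j^2-\ln\mathcal{Y}=\ln\frac{\kappa_i^2\kappa_j^2}{\mathcal{Y}}$, and splitting the single fraction over $-\ln(1-\kappa_i^{-2})>0$ yields
\[
[\tau_{md}^{ij}]_{\tau_q}-\tau_*^{ij}\;\leq\; \frac{\ln\frac{\kappa_i^2\kappa_j^2}{\mathcal{Y}}}{-\ln(1-\kappa_i^{-2})}\;+\;\sum_{s=i,j}\frac{\sqrt{\tfrac{2}{\kappa_s^2\alpha_0^s}\,\overline{\lambda}(\chi^s_{\tau_q})}}{-\ln(1-\kappa_i^{-2})},
\]
which is the asserted inequality.

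I do not expect a real obstacle: the statement is essentially a repackaging of Theorem~\ref{Thm:dwellTimeError}(a), and its only purpose is to hand the regret analysis a form in which the dwell‑time error is already split into a constant piece and a piece that behaves like $\sqrt{\overline{\lambda}(\chi^s_{\tau_q})}$ — hence summable over epochs once the explicit scaling of $\chi^s_{\tau_q}$ from Lemma~\ref{lem:tightnessSol} (together with $\|P(\cdot)\|_*\le\nu_s/\sigma_\omega^2$, $\mu^s_{\tau_q}=\mathcal{O}(\log n_s(\tau_q))$, and the decay of the $V$‑weighted term) is inserted. The only bookkeeping subtlety is the index appearing in the denominators: the clean derivation above keeps $-\ln(1-\kappa_i^{-2})$ in every denominator, and the displayed form is reached from it by bounding these denominators below using the monotonicity of $t\mapsto-\ln(1-t)$ (replacing, in each denominator, $\kappa_i$ by the largest relevant $\kappa$), which can only enlarge the right‑hand side.
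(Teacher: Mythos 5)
Your proposal is correct and follows essentially the same route as the paper: the paper likewise starts from the bound established in the proof of Theorem~\ref{Thm:dwellTimeError}(a) and dispatches the $\ln(1+\cdot)$ correction terms with the elementary inequality $\ln(1+x)\le x/\sqrt{1+x}\le\sqrt{x}$, which is the inequality you prove directly. Your extra care with the denominator indices (the paper's own statement is sloppy about which $\kappa$ appears there) is a harmless refinement, not a different argument.
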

\begin{proof}
    
The proof directly comes from the following useful inequality 

\begin{align*}
   ln \big(1+x\big)\leq \frac{x}{\sqrt{1+x}},\;\; \textit{for $x>0$} . 
\end{align*}

\end{proof}

\subsection{Regret Bound Analysis (Proof of Theorem \ref{thm:RegretBound})}

Recalling (\ref{eq:RegretSwitchDec}) to bound the first term $R_T^1$, we start with upper-bounding the regret in the epoch starting with $k^{th}$ switch, i.e.,  $\tau_k\leq t<\mathfrak{T}_k$. For the sake of brevity of notations, we let $s_1:=\tau_k$, $s_2:=\mathfrak{T}_k$, $i:=i_k$, and $j:=i_{k+1}$. Then the regret of actuating in the corresponding epoch is written as follows:

\begin{align*}
    \mathcal{R}_{\tau_k:\mathfrak{T}_k}=\sum _{t=\tau_k}^{\mathfrak{T}_k-1} (c_{t}^{i}-J_*(\Theta_*^{i}, Q^{i}, R^{i}))
\end{align*}
where 

\begin{align}
    c_{t}^i=x_t^\top \big(Q^i+K^\top (\hat{\Theta}^i_{\tau_k}, Q^i, R^i)R^i K(\hat{\Theta}^i_{\tau_k}, Q^i, R^i)\big)x_t
\end{align}
 as the algorithm commits to a fixed policy during the epoch.

Given that $J_*(\Theta_*^{i}, Q^{i}, R^{i})=P(\Theta_*^{i}, Q^{i}, R^{i}) \bullet \sigma_{\omega}^2I$, and  $P(\hat{\Theta}^i_{s_1}, Q^i, R^i)\preceq P({\Theta}^i_{*}, Q^i, R^i)$ (see Lemma 19 in \cite{cohen2019learning}) it yields

\begin{align}
    J_*(\Theta_*^{i}, Q^{i}, R^{i})\geq \sigma_{\omega}^2 \|P(\hat{\Theta}^i_{\tau_k}, Q^i, R^i)\|_* \label{ineqP_tFnorm}
\end{align}

On the other hand, by (\ref{eq:Lemma8_res22}) we have

\begin{align}
\nonumber & Q^i+K^\top (\hat{\Theta}^i_{\tau_k}, Q^i, R^i)R^i K(\hat{\Theta}^i_{\tau_k}, Q^i, R^i) \preceq  \\
\nonumber &P(\hat{\Theta}^i_{\tau_k}, Q^i, R^i)-\\
\nonumber &(A_*+B_*^iK(\hat{\Theta}^i_{\tau_k}, Q^i, R^i))^\top P(\hat{\Theta}^i_{\tau_k}, Q^i, R^i)\times \\ 
\nonumber &(A_*+B_*^iK(\hat{\Theta}^i_{\tau_k}, Q^i, R^i))+\\
\nonumber &2\mu_i\|P(\hat{\Theta}^i_{\tau_k}, Q^i, R^i)\|_*\times \\
&\begin{pmatrix} I \\ K(\hat{\Theta}^i_{\tau_k}, Q^i, R^i) \end{pmatrix}{V^i}^{-1}_{n_i(\tau_k)}\begin{pmatrix} I \\ K(\hat{\Theta}^i_{\tau_k}, Q^i, R^i) \end{pmatrix}^\top.
\end{align}

Combining the above inequalities results in 
\begin{align}
 \nonumber&\mathcal{R}_{\tau_k:\mathfrak{T}_k}\leq  \mathcal{R}_{\tau_k:\mathfrak{T}_k}^{11}+\mathcal{R}_{\tau_k:\mathfrak{T}_k}^{12}+\mathcal{R}_{\tau_k:\mathfrak{T}_k}^{13}+\mathcal{R}_{\tau_k:\mathfrak{T}_k}^{14}\\
 \nonumber &\mathcal{R}_{\tau_k:\mathfrak{T}_k}^{11}=\sum_{t=\tau_k}^{\mathfrak{T}_k-1} \bigg(x_t^\top P(\hat{\Theta}^{i_k}_{\tau_k}, Q^{i_k}, R^{i_k})x_t-\\
 \nonumber & \quad \quad\quad \quad\quad \quad \quad x_{t+1}^\top P(\hat{\Theta}^{i_k}_{\tau_k}, Q^{i_k}, R^{i_k})x_{t+1}\bigg) 1_{\mathcal{E}_t}\\
 \nonumber & \mathcal{R}_{\tau_k:\mathfrak{T}_k}^{12}=\sum_{t=\tau_k}^{\mathfrak{T}_k-1} \big(\omega_{t+1}^\top P(\hat{\Theta}^{i_k}_{\tau_k}, Q^{i_k}, R^{i_k})x_t \big)1_{\mathcal{E}_t}\\
 \nonumber & \mathcal{R}_{\tau_k:\mathfrak{T}_k}^{13}=\sum_{t=\tau_k}^{\mathfrak{T}_k-1} \bigg(\omega_{t+1}^\top P(\hat{\Theta}^{i_k}_{\tau_k}, Q^{i_k}, R^{i_k})\omega_{t+1}-\\
 \nonumber &\quad \quad\quad \quad\quad \quad \quad \sigma_{\omega}^2 \|P(\hat{\Theta}^i_{\tau_k}, Q^{i_k}, R^{i_k})\|_*\bigg)1_{\mathcal{E}_t}\\
 & \mathcal{R}_{\tau_k:\mathfrak{T}_k}^{14}=\sum_{t=\tau_k}^{\mathfrak{T}_k-1}\frac{4\nu_{i_k}\mu_{i_k}}{\sigma_{\omega}^2}\big(z_t^\top {V_{n_i(\tau_k)}^{i_k}}^{-1}z_t\big)1_{\mathcal{E}_t}.
\end{align}

It is worthy to note that in the terms mentioned above, we do not specify the mode index as it is known from the context,i.e., for an epoch starting at $\tau_k$ the corresponding mode is $i_k\in \mathcal{I}:=\{i_0, i_1, ..., i_{n-1}\}$.

Furthermore, recalling $R_{\mathcal{A}}^2$ definition we can write

\begin{align}
 \nonumber &\sum _{t=\mathfrak{T}_k}^{\tau_{k+1}-1} c_{t}^{i_k}=\\
\nonumber  &\sum _{t=\mathfrak{T}_k}^{\tau_{k+1}-1} \big( Q^{i_k}+K^\top (\hat{\Theta}^{i_k}_{\tau_k}, Q^{i_k}, R^{i_k})R^{i_k} K(\hat{\Theta}^{i_k}_{\tau_k}, Q^{i_k}, R^{i_k})\big)\leq \\
\nonumber & \mathcal{R}^{21}_{\mathfrak{T}_k: \tau_{k+1}}+\mathcal{R}^{22}_{\mathfrak{T}_k: \tau_{k+1}}\\
 \nonumber &\mathcal{R}^{21}_{\mathfrak{T}_k: \tau_{k+1}}=\sum _{t=\mathfrak{T}_k}^{\tau_{k+1}-1} \bigg (x_t^\top P(\hat{\Theta}^{i_k}_{\tau_k}, Q^{i_k}, R^{i_k}) x_t-\\
 \nonumber & x_t^\top (A_*^{i_k}+B_*^{i_k}K(\hat{\Theta}^{i_k}_{\tau_q}, Q^{i_k}, R^{i_k}))^\top P(\hat{\Theta}^i_{\tau_q}, Q^{i_k}, R^{i_k})\times \\&(A_*^{i_k}+B_*^iK(\hat{\Theta}^{i_k}_{\tau_q}, Q^{i_k}, R^{i_k}))  x_t\bigg) 1_{\mathcal{E}_t}\\
\nonumber & \mathcal{R}^{22}_{\mathfrak{T}_k: \tau_{k+1}}=\sum _{t=\mathfrak{T}_k}^{\tau_{k+1}-1} 2{\mu}_{i_k}^{\tau_k}\|P(\hat{\Theta}^{i_k}_{\tau_q}, Q^{i_k}, R^{i_k})\|_*\\
&\begin{pmatrix} I \\ K(\hat{\Theta}^{i_k}_{\tau_k}, Q^{i_k}, R^{i_k}) \end{pmatrix}{V^{i_k}}^{-1}_{n_i(\tau_k)}\begin{pmatrix} I \\ K(\hat{\Theta}^{i_k}_{\tau_k}, Q^{i_k}, R^{i_k}) \end{pmatrix}^\top 1_{\mathcal{E}_t}.
\end{align}

in which we applied (\ref{eq:Lemma8_res22}).

By this decomposition, now the regret (\ref{eq:RegretSwitchDec}) is upper-bounded as follows:

\begin{align}
	\nonumber R_T \leq & \mathbb {E} [\sum _{k=0}^{ns-1}  \mathcal{R}_{\tau_k:\mathfrak{T}_k}^{11}+\mathcal{R}_{\tau_k:\mathfrak{T}_k}^{12}+\mathcal{R}_{\tau_k:\mathfrak{T}_k}^{13}+\mathcal{R}_{\tau_k:\mathfrak{T}_k}^{14}]+\\
 \nonumber &\mathbb {E} [\sum _{k=0}^{ns-1}\mathcal{R}^{21}_{\mathfrak{T}_k: \tau_{k+1}}+\mathcal{R}^{22}_{\mathfrak{T}_k: \tau_{k+1}}]
\end{align}

Now we need to bound each term individually. First we show that $\mathbb {E} [\sum _{k=0}^{ns-1} \mathcal{R}^{12}_{\tau_k: \mathfrak{T}_k}]=0$ and $\mathbb {E} [\sum _{k=0}^{ns-1} \mathcal{R}^{13}_{\tau_k: \mathfrak{T}_k}]=0$. Lemmas provides for these claims.

\begin{lemma}
On event $\mathcal{E}_t$, 

\begin{align}
   \mathbb{E}[\sum_{k=0}^{ns-1}\mathcal{R}_{s_1:s_2}^2]=0
\end{align}
holds true.
\end{lemma}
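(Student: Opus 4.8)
The plan is to read the two regret pieces implicated here, $\mathcal{R}^{12}_{\tau_k:\mathfrak{T}_k}$ and $\mathcal{R}^{13}_{\tau_k:\mathfrak{T}_k}$, as sums of martingale differences with respect to $\{\mathcal{F}_t\}$, so that every summand has zero expectation and therefore $\mathbb{E}[\sum_{k}\mathcal{R}^{12}_{\tau_k:\mathfrak{T}_k}]=\mathbb{E}[\sum_{k}\mathcal{R}^{13}_{\tau_k:\mathfrak{T}_k}]=0$, which is the asserted identity. The ingredients I would assemble first are: (i) Assumption~\ref{Assumption 1}, which gives $\mathbb{E}[\omega_{t+1}\,|\,\mathcal{F}_t]=0$ and $\mathbb{E}[\omega_{t+1}\omega_{t+1}^\top\,|\,\mathcal{F}_t]=\sigma_\omega^2 I$; (ii) the fact that the algorithm commits to a frozen matrix $P_k:=P(\hat{\Theta}^{i_k}_{\tau_k},Q^{i_k},R^{i_k})$ throughout the $k$-th epoch, so $P_k$ is $\mathcal{F}_{\tau_k}$-measurable and, since $\tau_k=\mathfrak{T}^{es}_{i_k}$ is a stopping time, is $\mathcal{F}_t$-measurable on $\{t\ge\tau_k\}$; and (iii) $x_t$ being $\mathcal{F}_t$-measurable together with $1_{\mathcal{E}_t}$ being $\mathcal{F}_{\tau_k}$-measurable (the one delicate point, addressed below).

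First I would dispatch $\mathcal{R}^{12}$. Its generic summand is $(\omega_{t+1}^\top P_k x_t)\,1_{\mathcal{E}_t}=(x_t^\top P_k^\top\omega_{t+1})\,1_{\mathcal{E}_t}$; conditioning on $\mathcal{F}_t$ and using (ii)--(iii) to pull out the $\mathcal{F}_t$-measurable factor $x_t^\top P_k^\top 1_{\mathcal{E}_t}$ leaves $x_t^\top P_k^\top\,\mathbb{E}[\omega_{t+1}\,|\,\mathcal{F}_t]\,1_{\mathcal{E}_t}=0$ by Assumption~\ref{Assumption 1}(1.1). Summing over $t$ within the epoch and over $k=0,\dots,ns-1$, then taking expectations, gives $\mathbb{E}[\sum_k\mathcal{R}^{12}_{\tau_k:\mathfrak{T}_k}]=0$. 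For $\mathcal{R}^{13}$ I would run the same template with the second moment: $\mathbb{E}[\omega_{t+1}^\top P_k\omega_{t+1}\,|\,\mathcal{F}_t]=\operatorname{Tr}(P_k\,\mathbb{E}[\omega_{t+1}\omega_{t+1}^\top\,|\,\mathcal{F}_t])=\sigma_\omega^2\operatorname{Tr}(P_k)=\sigma_\omega^2\|P_k\|_*$, the last equality because $P_k\succeq 0$, so its trace equals its nuclear norm $\|\cdot\|_*$. Hence the centered summand $(\omega_{t+1}^\top P_k\omega_{t+1}-\sigma_\omega^2\|P_k\|_*)\,1_{\mathcal{E}_t}$ has zero $\mathcal{F}_t$-conditional mean, and the identical double summation yields $\mathbb{E}[\sum_k\mathcal{R}^{13}_{\tau_k:\mathfrak{T}_k}]=0$.

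The step I expect to be the only real obstacle is licensing the extraction of $1_{\mathcal{E}_t}$ from $\mathbb{E}[\,\cdot\,|\,\mathcal{F}_t]$, because as written $\mathcal{E}_{\mathfrak{T}^{es}_{i_k}}(\mathcal{I})$ also records the pathwise bounds $\mathbb{E}[x_s^\top x_s\,|\,\mathcal{F}_{s-1}]\le\bar{X}_k$ over the whole $k$-th epoch, which a priori are not $\mathcal{F}_{\tau_k}$-measurable. I would handle this by noting that on the confidence-set portion of the event --- $\Theta^{i_l}_*\in\mathcal{C}^{i_l}_{\mathfrak{T}^{es}_{i_l}}(\delta)$ for all $l\le k$, which genuinely is $\mathcal{F}_{\tau_k}$-measurable --- Theorem~\ref{Stability_thm17} makes every epoch policy $(\kappa_i,\gamma_i)$-strongly stabilizing, and then Lemma~\ref{thm:res_sequential_stability_resp}, eq.~(\ref{eq:overalbound}), already forces exactly those state-norm bounds; so $\mathcal{E}_{\mathfrak{T}^{es}_{i_k}}(\mathcal{I})$ coincides with an $\mathcal{F}_{\tau_k}$-measurable event and the extraction is legitimate. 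A lighter companion point is that $ns$ and the revealed sequence $\mathcal{I}$ should be treated as a stopping time, or one conditions on their realization, so the outer sum is a finite sum of exactly-zero-mean terms; the needed integrability is immediate, since $\|P_k\|\le\nu_{i_k}/\sigma_\omega^2$, $x_t$ is $L^2$-bounded on $\mathcal{E}_t$, and $\omega_{t+1}$ is sub-Gaussian by Assumption~\ref{Assumption 1}(1.3).
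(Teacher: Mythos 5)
Your proposal is correct and follows essentially the same route as the paper: freeze $P(\hat{\Theta}^{i_k}_{\tau_k},Q^{i_k},R^{i_k})$ as a measurable quantity at the start of the epoch, apply the tower property, and kill each summand with the martingale-difference property $\mathbb{E}[\omega_{t+1}\,|\,\mathcal{F}_t]=0$ from Assumption~\ref{Assumption 1}. If anything, your version is slightly more careful than the paper's (you condition on $\mathcal{F}_t$ so that $x_t$ and the indicator are directly measurable, rather than invoking independence of $x_t$ and $\omega_{t+1}$ under $\mathcal{F}_{t-1}$, and you explicitly justify extracting $1_{\mathcal{E}_t}$, which the paper leaves implicit).
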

\begin{proof}
\begin{align}
    \nonumber &\mathbb{E}[\sum_{k=0}^{ns-1}\sum_{t=s_1}^{s_2-1} \big(\omega_{t+1}^\top P(\hat{\Theta}^{i_k}_{s_1}, Q^{i_k}, R^{i_k})x_t \big)1_{\mathcal{E}_t}]=\\
   \nonumber  &\sum_{k=0}^{ns-1}\mathbb{E}[\sum_{t=s_1}^{s_2-1} \big(\omega_{t+1}^\top P(\hat{\Theta}^{i_k}_{s_1}, Q^{i_k}, R^{i_k})x_t \big)1_{\mathcal{E}_t}|\mathcal{F}_{k-1}]=\\
    \nonumber &\sum_{k=0}^{ns-1}P(\hat{\Theta}^{i_k}_{s_1}, Q^{i_k}, R^{i_k})\bullet \mathbb{E}[\sum_{t=s_1}^{s_2-1} \big(\omega_{t+1}^\top x_t \big)1_{\mathcal{E}_t}|\mathcal{F}_{k-1}]=\\
    &\sum_{k=0}^{ns-1}P(\hat{\Theta}^{i_k}_{s_1}, Q^{i_k}, R^{i_k})\bullet \sum_{t=s_1}^{s_2-1} \mathbb{E}[\big(\omega_{t+1}^\top x_t \big)1_{\mathcal{E}_t}|\mathcal{F}_{t-1}]=0
\end{align}
Note that the second equality holds because $P(\hat{\Theta}^{i_k}_{s_1}, Q^{i_k}, R^{i_k})$ is $\mathcal{F}_{k-1}$ measurable. The last equality holds because $x_t$ and $\omega_{t+1}$ are independent and $\omega_{t+1}$ is martingale difference sequence, i.e., $\mathbb[\omega_{s+1}|\mathcal{F}_s]=0$ for all $s=0,1,...,t$.
\end{proof}

\begin{lemma}
    It holds that
   \begin{align}
   \mathbb{E}[\sum_{k=0}^{ns-1}\mathcal{R}_{s_1:s_2}^3]=0
\end{align}
\end{lemma}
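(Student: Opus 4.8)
The plan is to mimic the martingale argument already used in the preceding lemma (the one for $\mathcal{R}^{12}$). The only algebraic input needed is the identity $\|P\|_*=\operatorname{Tr}(P)$, valid for any symmetric $P\succeq 0$ (since then $\sqrt{P^\top P}=P$), together with Assumption \ref{Assumption 1}, part $(1.2)$, which gives $\mathbb{E}[\omega_{t+1}\omega_{t+1}^\top\mid\mathcal{F}_t]=W=\sigma_{\omega}^2 I$. Combining the two shows that, for the within-epoch (hence fixed) matrix $P_k:=P(\hat{\Theta}^{i_k}_{\tau_k},Q^{i_k},R^{i_k})\succeq 0$, the centering constant appearing in $\mathcal{R}^{13}_{\tau_k:\mathfrak{T}_k}$ is exactly the conditional mean of the noise quadratic form:
\begin{align*}
\mathbb{E}\big[\omega_{t+1}^\top P_k\,\omega_{t+1}\mid\mathcal{F}_t\big]=\operatorname{Tr}\!\big(P_k\,\mathbb{E}[\omega_{t+1}\omega_{t+1}^\top\mid\mathcal{F}_t]\big)=\operatorname{Tr}(P_k W)=\sigma_{\omega}^2\operatorname{Tr}(P_k)=\sigma_{\omega}^2\|P_k\|_*.
\end{align*}

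First I would fix the epoch index $k$ and abbreviate $P_k$ as above, recording that $P_k$ is computed from the data collected before the epoch and is therefore $\mathcal{F}_{\tau_k}$-measurable, hence $\mathcal{F}_t$-measurable for every $t\ge\tau_k$; likewise $1_{\mathcal{E}_t}$ is $\mathcal{F}_t$-measurable, since the confidence-validity part of $\mathcal{E}_t$ depends only on data up to time $t$ and the state-norm part holds identically by Lemma \ref{thm:res_sequential_stability_resp}. Then, conditioning on $\mathcal{F}_t$ and using the tower property,
\begin{align*}
\mathbb{E}\Big[\big(\omega_{t+1}^\top P_k\,\omega_{t+1}-\sigma_{\omega}^2\|P_k\|_*\big)1_{\mathcal{E}_t}\Big]
=\mathbb{E}\Big[1_{\mathcal{E}_t}\big(\operatorname{Tr}(P_k W)-\sigma_{\omega}^2\operatorname{Tr}(P_k)\big)\Big]=0,
\end{align*}
so every summand of $\mathcal{R}^{13}_{\tau_k:\mathfrak{T}_k}$ has zero expectation. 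Summing over $t=\tau_k,\dots,\mathfrak{T}_k-1$ and then over $k=0,\dots,ns-1$ gives $\mathbb{E}\big[\sum_{k}\mathcal{R}^{13}_{\tau_k:\mathfrak{T}_k}\big]=0$, which is the claim.

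The only genuinely delicate point is the measurability bookkeeping: the epoch start $\tau_k=\mathfrak{T}^{es}_{i_k}$ is itself data-dependent (a stopping time of the filtration), whereas the upper endpoint $\mathfrak{T}_k=\mathfrak{T}^{*}_{i_{k+1}}$ is deterministic, and one must check that $P_k\,1_{\{\tau_k\le t<\mathfrak{T}_k\}}$ and $1_{\mathcal{E}_t}$ are all $\mathcal{F}_t$-measurable so that conditioning on $\mathcal{F}_t$ is legitimate. This is handled exactly as in the proof of the preceding lemma: on $\{\tau_k\le t\}$ the matrix $P_k$ is $\mathcal{F}_{\tau_k}\subseteq\mathcal{F}_t$-measurable, and $\{\tau_k\le t<\mathfrak{T}_k\}\in\mathcal{F}_t$ because $\tau_k$ is a stopping time and $\mathfrak{T}_k$ is deterministic. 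Beyond this routine adaptedness check I do not expect any further obstacle.
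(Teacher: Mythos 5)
Your proof is correct and follows essentially the same route as the paper's: condition on the filtration, use that $P(\hat{\Theta}^{i_k}_{\tau_k},Q^{i_k},R^{i_k})$ is measurable with respect to the information available at the start of the epoch, and exploit $\mathbb{E}[\omega_{t+1}\omega_{t+1}^\top\mid\mathcal{F}_t]=\sigma_{\omega}^2 I$ together with $\|P\|_*=\operatorname{Tr}(P)$ for $P\succeq 0$ so that the centering constant exactly cancels the conditional mean of the noise quadratic form. Your version is in fact slightly cleaner in its measurability bookkeeping (conditioning on $\mathcal{F}_t$ termwise rather than on $\mathcal{F}_{k-1}$ over the whole epoch), but the underlying argument is identical.
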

\begin{proof}
\begin{align}
   \nonumber &\mathbb{E}[\sum_{k=0}^{ns-1}\sum_{t=s_1}^{s_2-1} \big(\omega_{t+1}^\top P(\hat{\Theta}^{i_k}_{s_1}, Q^{i_k}, R^{i_k})\omega_{t+1}-\\
  \nonumber & \quad \sigma_{\omega}^2 \|P(\hat{\Theta}^{i_k}_{s_1}, Q^{i_k}, R^{i_k})\|_*\big)1_{\mathcal{E}_t}]=\\
    \nonumber &\sum_{k=0}^{ns-1}\mathbb{E}[\sum_{t=s_1}^{s_2-1} \big(\omega_{t+1}^\top P(\hat{\Theta}^{i_k}_{s_1}, Q^{i_k}, R^{i_k})\omega_{t+1}-\\
    \nonumber &\quad  \quad \quad \sigma_{\omega}^2 \|P(\hat{\Theta}^{i_k}_{s_1}, Q^{i_k}, R^{i_k})\|_*\big)1_{\mathcal{E}_t}| \mathcal{F}_{k-1}]=\\
     \nonumber &\sum_{k=0}^{ns-1}\bigg(\mathbb{E}[\sum_{t=s_1}^{s_2-1} \big(\omega_{t+1}^\top P(\hat{\Theta}^{i_k}_{s_1}, Q^{i_k}, R^{i_k})\omega_{t+1}\big)1_{\mathcal{E}_t}| \mathcal{F}_{k-1}]-\\
     \nonumber &\quad \quad  \quad\mathbb{E}[\sum_{t=s_1}^{s_2-1}\big(\sigma_{\omega}^2 \|P(\hat{\Theta}^{i_k}_{s_1}, Q^{i_k}, R^{i_k})\|_*\big)1_{\mathcal{E}_t}| \mathcal{F}_{k-1}]\bigg)=\\ 
      \nonumber &\sum_{k=0}^{ns-1}\bigg(P(\hat{\Theta}^{i_k}_{s_1}, Q^{i_k}, R^{i_k})\bullet\mathbb{E}[\sum_{t=s_1}^{s_2-1} \big(\omega_{t+1}^\top \omega_{t+1}\big)1_{\mathcal{E}_t}| \mathcal{F}_{k-1}]-\\
     \nonumber  &\quad \quad  \quad \mathbb{E}[\sum_{t=s_1}^{s_2-1}\big(\sigma_{\omega}^2 \|P(\hat{\Theta}^{i_k}_{s_1}, Q^{i_k}, R^{i_k})\|_*\big)1_{\mathcal{E}_t}| \mathcal{F}_{k-1}]\bigg)\leq\\ 
     &\nonumber \sum_{k=0}^{ns-1} \|P(\hat{\Theta}^{i_k}_{s_1}, Q^{i_k}, R^{i_k})\|_*\\
     &\quad\quad \sum_{t=s_1}^{s_2-1} \mathbb{E}[\big(\omega_{t+1}^\top \omega_{t+1}-\sigma_{\omega}^2I\big)1_{\mathcal{E}_t}| \mathcal{F}_{t-1}]=0
\end{align}
   where, we applied $\mathcal{F}_{k-1}$ measurablity of $P(\hat{\Theta}^{i_k}_{s_1}, Q^{i_k}, R^{i_k})$ in the last equality and in the last inequality we applied the property $A\bullet B\leq \|A\|_*\|B\|_*$.
\end{proof}

\begin{lemma}
    It holds that
    \begin{align}
    \mathbb{E}[\sum_{k=0}^{ns-1}\mathcal{R}_{s_1:s_2}^1]\leq \frac{\nu_*}{\sigma_{\omega}^2}\mathbb{E}[x_0^\top x_0]+\frac{\nu_*}{\sigma_{\omega}^2}\kappa_*^4
    \end{align}
\end{lemma}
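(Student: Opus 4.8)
The plan is to recognise $\sum_{k=0}^{ns-1}\mathcal{R}^{1}_{s_1:s_2}=\sum_{k=0}^{ns-1}\mathcal{R}^{11}_{\tau_k:\mathfrak{T}_k}$ as a sum of telescoping sums and to collapse each epoch. Since the algorithm commits to one feedback gain per epoch, the weighting matrix $P_k:=P(\hat{\Theta}^{i_k}_{\tau_k},Q^{i_k},R^{i_k})$ is fixed on $[\tau_k,\mathfrak{T}_k)$ and the indicator $1_{\mathcal{E}_t}$ is constant there, so on that event $\mathcal{R}^{11}_{\tau_k:\mathfrak{T}_k}=x_{\tau_k}^\top P_k x_{\tau_k}-x_{\mathfrak{T}_k}^\top P_k x_{\mathfrak{T}_k}\le x_{\tau_k}^\top P_k x_{\tau_k}$, using $P_k\succeq0$ and $1_{\mathcal{E}_t}\le1$. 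First I would make this collapse precise and import the two spectral facts established inside the proof of Theorem \ref{Stability_thm17}: $P_k\preceq(\nu_{i_k}/\sigma_\omega^2)I\preceq(\nu_*/\sigma_\omega^2)I$ and $\underline{\lambda}(P_k)\ge\alpha_0^{i_k}/2$. This already yields $\mathbb{E}\big[\sum_k\mathcal{R}^{11}_{\tau_k:\mathfrak{T}_k}\big]\le(\nu_*/\sigma_\omega^2)\sum_k\mathbb{E}[x_{\tau_k}^\top x_{\tau_k}]$.

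The second step is to evaluate $\sum_k\mathbb{E}[x_{\tau_k}^\top x_{\tau_k}]$ without giving up a spurious factor of $ns$. Rather than applying the uniform bound $\tilde X$ of (\ref{eq:overalbound2}) term by term, I would keep the negative endpoint $-x_{\mathfrak{T}_k}^\top P_k x_{\mathfrak{T}_k}$: the telescoping part of the companion term $\mathcal{R}^{21}$ carries the stretch $[\mathfrak{T}_k,\tau_{k+1})$, so the telescoping pieces together span the whole epoch, and after re-indexing, $\sum_k(x_{\tau_k}^\top P_k x_{\tau_k}-x_{\tau_{k+1}}^\top P_k x_{\tau_{k+1}})=x_0^\top P_0 x_0-x_{\tau_{ns}}^\top P_{ns-1}x_{\tau_{ns}}+\sum_{k=1}^{ns-1}x_{\tau_k}^\top(P_k-P_{k-1})x_{\tau_k}$. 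The leading term is at most $(\nu_*/\sigma_\omega^2)x_0^\top x_0$; each boundary difference is handled by conditioning on $\mathcal{F}_{\tau_k-1}$, under which $P_k-P_{k-1}$ is measurable with norm $\le\nu_*/\sigma_\omega^2$ while Lemma \ref{thm:res_sequential_stability_resp} (via (\ref{eq:moghayese})--(\ref{eq:overalbound})) gives $\mathbb{E}[x_{\tau_k}^\top x_{\tau_k}\mid\mathcal{F}_{\tau_k-1}]\le\bar X_k$ on the good event, with $\bar X_k$ decaying geometrically to a constant.

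The hard part, and the step I expect to be the true obstacle, is turning these ingredients into a bound that is genuinely constant in $ns$ rather than merely $O(ns)$, so that the stated right-hand side $\tfrac{\nu_*}{\sigma_\omega^2}\mathbb{E}[x_0^\top x_0]+\tfrac{\nu_*}{\sigma_\omega^2}\kappa_*^4$ comes out. For this I would exploit the strict contraction engineered by the minimum dwell time: inequality (\ref{eq:UppExV0}) together with the choice of $\tau^{ij}_{es}$ in Theorem \ref{thm:minimum_average_dwell_revised} forces the per-epoch multiplicative factor below $\bar\alpha<1$, so that the expected Lyapunov values $b_k:=\mathbb{E}[x_{\tau_k}^\top P_k x_{\tau_k}]$ --- which by the spectral ratios above sandwich $\mathbb{E}[x_{\tau_k}^\top x_{\tau_k}]$ up to a $\kappa_*^2$ factor --- obey a contractive recursion $b_{k+1}\le\bar\alpha\,b_k+c$ with $c=\mathcal{O}\big(\overline{\lambda}(P_k)\,\eta_k^{-1}\sigma_\omega^2\big)=\mathcal{O}\big(\tfrac{\nu_*}{\sigma_\omega^2}\kappa_*^2\sigma_\omega^2\big)$, where I use $\eta_k\ge\kappa_*^{-2}$ (shown in the proof of Lemma \ref{thm:res_sequential_stability_resp}). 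Summing the resulting geometric series and collecting the telescoped endpoint terms leaves only $b_0\le(\nu_*/\sigma_\omega^2)x_0^\top x_0$ plus a residual of order $(\nu_*/\sigma_\omega^2)\kappa_*^4$, which is exactly the claimed bound; the contribution of the complement $\mathcal{E}_t^{c}$ is negligible by the $1-\delta$ validity of the confidence ellipsoids, and the proof concludes by taking expectations throughout.
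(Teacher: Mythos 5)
Your opening steps coincide with the paper's: the within-epoch telescoping (equation (\ref{eq:Telesc})) reduces $\mathcal{R}^{1}_{s_1:s_2}$ to the two boundary values $x_{s_1}^\top P(\hat{\Theta}^{i_k}_{s_1}, Q^{i_k}, R^{i_k})x_{s_1}-x_{s_2}^\top P(\hat{\Theta}^{i_k}_{s_1}, Q^{i_k}, R^{i_k})x_{s_2}$, and the spectral facts $\|P\|\leq \nu_*/\sigma_{\omega}^2$ and $\underline{\lambda}(P)\geq \alpha_0^i/2$ are exactly what the paper imports. You also correctly diagnose that discarding the negative endpoints only yields an $\mathcal{O}(ns)$ bound and that the entire content of the lemma is the cancellation across epochs. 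Where your argument breaks is in how you realize that cancellation. First, extending each epoch's telescoping to $\tau_{k+1}$ on the grounds that ``$\mathcal{R}^{21}$ carries the stretch $[\mathfrak{T}_k,\tau_{k+1})$'' changes the quantity being bounded: $\mathcal{R}^{21}$ belongs to the separate bound on $R_T^2$ and cannot be reused here, and replacing $-x_{\mathfrak{T}_k}^\top P_k x_{\mathfrak{T}_k}$ by $-x_{\tau_{k+1}}^\top P_k x_{\tau_{k+1}}$ costs, in expectation, a noise-injection term of order $\overline{\lambda}(P_k)\sigma_{\omega}^2$ per time step of the gap $\tau_{k+1}-\mathfrak{T}_k$; summed over the sequence these gaps amount to $T_{es}-T_*$, which is $\mathcal{O}(|\mathcal{M}|\sqrt{ns})$ by Theorem \ref{Thm:dwellTimeError}, not a constant.

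Second, and more fundamentally, the re-indexed cross terms $\sum_{k=1}^{ns-1}x_{\tau_k}^\top(P_k-P_{k-1})x_{\tau_k}$ consist of $ns-1$ summands, each controllable only by $\pm(\nu_*/\sigma_{\omega}^2)\,\mathbb{E}[x_{\tau_k}^\top x_{\tau_k}]$; by (\ref{eq:overalbound}) the factor $\mathbb{E}[x_{\tau_k}^\top x_{\tau_k}]$ converges to a strictly positive constant rather than to zero, and $P_k-P_{k-1}$ exhibits no decay either, so the sum is generically $\Theta(ns)$. The contractive recursion $b_{k+1}\leq\bar{\alpha}b_k+c$ that you invoke bounds each $b_k$ individually, but summing the associated geometric series controls $\sup_k b_k$, not the sum of $ns$ boundary increments with mismatched weight matrices, so no $ns$-independent residual emerges from it. The paper instead pairs the positive term $\mathbb{E}[x_{\tau_k}^\top P_k x_{\tau_k}\mid\mathcal{F}_{\tau_k-1}]$ of epoch $k$ directly against the negative term $\mathbb{E}[x_{\mathfrak{T}_{k-1}}^\top P_{k-1}x_{\mathfrak{T}_{k-1}}\mid\mathcal{F}_{\mathfrak{T}_{k-1}-1}]$ of epoch $k-1$ and bounds each paired difference by the one-switch Lyapunov jump inequality (\ref{eq:mohem2})--(\ref{eq:mohem3}), which follows from the dwell-time design forcing $\rho\,(1-\eta)^{\tau}\leq 1$; after aggregation only the initial term $x_0^\top P_0 x_0\leq(\nu_*/\sigma_{\omega}^2)x_0^\top x_0$ and a residual bounded by $(\nu_*/\sigma_{\omega}^2)\kappa_*^4$ survive. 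To repair your proof you would need to reproduce that pairing (or an equivalent supermartingale argument on the Lyapunov values evaluated at the $\mathfrak{T}_k$'s) rather than bounding the cross terms one by one.
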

\begin{proof}
   We have
   \begin{align}
     \nonumber & \mathcal{R}_{s_1:s_2}^1=\sum_{t=s_1}^{s_2-1} \bigg(x_t^\top P(\hat{\Theta}^i_{s_1}, Q^i, R^i)x_t-\\
    \nonumber  &\quad \quad \quad \quad \quad \quad \quad x_{t+1}^\top P(\hat{\Theta}^i_{s_1}, Q^i, R^i)x_{t+1}\bigg) 1_{\mathcal{E}_t}\\
     \nonumber  & \quad  \quad \quad=\bigg[x_{s_1}^\top P(\hat{\Theta}^i_{s_1}, Q^i, R^i)x_{s_1}-x_{s_2}^\top P(\hat{\Theta}^i_{s_1}, Q^i, R^i)x_{s_2}+\\
      &\nonumber  \quad \sum_{t=s_1+1}^{s_2-2}  \big(x_t^\top P(\hat{\Theta}^i_{s_1}, Q^i, R^i)x_t- x_{t}^\top P(\hat{\Theta}^i_{s_1}, Q^i, R^i)x_{t}\big)\bigg] 1_{\mathcal{E}_t}\\
     & \quad \quad \quad =x_{s_1}^\top P(\hat{\Theta}^i_{s_1}, Q^i, R^i)x_{s_1}-x_{s_2}^\top P(\hat{\Theta}^i_{s_1}, Q^i, R^i)x_{s_2} \label{eq:Telesc}
   \end{align}

by telescoping. By applying (\ref{eq:Telesc}) and taking into account the definitions $\mathfrak{T}_k$ and $\tau_k$ it implies

\begin{align}
    \nonumber &\mathbb{E}[\sum_{k=0}^{ns-1}\mathcal{R}_{\tau_k:\mathfrak{T}_k}^1]=\\
     \nonumber & \sum_{k=0}^{ns-1} \bigg(-\mathbb{E}[x_{\mathfrak{T}_k}^\top P(\hat{\Theta}^{i_{k-1}}_{\tau_k}, Q^{i_{k-1}}, R^{i_{k-1}}) x_{\mathfrak{T}_k}|\mathcal{F}_{\mathfrak{T}_k-1}]\\
     &\quad \quad \quad +\mathbb{E}[x_{\tau_k}^\top P(\hat{\Theta}^{i_k}_{\tau_k}, Q^{i_k}, R^{i_k}) x_{\tau_k}|\mathcal{F}_{\tau_k-1}]\bigg)
\end{align}


\textbf{Case A: Contraction Case $\mathcal{Y}<1$}

For this case we, by applying  (\ref{eq:moghayese}) it yields 

\begin{align}
    \nonumber& \mathbb{E}[\sum_{k=0}^{ns-1}\mathcal{R}_{\tau_k:\mathfrak{T}_k}^1]\leq \sum_{k=0}^{ns-1} \mathbb{E}[x_{\tau_k}^\top P(\hat{\Theta}^{i_k}_{\tau_k}, Q^{i_k}, R^{i_k}) x_{\tau_k}|\mathcal{F}_{\tau_k-1}]\\
    & \leq \frac{\nu_*}{\sigma_{\omega}^2}  \sum_{k=0}^{ns-1} \mathbb{E}[x_{\tau_k}^\top x_{\tau_k}|\mathcal{F}_{\tau_k-1}]
\end{align}

\begin{align}
    \mathbb{E}[\sum_{k=0}^{ns-1}\mathcal{R}_{s_1:s_2}^1]
\end{align}

By (\ref{eq:UppExV}) we have 
	\begin{align}
	\nonumber &\zeta (\mathcal{C}^i_{\tau_q}(\delta), \mathcal{C}^j_{\tau_{q}}(\delta))W\geq \mathbb{E}[\mathcal{V}_j(\tau_{q+1}^+)| \mathcal{F}_{\tau_{q+1}-1}]-\\
 \nonumber &\rho (\mathcal{C}^i_{\tau_q}(\delta), \mathcal{C}^j_{\tau_{q}}(\delta))\bigg(1-\eta \big(\mathcal{C}^i_{\tau_q}(\delta)\big)\bigg)^{\tau_{q+1}-\tau_q}\mathbb{E}[\mathcal{V}_i(\tau_{q}^+)|\mathcal{F}_{\tau_{q}-1} ]
	\end{align}
Based on the dwell-time design the following inequality

\begin{align*}
    \rho (\mathcal{C}^i_{\tau_q}(\delta), \mathcal{C}^j_{\tau_{q}}(\delta))\bigg(1-\eta \big(\mathcal{C}^i_{\tau_q}(\delta)\big)\bigg)^{\tau_{q+1}-\tau_q}\leq 1
\end{align*}
holds if $\tau_{q+1}-\tau_q\geq [\tau_{dw}^{ij}]_{\tau_q}$. Therefore for any $\tau< [\tau_{dw}^{ij}]_{\tau_q}$ 

\begin{align}
 \nonumber &\zeta (\mathcal{C}^i_{\tau_q}(\delta), \mathcal{C}^j_{\tau_{q}}(\delta))W \geq \mathbb{E}[\mathcal{V}_j(\tau_{q+1}^+)| \mathcal{F}_{\tau_{q+1}-1}]-\\
 \nonumber &\rho (\mathcal{C}^i_{\tau_q}(\delta), \mathcal{C}^j_{\tau_{q}}(\delta))\bigg(1-\eta \big(\mathcal{C}^i_{\tau_q}(\delta)\big)\bigg)^{ [\tau_{dw}^{ij}]_{\tau_q}}\mathbb{E}[\mathcal{V}_i(\tau_{q}^+)|\mathcal{F}_{\tau_{q}-1} ]\geq\\
 \nonumber & \mathbb{E}[\mathcal{V}_j(\tau_{q+1}^+)| \mathcal{F}_{\tau_{q+1}-1}]-\\
 &\rho (\mathcal{C}^i_{\tau_q}(\delta), \mathcal{C}^j_{\tau_{q}}(\delta))\bigg(1-\eta \big(\mathcal{C}^i_{\tau_q}(\delta)\big)\bigg)^{\tau}\mathbb{E}[\mathcal{V}_i(\tau_{q}^+)|\mathcal{F}_{\tau_{q}-1}]  \label{eq:mohem2}
\end{align}

Also, by definition we have

\begin{align*}
   \nonumber &\mathbb{E}[\mathcal{V}_{i_{k-1}}(s_2^{k-1})|\mathcal{F}_{s_2^{k-1}-1}]:=\\
   &\mathbb{E}[P(\hat{\Theta}^{i_{k-1}}_{s^{k-1}_1}, Q^{i_{k-1}}, R^{i_{k-1}})\bullet [x_{s^{k-1}_2}^\top x_{s^{k-1}_2}]|\mathcal{F}_{s^{k-1}_2-1}]  
\end{align*}
and similarly 
\begin{align*}
   \mathbb{E}[\mathcal{V}_{i_k}(s_1^k)|\mathcal{F}_{s_1^k-1}]:=\mathbb{E}[P(\hat{\Theta}^{i_k}_{s_1}, Q^{i_k}, R^{i_k})\bullet [x_{s^k_1}^\top x_{s^k_1}]|\mathcal{F}_{s_1^k-1}].
\end{align*}

Now noting that $s_2^k-s_1^k<[\tau_{dw}^{ij}]_{\tau_q}$ (\ref{eq:mohem2}) implies
\begin{align}
   \nonumber  &-\mathbb{E}[P(\hat{\Theta}^{i_{k-1}}_{s^{k-1}_1}, Q^{i_{k-1}}, R^{i_{k-1}})\bullet [x_{s^{k-1}_2}^\top x_{s^{k-1}_2}]|\mathcal{F}_{s^{k-1}_2-1}]+\\
    &\mathbb{E}[P(\hat{\Theta}^{i_k}_{s_1}, Q^{i_k}, R^{i_k})\bullet [x_{s^k_1}^\top x_{s^k_1}]|\mathcal{F}_{s^k-1}]\leq \zeta (\mathcal{C}^i_{\tau_q}(\delta), \mathcal{C}^j_{\tau_{q}}(\delta))W \label{eq:mohem3}
\end{align}
By summing up the right hand side of (\ref{eq:mohem3}) over the switch sequence we have 

\begin{align}
    \zeta (\mathcal{C}^i_{\tau_q}(\delta), \mathcal{C}^j_{\tau_{q}}(\delta))W \leq \frac{\nu_*}{\sigma_{\omega}^2}\kappa_*^4
\end{align}
 which follows by similar analysis, carried our for Lemma \ref{thm:res_sequential_stability_resp}. Now it is straight forward to see

 \begin{align}
     \mathbb{E}[P(\hat{\Theta}^{i_0}_{s_1}, Q^{i_0}, R^{i_0})\bullet [x_{0}^\top x_{0}]|\mathcal{F}_{0^-}]\leq \frac{\nu_*}{\sigma_{\omega}^2}  \mathbb{E}[x_{0}^\top x_{0}|\mathcal{F}_{0^-}]
 \end{align}
 that completes the proof.

\end{proof}

To bound the term $\mathbb{E}[\sum_{k=0}^{ns-1}\mathcal{R}_{s_1:s_2}^4]$ we need the following lemma.
\begin{lemma} 
\label{lem:usefulforbound}
    Let the matrix $M$ be positive definite and let $z_t$ be a $\mathcal{F}_t$-measurable vector, if $\sum_{t=s_1}^{s_2}\mathbb{E}[z_t^\top M^{-1}z_t| \mathcal{F}_{t-1}]\leq 1$ then
      \begin{align}
       \nonumber &\sum_{t=s_1}^{s_2}\mathbb{E}[z_t^\top M^{-1}z_t| \mathcal{F}_{t-1}]\leq \\
       &2\log \bigg(\frac{\det \big(M+\sum_{t=s_1}^{s_2}\mathbb{E}[z_s z_s^\top | \mathcal{F}_{t-1}]\big)}{\det(M)}\bigg) 
     \end{align}
\end{lemma}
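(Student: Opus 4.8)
The plan is to reduce the statement to a scalar inequality on the eigenvalues of a single positive semidefinite matrix. First I would use that $M$ is a fixed (deterministic, and in the algorithmic use $\mathcal{F}_{\tau_k-1}$-measurable) positive definite matrix, so the conditional expectation passes through the linear map $X\mapsto \operatorname{Tr}(M^{-1}X)$: writing $z_t^\top M^{-1}z_t=\operatorname{Tr}(M^{-1}z_tz_t^\top)$ gives $\mathbb{E}[z_t^\top M^{-1}z_t\mid\mathcal{F}_{t-1}]=\operatorname{Tr}\big(M^{-1}B_t\big)$ with $B_t:=\mathbb{E}[z_tz_t^\top\mid\mathcal{F}_{t-1}]\succeq 0$. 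Setting $B:=\sum_{t=s_1}^{s_2}B_t\succeq 0$ and $S:=M^{-1/2}BM^{-1/2}\succeq 0$, the left-hand side of the claimed bound equals $\operatorname{Tr}(M^{-1}B)=\operatorname{Tr}(S)$, while the right-hand side equals $2\log\det(I+S)$, since $\det(M+B)/\det(M)=\det\big(M^{-1/2}(M+B)M^{-1/2}\big)=\det(I+S)$.

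It then remains to prove $\operatorname{Tr}(S)\le 2\log\det(I+S)$ under the hypothesis $\operatorname{Tr}(S)\le 1$. I would diagonalize $S$ with eigenvalues $\lambda_1,\dots,\lambda_n\ge 0$; the hypothesis gives $\sum_i\lambda_i=\operatorname{Tr}(S)\le 1$, hence each $\lambda_i\in[0,1]$. The key elementary fact is $x\le 2\log(1+x)$ for all $x\in[0,1]$, which follows because $g(x):=2\log(1+x)-x$ satisfies $g(0)=0$ and $g'(x)=2/(1+x)-1\ge 0$ on $[0,1]$, so $g\ge 0$ there. Applying this to each $\lambda_i$ and summing yields $\operatorname{Tr}(S)=\sum_i\lambda_i\le 2\sum_i\log(1+\lambda_i)=2\log\prod_i(1+\lambda_i)=2\log\det(I+S)$; substituting back the identities from the previous paragraph gives the claim.

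I do not expect a serious obstacle here. The two points needing a little care are (i) justifying that the conditional expectation commutes with $X\mapsto\operatorname{Tr}(M^{-1}X)$ because $M$ is non-random, and (ii) observing that the hypothesis $\sum_t\mathbb{E}[z_t^\top M^{-1}z_t\mid\mathcal{F}_{t-1}]\le 1$ is exactly $\operatorname{Tr}(S)\le 1$, which is what forces every eigenvalue of $S$ into $[0,1]$ and thereby licenses the linearized logarithm bound. No normalization of the $z_t$ is required; one only ever invokes $x\le 2\log(1+x)$ in the regime $x\le 1$.
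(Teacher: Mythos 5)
Your proof is correct, and it takes a slightly different (and in fact more robust) route than the paper's. Both arguments reduce the claim to the quantity $S:=M^{-1/2}\bigl(\sum_{t}\mathbb{E}[z_tz_t^\top\mid\mathcal{F}_{t-1}]\bigr)M^{-1/2}\succeq 0$ and to the elementary bound $x\le 2\log(1+x)$ on $[0,1]$. The paper applies that bound once to the single scalar $x=\operatorname{Tr}(S)$, after asserting the identity $\det(I+S)=1+\operatorname{Tr}(S)$; but that identity is only valid when $S$ has rank one (a single rank-one update), whereas here $S$ is a sum over $t$ of conditional second-moment matrices and is generally of higher rank. The correct relation is the inequality $\det(I+S)=\prod_i(1+\lambda_i)\ge 1+\sum_i\lambda_i=1+\operatorname{Tr}(S)$, which still points in the right direction, so the paper's conclusion survives once the equality is weakened to an inequality. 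Your eigenvalue-by-eigenvalue argument --- noting that $\operatorname{Tr}(S)\le 1$ and $S\succeq 0$ force each $\lambda_i\in[0,1]$, applying $\lambda_i\le 2\log(1+\lambda_i)$ coordinatewise, and summing --- avoids this issue entirely and is airtight as written. Your point (i) about pulling the conditional expectation through $X\mapsto\operatorname{Tr}(M^{-1}X)$ is also handled correctly; the paper treats this implicitly.
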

\begin{proof}
    Following statement holds true by determinant properties
\begin{align*}
    &\det \big(M+\sum_{t=s_1}^{s_2}\mathbb{E}[z_s z_s^\top | \mathcal{F}_{t-1}]\big)=\\
    &\det(M)det \big(I+M^{-1/2}\sum_{t=s_1}^{s_2} \mathbb{E}[z_tz_t^\top| \mathcal{F}_{t-1}]M^{-1/2}\big)=\\
    &\det(M)(1+\sum_{t=s_1}^{s_2}\mathbb{E}[z_t^\top M^{-1}z_t| \mathcal{F}_{t-1}])
\end{align*}
which results in
\begin{align*}
   &\log (1+\sum_{t=s_1}^{s_2}\mathbb{E}[z_t^\top M^{-1}z_t| \mathcal{F}_{t-1}])=\\
   &\log \bigg(\frac{\det \big(M+\sum_{t=s_1}^{s_2}\mathbb{E}[z_t z_t^\top | \mathcal{F}_{t-1}]\big)}{\det(M)}\bigg)
\end{align*}

If $\sum_{t=s_1}^{s_2}\mathbb{E}[z_t^\top M^{-1}z_t| \mathcal{F}_{t-1}]\leq 1$ then applying the inequality $x\leq 2\log (1+x)$ which holds true for $0\leq x\leq 1$, one can write:
\begin{align*}
   &\sum_{t=s_1}^{s_2}\mathbb{E}[z_t^\top M^{-1}z_t| \mathcal{F}_{t-1}]\leq \\
   &2\log \bigg(\frac{\det \big(M+\sum_{t=s_1}^{s_2}\mathbb{E}[z_t z_t^\top | \mathcal{F}_{t-1}]\big)}{\det(M)}\bigg)
\end{align*}

\end{proof}

\begin{lemma}
    On event $\mathcal{E}_t$ it holds 
 \begin{align}
   \mathbb{E}[\sum_{k=0}^{ns-1}\mathcal{R}_{s_1:s_2}^4]
\end{align}
\end{lemma}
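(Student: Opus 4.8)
The goal is to bound $\mathbb{E}\big[\sum_{k=0}^{ns-1}\mathcal{R}_{s_1:s_2}^4\big]$, where, with the shorthand $s_1=\tau_k$, $s_2=\mathfrak{T}_k$ and $i=i_k$, the per-epoch term is
\[
\mathcal{R}_{s_1:s_2}^4=\sum_{t=s_1}^{s_2-1}\frac{4\nu_{i}\mu_{i}}{\sigma_{\omega}^2}\big(z_t^\top {V^{i}_{n_i(\tau_k)}}^{-1}z_t\big)1_{\mathcal{E}_t}.
\]
The plan is to treat this as an elliptical-potential sum and invoke Lemma \ref{lem:usefulforbound}. First I would pull the scalar $4\nu_i\mu_i/\sigma_\omega^2$ out of the inner sum; because the policy and the covariance are frozen at the epoch start, this factor is measurable with respect to $\mathcal{F}_{\tau_k-1}$ and can be bounded mode-wise using $\nu_i\le\nu_*$ together with the explicit expression for $\mu_i$ in (\ref{eq:mubarAlg}).

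Second, I would verify the hypothesis of Lemma \ref{lem:usefulforbound}, namely that on the event $\mathcal{E}_t$ one has $\sum_{t=s_1}^{s_2-1}\mathbb{E}[z_t^\top {V^i_{n_i(\tau_k)}}^{-1}z_t\mid \mathcal{F}_{t-1}]\le 1$. This is precisely where the preceding machinery pays off: on $\mathcal{E}_t$ the state obeys $\mathbb{E}[x_t^\top x_t\mid\mathcal{F}_{t-1}]\le \bar{X}_k$ from (\ref{eq:overalbound}), the frozen gain satisfies $\|K(\mathcal{C}^i_{\tau_k},Q^i,R^i)\|\le\kappa_i$ by Theorem \ref{Stability_thm17}, so $\mathbb{E}[z_tz_t^\top\mid\mathcal{F}_{t-1}]\preceq (1+\kappa_i^2)\bar{X}_k\,I$, whereas the regularization choice (\ref{eq:cond1}) forces $V^i_{n_i(\tau_k)}\succeq\lambda_{i}I$ with $\lambda_i$ large; since each epoch length is at most $\tau^{dw}_{\max}$ of Lemma \ref{thm:minimum_average_dwell}, multiplying these estimates yields the required $\le 1$.

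Third, applying Lemma \ref{lem:usefulforbound} turns each epoch sum into a log-determinant increment, $\sum_{t=s_1}^{s_2-1}\mathbb{E}[z_t^\top{V^i_{n_i(\tau_k)}}^{-1}z_t\mid\mathcal{F}_{t-1}]\le 2\log\big(\det(V^i_{n_i(\mathfrak{T}_k)})/\det(V^i_{n_i(\tau_k)})\big)$, using $V^i_{n_i(\mathfrak{T}_k)}=V^i_{n_i(\tau_k)}+\sum_t z_tz_t^\top$. I would then exploit the off-policy structure of Algorithm \ref{alg:OSL3}: the covariance $V^i$ is refreshed only by data gathered while mode $i$ is active, so for two consecutive epochs of the same mode the closing determinant of one equals the opening determinant of the next. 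Summing the potential increments therefore telescopes separately within each mode and leaves $\sum_{i\in\mathcal{M}}2\log\big(\det V^i_{\mathrm{final}}/\det(\lambda_i I)\big)$, which is $\mathcal{O}(|\mathcal{M}|\cdot\mathrm{polylog}(T_{es}))$; combined with the extracted prefactor and $T_{es}=\mathcal{O}(|\mathcal{M}|\sqrt{ns})$ from Theorem \ref{Thm:dwellTimeError}, this contribution is dominated by the $\mathcal{O}(|\mathcal{M}|\sqrt{ns})$ main regret and does not worsen the final rate.

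The hard part will be reconciling the conditional-expectation form of Lemma \ref{lem:usefulforbound}, whose determinant numerator contains $\sum_t\mathbb{E}[z_tz_t^\top\mid\mathcal{F}_{t-1}]$, with the realized update $V^i_{n_i(\mathfrak{T}_k)}=V^i_{n_i(\tau_k)}+\sum_t z_tz_t^\top$ that is needed for the clean per-mode telescoping. I would resolve this by carrying an auxiliary expected-covariance potential $\bar{V}^i$ built from $\mathbb{E}[z_tz_t^\top\mid\mathcal{F}_{t-1}]$, comparing the realized and expected potentials through the state-norm control of Lemma \ref{thm:res_sequential_stability_resp} and the concavity of $\log\det$, so that the telescoping is performed on $\bar{V}^i$ and transferred back to the realized matrices at a constant multiplicative cost.
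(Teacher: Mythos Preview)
Your overall plan—pull out the $4\nu_i\mu_i/\sigma_\omega^2$ prefactor, apply the elliptical-potential Lemma~\ref{lem:usefulforbound} to turn the inner sum into a log-determinant increment, then telescope these increments mode by mode—is exactly the paper's route. Two points, however, do not go through as you describe.

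\textbf{The hypothesis of Lemma~\ref{lem:usefulforbound} does not hold from the first epoch.} You argue that ``the regularization choice~(\ref{eq:cond1}) forces $V^i_{n_i(\tau_k)}\succeq\lambda_i I$ with $\lambda_i$ large'' and that, since each epoch has length at most $\tau^{dw}_{\max}$, the required bound $\sum_{t=s_1}^{s_2-1}\mathbb{E}[z_t^\top V^{-1}z_t\mid\mathcal{F}_{t-1}]\le 1$ follows for every epoch. But $\lambda_{i_k}$ is not a fixed large constant: by~(\ref{eq:cond1})--(\ref{eq:mubarAlg}) it is proportional to $\bar{\mu}_{i_k}^{\tau_k}$, which itself scales like $\sqrt{n_{i_k}(\tau_k)}$. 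At the beginning of the run $n_{i_k}(\tau_k)$ is small, so $\lambda_{i_k}$ is small, and the $\le 1$ condition fails. The paper therefore splits the sum at a threshold $k'$ where $n_i(t')$ first reaches order $(\tau^{dw}_{\max})^2$: for $k<k'$ it bounds the term directly via $V^{-1}\preceq \lambda^{-1}I$ and the state bound (this is the $\Gamma_1$ piece, which is $\mathcal{O}(|\mathcal{M}|)$ and absorbed); only for $k\ge k'$ does it invoke Lemma~\ref{lem:usefulforbound} and telescope (the $\Gamma_2$ piece). Without this split your verification step is incorrect.

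\textbf{The prefactor is not a constant.} You propose to bound $\mu_i$ ``mode-wise using $\nu_i\le\nu_*$ together with the explicit expression for $\mu_i$ in~(\ref{eq:mubarAlg})''. But~(\ref{eq:mubarAlg}) shows $\bar\mu_i^{\tau_k}$ grows like $\sqrt{n_i(\tau_k)}$; the paper therefore replaces it by the final-epoch value $\mathbb{E}[\mu_*^{ns^-}]=\mathcal{O}(\sqrt{T})$ via~(\ref{eq:boundExpMu}), and it is only because the telescoped log-determinant sum is $\mathcal{O}(|\mathcal{M}|\log T)$ that the product $\Gamma_2=\mathcal{O}(|\mathcal{M}|^{3/2}\,ns^{1/4})$ remains dominated by $|\mathcal{M}|\sqrt{ns}$. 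Your accounting treats the prefactor as bounded and so understates the rate of this term; the conclusion you reach is still correct, but the intermediate bookkeeping needs the growing $\mu_i$ tracked explicitly.

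Your final paragraph on expected versus realized covariances is in line with what the paper does: it carries an expected-covariance potential $\mathbb{E}[V^i_{n_i(\cdot)}]$ and telescopes on that object directly, so no separate transfer argument is needed.
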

\begin{proof}
Recall the definition of $\bar{\mu}^{\tau_q}_i$ given by (\ref{eq:mofid}) where $\tau_q$ denotes switch time. For the sake of simplicity in notation, we slightly abuse the notation and let the parameter to be defined $\bar{\mu}^{k}_i$ where $k$ is index of the $k$-th switch. Since the algorithm is off-policy and the control is designed by confidence ellipsoid, constructed by data gathered before starting current epoch, then $\mathbb{E}[\bar{\mu}_i^{k^-}|\mathcal{F}_{k^{-}}]=\mathbb{E}[\bar{\mu}_i^{k^-}|\mathcal{F}_{k-1}]=\bar{\mu}_i^{k^{-}}$. The second equality holds because the last update-time of $\bar{\mu}^k_i$ is at least two epochs back. Now, one can write
\begin{align}
  \nonumber & \mathbb{E}[\sum_{k=0}^{ns-1}\mathcal{R}_{s_1:s_2}^4]=\\
  \nonumber &\mathbb{E}[\sum_{k=0}^{ns-1}\sum_{t=s_1}^{s_2-1}\frac{4\nu_{i_k}\bar{\mu}^{k^-}_{i_k}}{\sigma_{\omega}^2}\big(z_t^\top {V^{i_k}}_{n_{i_k}(s_1)}^{-1}z_t\big)1_{\mathcal{E}_t}]=\\
   \nonumber & \mathbb{E}[\mathbb{E}[\sum_{k=0}^{ns-1}\sum_{t=s_1}^{s_2-1}\frac{4\nu_{i_k}\bar{\mu}^{k^-}_{i_k}}{\sigma_{\omega}^2}\big(z_t^\top {V^{i_k}}_{n_{i_k}(s_1)}^{-1}z_t\big)1_{\mathcal{E}_t}]|\bar{\mu}_i^{k^-}]=\\
   \nonumber & \mathbb{E}[\sum_{k=0}^{ns-1}\frac{4\nu_{i_k}\mu^{k^-}_{i_k}}{\sigma_{\omega}^2}{V^{i_k}}_{n_{i_k}(s_1)}^{-1}]\bullet\\
   \nonumber &\quad \quad\quad \mathbb{E}[\sum_{t=s_1}^{s_2-1}\big(z_t^\top z_t\big)1_{\mathcal{E}_t}| \mathcal{F}_{k-1}]|\bar{\mu}_{i_k}^{k^-}]=\\
   \nonumber & \sum_{k=0}^{ns-1}\frac{4\nu_{i_k}\mathbb{E}[\bar{\mu}^{k^-}_{i_k}]}{\sigma_{\omega}^2}\mathbb{E}[{V^{i_k}}_{n_{i_k}(s_1)}^{-1}]\bullet\mathbb{E}[\sum_{t=s_1}^{s_2-1}\big(z_t^\top z_t\big)1_{\mathcal{E}_t}| \mathcal{F}_{t-1}]=\\
  & \sum_{k=0}^{ns-1}\frac{4\nu_{i_k}\mathbb{E}[\bar{\mu}^{k^-}_{i_k}]}{\sigma_{\omega}^2}\sum_{t=s_1}^{s_2-1}\mathbb{E}\big[z_t^\top \mathbb{E}[{V^{i_k}}_{n_{i_k}(s_1)}^{-1}]z_t1_{\mathcal{E}_t}| \mathcal{F}_{t-1}\big] \label{eq:lastineq}
\end{align}
By  ${V^{i_k}}_{n_{i_k}(s_1)}\succsim \lambda^k_{i_k} I$, we have
\begin{align*}
    &\sum_{t=s_1}^{s_2-1}\mathbb{E}\big[z_t^\top \mathbb{E}[{V^{i_k}}_{n_{i_k}(s_1)}^{-1}]z_t1_{\mathcal{E}_t}| \mathcal{F}_{t-1}\big]\leq\\
    &\frac{1}{\lambda^k_{i_k}} \sum_{t=s_1}^{s_2-1}\mathbb{E}\big[z_t^\top z_t1_{\mathcal{E}_t}| \mathcal{F}_{t-1}\big].
\end{align*}
Considering the fact that $\mathbb{E}[z_t^\top z_t\ \mathcal{F}_{t-1}]$ is bounded for $s_1\leq t\leq s_2-1$ then  $\frac{1}{\lambda^k_{i_k}} \sum_{t=s_1}^{s_2-1}\mathbb{E}\big[z_t^\top z_t1_{\mathcal{E}_t}| \mathcal{F}_{t-1}\big]$ is $\mathcal{O}(s_2-s_1)$. However, since $\lambda^k_{i_k}$ is of $\mathcal{O}(\sqrt{n_{i_k}(s_1)})$ by definition, with long enough switch sequence there exists $k^{\prime}-$switch with corresponding time time $t^{\prime}$ such that $\mathcal{O}\big(n_i(t^{\prime})\big)=\mathcal{O}\big((\tau_{\max}^{d\omega})^2\big)$ and for $k\geq k^{\prime}$ the following inequality holds

\begin{align}
   \frac{1}{\lambda^k_{i_k}} \sum_{t=s_1}^{s_2-1}\mathbb{E}\big[z_t^\top z_t1_{\mathcal{E}_t}| \mathcal{F}_{t-1}\big]\leq 1 \label{eq:cond2}
\end{align}
where $s_1\geq t^{\prime}$. The inequality (\ref{eq:cond2}) in deed fulfills the condition for Lemma \ref{lem:usefulforbound}. Therefore for all modes $i\in \mathcal{I}$ and $s_1\geq t^{\prime}$ 

\begin{align}
    \nonumber &\sum_{t=s_1}^{s_2-1}\mathbb{E}\big[z_t^\top \mathbb{E}[{V^i}_{n_i(s_1)}^{-1}]z_t1_{\mathcal{E}_t}| \mathcal{F}_{t-1}\big]\leq \\
    &2\log \bigg(\frac{\det \big(\mathbb{E}[{V^i}_{n_i(s_2)}]\big)}{\det\big(\mathbb{E}[{V^i}_{n_i(s_1)}]\big)}\bigg) \label{eq:upLog1}
\end{align}

holds where
\begin{align*}
    \mathbb{E}[{V^i}_{n_i(s_2)}]=\mathbb{E}[{V^i}_{n_i(s_1)}]+\sum_{t=s_1}^{s_2-1}\mathbb{E}\big[z_t^\top z_t1_{\mathcal{E}_t}| \mathcal{F}_{t-1}\big].
\end{align*}
 The inequality (\ref{eq:upLog1}) implies that for $t>t^{\prime}$ the left hand is not $\mathcal{O}(s_2-s_1)$ anymore which assures achieving better regret. 

Now, we proceed with decomposing (\ref{eq:lastineq}) as follows:

\begin{align}
   \nonumber & \sum_{k=0}^{ns-1}\frac{4\nu_{i_k}\mathbb{E}[\bar{\mu}^{k^-}_{i_k}]}{\sigma_{\omega}^2}\sum_{t=s_1}^{s_2-1}\mathbb{E}\big[z_t^\top \mathbb{E}[{V^{i_k}}_{n_{i_k}(s_1)}^{-1}]z_t1_{\mathcal{E}_t}| \mathcal{F}_{t-1}\big]=\\
\nonumber &\overbrace{\sum_{k=0}^{m^{\prime}-1}\frac{4\nu_{i_k}\mathbb{E}[\bar{\mu}^{k^-}_{i_k}]}{\sigma_{\omega}^2}\sum_{t=s_1}^{s_2-1}\mathbb{E}\big[z_t^\top \mathbb{E}[{V^{i_k}}_{n_{i_k}(s_1)}^{-1}]z_t1_{\mathcal{E}_t}| \mathcal{F}_{t-1}\big]}^{\Gamma_1}+\\
   & \underbrace{\sum_{k=m^{\prime}}^{ns-1}\frac{4\nu_{i_k}\mathbb{E}[\bar{\mu}^{k^-}_{i_k}]}{\sigma_{\omega}^2}\sum_{t=s_1}^{s_2-1}\mathbb{E}\big[z_t^\top \mathbb{E}[{V^{i_k}}_{n_{i_k}(s_1)}^{-1}]z_t1_{\mathcal{E}_t}| \mathcal{F}_{t-1}\big]}_{\Gamma_2}
\end{align}

We upper bound the terms individually. For the term $\Gamma_1$ we have 

\begin{align*}
    &\Gamma_1\leq \sum_{k=0}^{k^{\prime}-1}\alpha_0^{i_k}\sum_{t=s_1}^{s_2-1}\mathbb{E}\big[z_t^\top z_t1_{\mathcal{E}_t}| \mathcal{F}_{t-1}\big]\leq \\
& \sum_{k=0}^{k^{\prime}-1}2\alpha_0^{i_k}\kappa^2_{i_k}\sum_{t=s_1}^{s_2-1}\mathbb{E}\big[x_t^\top x_t1_{\mathcal{E}_t}| \mathcal{F}_{t-1}\big]\leq \\
&\sum_{k=0}^{k^{\prime}-1}2\alpha_0^{i_k}\kappa^2_{i_k}\alpha (s_2-s_1)= \\
    & \sum_{i\in \mathcal{I}}2\alpha_0^{i} \kappa^2_i Xn_i(t^{\prime})\leq 2|\mathcal{M}| X \max_{i\in \mathcal{M}}\alpha_0^{i} \kappa^2_i n_i(t^{\prime})
\end{align*}

where in the first inequality we applied ${V^{i_k}}_{n_{i_k}(s_1)}^{-1}\leq 1/\lambda_{i^k}^{n_{i_k}}I$ and the definition of $\lambda^{i^k}_{n_{i_k}}$. In the second inequality, we used the fact that on the event $\mathcal{E}_t$, $z_t=\begin{pmatrix} I \\ K(\hat{\Theta}_{\tau_q}^i, Q^i, R^i) \end{pmatrix}x_t$ and  $\|\begin{pmatrix} I \\ K(\hat{\Theta}_{\tau_q}^i, Q^i, R^i) \end{pmatrix}\|^2\leq2{\kappa_i}^2$ since $\kappa_i\geq 1$, i.e., 

\begin{align}
  \mathbb{E}\big[z_t^\top z_t1_{\mathcal{E}_t}| \mathcal{F}_{t-1}\big]\leq 2\kappa^2_i \mathbb{E}\big[x_t^\top x_t| \mathcal{F}_{t-1}\big]
\end{align}

In the third inequality we applied Lemma \ref{thm:res_sequential_stability_resp} with $\alpha$ defined as follows:

 \begin{align}
	\alpha:=\kappa_*^4E[x_0^\top x_0|\mathcal{F}_{0^-} ]+\kappa_*^6W.
\end{align}

The equality holds because  
\begin{align}
 \sum_{k=0}^{k^{\prime}-1} (s_2-s_1)=\sum_{i\in \mathcal{I}} n_i(t^{\prime})
\end{align}
by definition.

Now we proceed to upper-bound $\Gamma_2$. For $k\geq k^{\prime}$ thanks to the fulfilment of (\ref{eq:cond2}), which is condition of Lemma \ref{lem:usefulforbound}, it yields

\begin{align}
 \nonumber  \Gamma_2&\leq \sum_{k=k^{\prime}}^{ns-1}\frac{8\nu_{i_k}\mathbb{E}[\mu^{k^-}_{i_k}]}{\sigma_{\omega}^2}\log \bigg(\frac{\det \big(\mathbb{E}[{V^{i_k}}_{n_{i_k}(s_2)}]\big)}{\det\big(\mathbb{E}[{V^{i_k}}_{n_{i_k}(s_1)}]\big)}\bigg) \\
  &\leq\frac{8\nu_{*}\mathbb{E}[\mu^{ns^-}_{*}]}{\sigma_{\omega}^2} \sum_{k=k^{\prime}}^{ns-1}\log \bigg(\frac{\det \big(\mathbb{E}[{V^{i_k}}_{n_{i_k}(s_2)}]\big)}{\det\big(\mathbb{E}[{V^{i_k}}_{n_{i_k}(s_1)}]\big)}\bigg)
\end{align}
where $\mathbb{E}[\mu^{ns^-}_{*}]=\max_{i\in \mathcal{I}}\mathbb{E}[\mu^{ns^-}_{i}]$, which is given by (\ref{eq:boundExpMu}).

Moreover,

\begin{align}
 \nonumber  & \sum_{k=k^{\prime}}^{ns-1}\log \bigg(\frac{\det \big(\mathbb{E}[{V^{i_k}}_{n_{i_k}(s_2)}]\big)}{\det\big(\mathbb{E}[{V^{i_k}}_{n_{i_k}(s_1)}]\big)}\bigg)\leq \\
   &\sum_{i\in \mathcal{M}} \log \bigg(\frac{\det \big(\mathbb{E}[{V^{i}}_{n_{i}(T)}]\big)}{\det\big(\mathbb{E}[{V^{i}}_{n_{i}(t_{m^{\prime}})}]\big)}\bigg)
\end{align}
in which we opened up the summation and re-arranged the terms based on the modes and afterwards, we applied the property $\log \frac{a}{b}+\log \frac{b}{c}=\log \frac{a}{c}$ for each mode's terms.
Conclusively, we have 

\begin{align}
  \Gamma_2\leq \frac{8\nu_{*}\mathbb{E}[\mu^{ns^-}_{*}]}{\sigma_{\omega}^2}\sum_{i\in \mathcal{M}} \log \bigg(\frac{\det \big(\mathbb{E}[{V^{i}}_{n_{i}(T)}]\big)}{\det\big(\mathbb{E}[{V^{i}}_{n_{i}(t_{m^{\prime}})}]\big)}\bigg) \label{eq:eska}
\end{align}
where $T$ is minimum expected time to accomplish mission. Note that in (\ref{eq:eska}), the summation of logarithmic terms are order of $\mathcal{O}(|\mathcal{M}|\log (|\mathcal{M}|\sqrt{ns}))$ given the fact that $T$ is of $\mathcal{O}(|\mathcal{M}|\sqrt{ns})$ by Theorem \ref{Thm:dwellTimeError}.

Furthermore, for any $i\in \mathcal{M}$ one can write:

\begin{align}
\nonumber &\mathbb{E}[\mu^{ns}_{i}]=\\
\nonumber &\quad \quad \quad \mathbb{E}\big[r^i_{n_i(T)}\bigg(1+2\vartheta_i \big(n_i(T)+\|\sum_{k=1}^{n_i(T)} z^i_k {z^i_k}^\top\|\big)^{0.5}\bigg)\big]\leq\\
&\quad \quad \quad \bar{R}_i\big(1+2\nu_i(1+2\kappa_i^2\alpha)^{0.5}\sqrt{n_i(T)}\big)\label{eq:boundExpMu}
\end{align}

By definition $\mathbb{E}[\mu^{ns}_{*}]=\max_{i\in |\mathcal{M}|}\mathbb{E}[\mu^{ns}_{i}]$ where considering the definition of $r^i_{n_i(T)}$, $\bar{R}_i$ is $\mathcal{O}(log |\mathcal{M}|\sqrt{ns})$. $n_i(T)\leq T$ and by Theorem \ref{Thm:dwellTimeError}, $T$ is of $\mathcal{O}(|\mathcal{M}|\sqrt{ns})$.  
This concludes that $\Gamma_2$ is of $\mathcal{O}\big(|\mathcal{M}|^{3/2}{ns}^{\frac{1}{4}}\big)$ where $\mathcal{O}$ absorbs logarithmic orders of $ns$ and $|\mathcal{M}|$.

\end{proof}
It is noteworthy that the regret term $\Gamma_1$ is imposed because of applying off-policy type of OFU-based strategy instead of on-policy one.

\subsection{Upper-bounding $R_T^2$}

Now, need to upper-bound the term $R_T^2$ which is enforced due to the learning error for estimating the mode-dependent minimum dwell-time. Before hand we need the following ingredients.

\begin{align}
  \nonumber  R_T^2=&\mathbb {E} [\sum _{k=0}^{ns-1}\sum _{t=\mathfrak{T}_k}^{\bar{\mathfrak{T}}_{k+1}-1} c_{t}^{\sigma(t)}]=\\
  \nonumber &\mathbb {E} [\sum _{k=0}^{ns-1}\sum _{t=\mathfrak{T}_k}^{\bar{\mathfrak{T}}_{k+1}-1} c_{t}^{i_k}]= \\
   \nonumber  &\mathbb {E} \bigg[\sum _{k=0}^{ns-1}\sum _{t=\mathfrak{T}_k}^{\bar{\mathfrak{T}}_{k+1}-1} x_t^\top \bigg(Q^{i_k}+\\
    &\quad \quad K^\top (\hat{\Theta}^{i_k}_{\bar{\mathfrak{T}}_{k}}, Q^{i_k}, R^{i_k})R^{i_k} K(\hat{\Theta}^{i_k}_{\bar{\mathfrak{T}}_{k}}, Q^{i_k}, R^{i_k})\bigg)x_t\bigg] \label{eq:secReg}
\end{align}

and from (\ref{eq:Lemma8_res22}) we have the following inequality

\begin{align}
  \nonumber  &Q^{i_k}+K^\top (\hat{\Theta}^{i_k}_{\bar{\mathfrak{T}}_{k}}, Q^{i_k}, R^{i_k})R^{i_k} K(\hat{\Theta}^{i_k}_{\bar{\mathfrak{T}}_{k}}, Q^{i_k}, R^{i_k}) \preceq \\
   \nonumber &P(\hat{\Theta}^{i_k}_{\bar{\mathfrak{T}}_{k}}, Q^{i_k}, R^{i_k})+\\
   \nonumber &2\mu^{\bar{\mathfrak{T}}_{k}}_{i_k}\|P(\hat{\Theta}^{i_k}_{\bar{\mathfrak{T}}_{k}}, Q^{i_k}, R^{i_k})\|_*\\
   \nonumber &  \begin{pmatrix} I \\ K(\hat{\Theta}^{i_k}_{\bar{\mathfrak{T}}_{k}}, Q^{i_k}, R^{i_k}) \end{pmatrix}{V^{i_k}}^{-1}_{n_{i_k}(\bar{\mathfrak{T}}_{k})}\begin{pmatrix} I \\ K(\hat{\Theta}^{i_k}_{\bar{\mathfrak{T}}_{k}}, Q^{i_k}, R^{i_k}) \end{pmatrix}^\top\preceq\\
  \nonumber  &\frac{\nu_{i^k}}{\sigma_{\omega}^2}I+\frac{\alpha_0^{i_k}}{2}\big(I+K^\top(\hat{\Theta}^{i_k}_{\bar{\mathfrak{T}}_{k}}, Q^{i_k}, R^{i_k})K(\hat{\Theta}^{i_k}_{\bar{\mathfrak{T}}_{k}}, Q^{i_k}, R^{i_k})\big)\preceq\\
   &\big(\frac{\nu_{i^k}}{\sigma_{\omega}^2}+\frac{\alpha_0^{i_k}}{2}(1+\kappa^2_{i^k})\big)I \label{eq:costupperBou}
\end{align}
where in the second inequality we used $\|P(\hat{\Theta}^{i_k}_{\bar{\mathfrak{T}}_{k}}, Q^{i_k}, R^{i_k})\|\leq \nu_{i^k}/\sigma^2_{\omega}$ and applied ${V^{i_k}}_{n_{i_k}(s_1)}^{-1}\leq 1/\lambda_{i^k}^{n_{i_k}}I$ and used the definition of $\lambda_{i^k}^{n_{i_k}}$ .

Applying the obtained bound (\ref{eq:costupperBou}) on (\ref{eq:secReg}) yields

\begin{align}
   \nonumber  R_T^2&\leq \big(\frac{\nu_{*}}{\sigma_{\omega}^2}+\frac{\alpha_0^{*}}{2}(1+\kappa^2_{*})\big)\alpha  \mathbb {E} \sum _{k=0}^{ns-1} (\tau_{md}^k-\tau_*^k)\\
   \nonumber &\leq \big(\frac{\nu_{*}}{\sigma_{\omega}^2}+\frac{\alpha_0^{*}}{2}(1+\kappa^2_{*})\big)\alpha T
\end{align}
where $T$ is an $\mathcal{O}(|\mathcal{M}|\sqrt{ns})$ term by Theorem \ref{Thm:dwellTimeError} which implies the term $\mathcal{R}_T^2=\mathcal{O}(|\mathcal{M}|\sqrt{ns})$.

 
\vspace{-17cm}

\end{document}